\tikzstyle{labeled}=[fill=white,draw=black,circle,inner sep=0, minimum size=5.6mm, thin]
\tikzstyle{simple}=[circle, draw=black, fill=white, inner sep=0, minimum size=2.5mm, thin]
\tikzstyle{edge-label}=[circle, fill=white, inner sep=.16mm]
\newcommand{\doneTodo}[1]{}
\newcommand{\laterTodo}[1]{}
\newtheorem{problem}[theorem]{Problem}
\newtheorem{observation}[theorem]{Observation}
\tikzstyle{every picture} = [>=latex]
\def\ca#1{{\cal#1}}
\newcommand{\dfnc}[3]{#1:#2\rightarrow #3} 
\newcommand{\DEFI}[1]{\emph{#1}}
\newcommand{\crn}{\operatorname{cr}}
\newcommand{\lset}[2]{\{#1,\ldots,#2\}}
\newcommand{\dset}[2]{\{#1\;|\;#2\}}
\newcommand{\ccgn}[2]{\ensuremath{G_{#1}^{(#2)}}}
\definecolor{darkgreen}{rgb}{0,0.5,0}
\title{Bounded degree conjecture holds precisely for $c$-crossing-critical graphs with $c\le 12$}
\titlerunning{Bounded maximum degree in crossing-critical graphs}
\author{Drago Bokal}{Department of Mathematics and Computer Science, University of Maribor, Maribor, Slovenia}{drago.bokal@um.si}{0000-0003-1196-5476}{ARRS Project J1-8130, ARRS Programme 	P1-0297}
\author{Zden\v{e}k Dvo\v{r}\'ak}{Computer Science Institute, Charles University, Prague, Czech Republic}{rakdver@iuuk.mff.cuni.cz}{0000-0002-8308-9746}{Supported by project 17-04611S (Ramsey-like aspects of graph coloring) of Czech Science Foundation.}
\author{Petr Hlin\v{e}n\'{y}}{Faculty of Informatics, Masaryk University, Brno, Czech Republic}
	{hlineny@fi.muni.cz}{0000-0003-2125-1514}
	{Supported by the Czech Science Foundation, projects no.~17-00837S and~20-04567S.}
\author{Jes\'{u}s Lea\~{n}os}{Academic Unit of Mathematics, Autonomous University of Zacatecas, Zacatecas,  Mexico}{jleanos@matematicas.reduaz.mx}{0000-0002-9068-8536}{Partially supported by PFCE-UAZ 2018-2019 grant.}
\author{Bojan Mohar}
    {Department of Mathematics, Simon Fraser University, 
        Burnaby BC, Canada and\\ Institute of Mathematics, Physics, and Mechanics, Ljubljana, Slovenia}{mohar@sfu.ca}{0000-0002-7408-6148}{B.M.~was supported in part by the NSERC Discovery Grant R611450
 	(Canada), by the Canada Research Chairs program,
	and by the Research Project J1-8130 of ARRS (Slovenia).}
\author{Tilo Wiedera}{Theoretical Computer Science, Osnabrück University, Germany}{tilo.wiedera@uos.de}{0000-0002-5923-4114}{Supported by the German Research Foundation (DFG) project CH 897/2-2.}
\authorrunning{Bokal, Dvo\v{r}\'ak, Hlin\v{e}n\'{y}, Lea\~{n}os, Mohar, Wiedera}
\keywords{graph drawing; crossing number; crossing-critical; zip product}
\begin{document}
\maketitle

\begin{abstract}
We study $c$-crossing-critical graphs, which are the minimal graphs that
require at least $c$ edge-crossings when drawn in the plane.
For every fixed pair of integers with $c\ge 13$ and $d\ge 1$, 
we give first explicit constructions of $c$-crossing-critical graphs 
containing arbitrarily many vertices of degree greater than $d$. 
We also show that such unbounded degree constructions do not exist for $c\le 12$,
precisely, that there exists a constant $D$ such that every
$c$-crossing-critical graph with $c\le 12$ has maximum degree at most~$D$.
Hence, the bounded maximum degree conjecture 
of $c$-crossing-critical graphs, which was generally disproved in 2010
by Dvo\v{r}\'ak and Mohar (without an explicit construction), holds true,
surprisingly, exactly for the values $c\le 12.$
\end{abstract}

\section{Introduction}
\doneTodo{Referee 3: 
LL 12,14:  The meaning of "cf." is "confer" (compare), usually used to
  highlight two *opposite* ideas.  It should not be confused with "e.g."
  (see for example). \\
Response: Replaced ``cf." with ``see´´. (Drago)}
\doneTodo{Referee 4: 
line 14: Please, explain the term "sparse graph".\\
Response: PH done more carefully (not to confuse with "sparsity").}
Minimizing the number of edge-crossings in a graph drawing in the plane
(the \DEFI{crossing number} of the graph, see~Definition~\ref{def:crossingn}) 
is considered one of the most important attributes of a ``nice drawing'' of a graph.
In the case of classes of dense graphs (those having superlinear number of edges in
terms of the number vertices), 
the crossing number is necessarily very high -- see the famous 
Crossing Lemma \cite{ajtaiChvatalNewbornSzemeredi82,leighton83}.
However, within sparse graph classes (those having only linear number of edges), 
we may have planar graphs at one end and graphs with up
to quadratic crossing number at the other end.
In this situation, it is natural to study the ``minimal obstructions'' for
low crossing number, with the following definition.

Let $c$ be a positive integer. A graph $G$ is called \DEFI{$c$-crossing-critical} 
if the crossing number of $G$ is at least $c$, but every proper subgraph 
has crossing number smaller than $c$. We say that $G$ is \DEFI{crossing-critical}
if it is $c$-crossing-critical for some positive integer $c$. 

Since any non-planar graph contains at least one crossing-critical subgraph,
the understanding of the properties of the crossing-critical graphs is a central
part of the theory of crossing numbers. 

In 1984, \v{S}ir\'a\v{n} gave the earliest construction of 
nonsimple $c$-critical-graphs for every fixed value of $c\ge 2$ \cite{siran84}.
Three years later, Kochol \cite{kochol87} gave an infinite family of
c-crossing-critical, simple, 3-connected graphs, for every $c\ge 2$. Another early 
result on $c$-crossing-critical graphs was reported in the influential paper of 
Richter and Thomassen \cite{richterThomassen93}, who proved that $c$-crossing-critical
graphs have bounded crossing number in terms of $c$.
\doneTodo{Referee 4: 
line 28: ... have bounded crossing number. In terms of what? \\
Response: in terms of $c$. Done. (Drago)}
They also initiated research on degrees in $c$-crossing-critical 
graphs by showing that, if there exists an infinite family of 
$r$-regular, $c$-crossing-critical graphs for fixed $c$, 
then $r\in\{4,5\}$. Of these, $4$-regular $3$-critical graphs 
were constructed by Pinontoan and Richter \cite{pinontoanRichter03},
and $4$-regular $c$-critical graphs are known for every $c\ge 3$, 
$c\ne 4$ \cite{bokalBracicDernarHlineny19}.
Salazar observed that the arguments of Richter and Thomassen
could be applied to average degree as well, showing that an
infinite family of $c$-crossing-critical graphs of average degree $d$ 
can exist only for $d\in (3,6]$, and established their existence
for $d\in [4,6)$. Nonexistence of such families with $d=6$ was established much later by Hern\'andez, Salazar,
and Thomas \cite{velezSalazarThomas12}, who proved that, for each fixed $c$,
there are only finitely many $c$-crossing-critical simple graphs of average degree 
at least six. The existence of such families with $d\in [\tfrac{7}{2},4]$ 
was established by Pinontoan and Richter \cite{pinontoanRichter03}, whereas
the whole possible interval was covered by Bokal \cite{bokal10}, who showed
that, for sufficiently large crossing number, both the crossing number $c$ 
and the average degree $d\in (3,6)$ could be prescribed for an infinite family of 
$c$-crossing critical graphs of average degree $d$. 

In 2003, Richter conjectured that, for every positive integer $c$, there 
exists an integer $D(c)$ such that every $c$-crossing-critical graph  
has maximum degree fewer than $D(c)$~\cite{moharNowakowskiWest07}.
Reflecting upon this conjecture, Bokal in 2007 observed that the known 
$3$-connected crossing-critical graphs 
of that time only had degrees $3,4,6$, and asked for existence of such graphs
with arbitrary other degrees, possibly appearing arbitrarily many times.
Hlin\v{e}n\'y augmented his construction of $c$-crossing-critical graphs with
pathwidth linear in $c$ to show the existence of $c$-crossing-critical graphs with
\doneTodo{Referee 4: 
line 47: What is "high in c"?\\
Response: PH done.}
arbitrarily many vertices of every set of even degrees. Only a recent paper 
by Bokal, Bračič, Der\v{n}\'ar, and Hlin\v{e}n\'{y}~\cite{bokalBracicDernarHlineny19} provided the corresponding
result for odd degrees, showing in addition that, for sufficiently high $c$,
all the three parameters -- crossing number~$c$, rational average degree $d$, and
the set of degrees $D\subseteq \mathbb N\setminus \{1,2\}$
that appear arbitrarily often in the graphs of the infinite family -- 
can be prescribed. They also analysed the interplay of these
parameters for $2$-crossing-critical graphs that were recently 
completely characterized by Bokal, Oporowski, Richter, and Salazar \cite{bokalOporowskiRichter16}.

Despite all this research generating considerable understanding of the 
behavior of degrees in known crossing-critical graphs as well as extending
the construction methods of such graphs, the original conjecture of Richter 
was not directly addressed in the previous works. It was, however, disproved by 
Dvo\v{r}\'ak and Mohar \cite{dvorakMohar10}, who showed that, for each integer 
$c\geq 171$, there exist $c$-crossing-critical graphs of arbitrarily large 
maximum degree. 
Their counterexamples, however,
were not constructive, as they only exhibited, for every such $c$, a graph 
containing sufficiently many  critical edges incident with a fixed vertex and 
argued that those edges belong to every $c$-crossing-critical subgraph of the 
exhibited graph. On the other hand, as a consequence of \cite{bokalOporowskiRichter16}
it follows that, except for possibly some small examples, the maximum degree in a 
large $2$-crossing-critical graph is at most 6, implying that Richter's conjecture holds
for $c=2$. In view of these results, and the fact that $1$-crossing-critical graphs
(subdivisions of $K_5$ and $K_{3,3}$) have maximum degree at most $4$, this leaves
Richter's conjecture unresolved for each $c\in\{3,4,\ldots,170\}.$ 

The richness of $c$-crossing-critical graphs is restricted for every $c$
by the result of Hlin\v{e}n\'{y} that $c$-crossing-critical 
graphs have bounded path-width \cite{hlineny03}; 
this structural result is complemented by a recent classification of
all large $c$-crossing-critical graphs for arbitrary $c$ by 
Dvo\v{r}\'ak, Hlin\v{e}n\'{y}, and Mohar \cite{dvorakHlinenyMohar18}.
We use these
results in Section \ref{sc:boundedMD} to show that Richter's conjecture holds for
$c\le 12$. The result is stated below. It is both precise and surprising and shows how unpredictable 
are even the most fundamental questions about crossing numbers.
\doneTodo{Referee 2,3: 
L72:  “results ARE stated”.\\
Response: Singular noun is used rather than plural verb, as there is only one result in question. (Drago)}

\begin{theorem}
\label{th:boundedMD}
There exists an integer $D$ such that, for every positive integer $c\le 12$,
every $c$-crossing-critical graph has maximum degree at most $D$.
\end{theorem}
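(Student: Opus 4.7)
The plan rests on the two structural ingredients cited just before the theorem: Hlin\v{e}n\'{y}'s bound on the path-width of $c$-crossing-critical graphs, and the Dvo\v{r}\'ak--Hlin\v{e}n\'{y}--Mohar (DHM) classification of large $c$-crossing-critical graphs for arbitrary fixed $c$. The argument splits naturally along the distinction between ``small'' and ``large'' critical graphs.

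First I would dispose of the small case. For each fixed $c\le 12$, the $c$-crossing-critical graphs of bounded order form a finite family, so their maximum degrees are trivially bounded in terms of $c$. Combined with the already-known cases $c=1$ (subdivisions of $K_5$ and $K_{3,3}$) and $c=2$ (the characterisation of Bokal, Oporowski, Richter, and Salazar), this reduces the statement to showing that large $c$-crossing-critical graphs for $c\in\{3,\ldots,12\}$ also have bounded maximum degree.

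Second, I would invoke the DHM classification, which (roughly) describes every sufficiently large $c$-crossing-critical graph as obtained by gluing, via zip-product-style operations along small separators, a sequence of ``tiles'' drawn from a bounded-size family, arranged in a path-like or cyclic pattern consistent with the bounded pathwidth. In such a decomposition every vertex either lies in the interior of a single tile, where its degree is bounded by the tile size, or is an attachment vertex lying on the interface between tiles. Only the latter kind of vertex could conceivably accumulate unbounded degree, by having edges from many successive tiles concentrate at it.

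Third, and this is the main obstacle, I would prove that for $c\le 12$ such an accumulation is impossible. The crucial lemma is a crossing-number accounting argument: given an attachment vertex $v$ of degree $d$ in a tile decomposition, one extracts a planar picture showing that the $d$ edges incident to $v$ force some explicit number $\mu(d)$ of unavoidable crossings in any drawing, with $\mu(d)\to\infty$ as $d\to\infty$. One then shows that $\mu(d)\ge 13$ for all $d$ larger than some absolute constant, which together with $\crn(G)<c\le 12$ on every proper subgraph (and critical accounting on $G$ itself) yields a contradiction. The delicate numerical threshold between $c\le 12$ and $c\ge 13$ must then emerge precisely from the per-branch crossing cost in this count, matching the explicit constructions given elsewhere in the paper for $c\ge 13$. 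The technical heart of the proof is therefore a finite case analysis over the bounded family of tile types occurring in the DHM classification, combined with a careful drawing-theoretic count of crossings around a single attachment vertex.
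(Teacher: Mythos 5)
Your high-level strategy (finite small case, structural decomposition for large graphs, then a crossing-count showing a high-degree vertex forces $\ge 13$ crossings) is a plausible alternative to the paper's route, and the paper itself notes that the fan-type structure around a large-degree vertex could be derived from the Dvo\v{r}\'ak--Hlin\v{e}n\'{y}--Mohar classification. However, the paper deliberately avoids that classification and gives a self-contained argument, and more importantly, your proposal has a genuine gap exactly at the step you call ``the technical heart.''

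The gap is twofold. First, your claim that only ``attachment vertices on the interface between tiles'' can acquire large degree, with the interior of each tile of bounded degree, is not a consequence of the DHM picture that you can simply cite: in their description a single vertex can lie on the boundary of arbitrarily many consecutive tiles (this is precisely what happens at the hub vertex $x$ in the paper's own $13$-critical construction in Section~5, where arbitrarily many ``wedges'' all attach at $x$). So the scenario you need to exclude is an allowed feature of the classification, not a degenerate corner case, and you give no mechanism for excluding it. Second, and more seriously, the claim ``$\mu(d)\ge 13$ for all $d$ large'' is asserted with no derivation; since that inequality \emph{is} the theorem, the proposal does not contain a proof. The exact threshold $13$ is delicate, and there is no a priori reason a per-branch crossing count would produce it.

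By contrast, the paper's proof identifies a fan of $6c{+}1$ internally disjoint paths from the high-degree vertex $v$ to a single face (Corollary~\ref{cr:haveFan}, proved from scratch via the fan-grid machinery of Section~3), selects two crossing-free ``sentinel'' blocks $L$ and $R$ among them, deletes one edge of a middle path to get a drawing with $\le c-1$ crossings, and then re-inserts the middle part $M$ into that drawing. The cost of re-insertion is controlled by two parameters: $r$, the edge-connectivity of $M$ across a Menger cut, and $k$, the minimum number of crossings of a single $M$-path with $S_1\cup S_2$ in the cheap drawing. If the rotation at $v$ is preserved one re-inserts $M$ at cost $kr$ and reaches a contradiction immediately; if not, one performs a ``shrink-and-mirror'' of the region near $v$, paying $\binom{k}{2}$ extra crossings. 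The only way to avoid a contradiction is $\binom{k}{2} \ge kr + 1 \ge 2k+1$, which forces $k\ge 6$ and hence $c\ge kr+1\ge 13$. That inequality is where $13$ actually comes from, and nothing in your proposal reproduces an argument of this kind.
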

\doneTodo{Referee 1: Theorem 1.1: What bounds do we get for D?
(either as a ``uniform´´ D, or even better, as a function of c)
I think, it should definitely be mentioned, together with the best known
lower bounds.\\
Response: PH - we have nothing, right? Even $c=2$? I will start to investigate some
precise values with my student.\\
Drago: right, excellent for your student! We can discuss this and some other ideas in Telč.}
In fact, one can separately consider in Theorem~\ref{th:boundedMD} twelve
upper bounds $D_c$ for each of the values $c\in\{1,2,\ldots,12\}$.
For instance, $D_1=4$ and the optimal value of $D_2$ (we know $D_2\geq8$)
should also be within reach using \cite{bokalOporowskiRichter16} and
continuing research.
On the other hand, due to the asymptotic nature of our arguments,
we are currently not able to give any ``nice'' numbers for the remaining
upper bounds, and we leave this aspect to future investigations.

We cover the remaining values of $c\ge 13$ in the gap
in a very strong sense, by constructing critical graphs with arbitrarily
many high-degree vertices:
\begin{theorem}
\label{th:construction13}
For every positive integers $d$ and $m$, there exists a 
$3$-connected $13$-crossing-critical graph $G(d,m)$, 
which contains at least $m$ vertices of degree at least~$d$.
\end{theorem}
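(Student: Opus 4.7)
The plan is to convert the non-constructive argument of Dvo\v{r}\'ak and Mohar~\cite{dvorakMohar10} into an explicit construction by exploiting Bokal's zip product~\cite{bokal10} in a \emph{hub-and-spokes} layout. I would fix a small $3$-connected tile $T$ with a designated ``hub'' vertex, and assemble $G(d)$ by zipping $d$ copies of $T$ along this hub so that the resulting common vertex $v$ inherits one edge from each copy and hence attains degree at least $d$. The tile $T$ must be engineered so that the resulting graph has crossing number exactly $13$ independently of $d$; this forces the crossings inside each spoke to be shared with the core structure rather than generated in isolation, and dictates that $T$ itself is either planar or contributes crossings only cooperatively with the rest of the graph.

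Once the construction is in place, I would carry out three verifications. First, $3$-connectivity of $G(d)$ would follow from standard zip-product arguments applied at the hub together with the $3$-connectivity of $T$. Second, the identity $\crn(G(d))=13$ would be proved by exhibiting an explicit drawing whose $d$ spokes nest around $v$ producing exactly $13$ crossings (upper bound) and by a tile-wise lower-bound argument that rules out drawings with fewer crossings. Third, and most importantly, every edge must be shown to be critical: for an edge inside a spoke, deletion must cleanly release a crossing in the corresponding copy, reflecting a fine-grained property of $T$; for an edge of the shared backbone or incident with $v$, deletion must force a drawing reorganization that saves a crossing without being compensated by the other $d-1$ spokes.

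I expect the principal technical obstacle to be engineering the tile $T$ so that the crossing arithmetic lands \emph{exactly} at $13$ uniformly in $d$, while every edge is individually critical and the graph remains $3$-connected. The target value $13$ is dictated by Theorem~\ref{th:boundedMD}: $T$ must sit precisely at the boundary between the bounded-degree regime ($c\le 12$) and the unbounded regime, so any slack in its crossing contribution would push the construction up to $14$ or higher. Once such a tile is found, producing the family $\{G(d)\}_{d\ge 1}$ should be a routine application of the zip product, while both directions of the criticality argument and the matching lower bound on the global crossing number should reduce to a combinatorial case analysis within $T$ and its attachment to the backbone.
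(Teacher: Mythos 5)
Your plan is built on the zip product, but the zip product is structurally incompatible with what you need here. As defined in Definition~\ref{def:zip}, the zip product of $G_1$ and $G_2$ at vertices $v_1$, $v_2$ of degree $2$ or $3$ \emph{deletes} $v_1$ and $v_2$ and joins their former neighbours across the cut; there is no surviving ``hub'' vertex that ``inherits one edge from each copy,'' so zipping does not produce a vertex whose degree grows with the number of copies. Worse, by Theorem~\ref{thm:zip} the crossing number is \emph{additive} under zip product, so zipping $d$ copies of any non-planar tile $T$ yields a graph with crossing number $d\cdot\crn(T)$, which cannot stay at $13$ independently of $d$; and if $T$ were planar the zip would produce a planar graph. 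There is no way to engineer $T$ so that ``crossings inside each spoke are shared with the core structure'' under the zip product: that additivity is exactly what the operation guarantees. The zip product is indeed useful for this paper, but only in Section~\ref{sc:extended}, to step the crossing number \emph{up} from $13$ to $c>13$ by attaching $K_{3,3}$'s at degree-$3$ vertices (after a $4$-to-$3$ expansion creates such vertices); it plays no role in making the maximum degree grow.

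The paper's actual construction is a single explicit family $G_{13}^k$ (Definition~\ref{def:ccg13k}): a ``bowtie'' $B$ plus $k$ ``wedges'' $D_1,\ldots,D_k$, all sharing a central vertex $x$ of degree $2k+16$, with carefully chosen thick-edge multiplicities. The lower bound $\crn(G_{13}^k)\ge 13$ is established by an induction that contracts consecutive wedges based on the cyclic edge order around the junction vertices $w_3^i$, reducing $k$ to $k-1$ or $k-2$ without increasing the crossing count (Lemma~\ref{lem:inductg}); the base cases $k=2,3$ are verified by a computer-generated machine-checkable certificate (Lemma~\ref{lem:base23}). Criticality (Lemma~\ref{lm:ccg13k-crit}) is shown by exhibiting several explicit alternative drawings of the whole graph, each of which has fewer than $13$ crossings after deleting the relevant edge. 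Your proposal defers precisely these two hard parts---the lower bound and the edge-by-edge criticality---to ``once such a tile is found,'' but under the zip-product framework no such tile can exist, and without that framework you have not described an alternative mechanism for keeping $\crn$ pinned at $13$ as the degree of the hub grows.
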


\begin{corollary}
\label{cr:construction}
For every positive integers $c\ge 13$, $d$ and $m$, there exists a 
$3$-connected $c$-crossing-critical graph $G(c,d,m)$, 
which contains at least $m$ vertices of degree at least~$d$.
\end{corollary}

%
%


\medskip
The paper is structured as follows. The preliminaries, needed to 
help understanding the structure of large $c$-crossing critical graphs
are defined in Section \ref{sec:drawing-crossing}.
We prove Theorem \ref{th:boundedMD} in Section \ref{sc:boundedMD},
and Theorem \ref{th:construction13} in Section \ref{sc:construction}.
An additional technical treatment and an operation call zip product is needed
to establish Corollary~\ref{cr:construction} in Section \ref{sc:extended}.  
We conclude with some remarks and open problems in Section \ref{sc:conclusion}.

\section{Graphs and the crossing number}
\label{sec:drawing-crossing}

In this paper, we consider multigraphs by default, even though we could always subdivide
parallel edges (while sacrificing $3$-connectivity) in order to make
our graphs simple.
We follow basic terminology of topological graph theory, see e.g.~\cite{moharThomassen01}.

A {\em drawing} of a graph $G$ in the plane is such that 
the vertices of $G$ are distinct points 
and the edges are simple (polygonal) curves joining their end vertices.
It is required that no edge passes through a vertex, 
and no three edges cross in a common point.
A {\em crossing} is then an intersection point of two edges other than
their common end.  A \emph{face} of the drawing is a maximal connected
subset of the plane minus the drawing.
A drawing without crossings in the plane is called a {\em plane drawing}
of a graph, or shortly a {\em plane graph}.
A graph having a plane drawing is {\em planar}.

The following are the core definitions used in this work.
\begin{definition}[crossing number]
\label{def:crossingn}
The {\em crossing number} $\crn(G)$ of a graph $G$
is the minimum number of crossings of edges in a drawing of $G$ in the plane.
An \emph{optimal drawing} of $G$ is every drawing with exactly $\crn(G)$ crossings.
\end{definition}

\begin{definition}[crossing-critical]
\label{def:crosscritical}
Let $c$ be a positive integer.
A graph $G$ is \emph{$c$-crossing-critical} if $\crn(G)\ge c$, but every proper
subgraph $G'$ of $G$ has $\crn(G')<c$.
\end{definition}
Let us remark that a $c\,$-crossing-critical graph may have no drawing
with only $c$ crossings (for~$c=2$, such an example is the Cartesian product 
of two 3-cycles, $C_3 \Box C_3$).
\doneTodo{Referee 1: 
after def 2.2: $C_3\square C_3$ is not a standard thing, please
give the definition. \\Response: Done as suggested (Drago). }

Suppose $G$ is a graph drawn in the plane with crossings.  Let $G'$ be the
plane graph obtained from this drawing by replacing the crossings with new
vertices of degree $4$.  We say that $G'$ is {the plane graph
associated with the drawing}, shortly the \emph{planarization} of (the drawing of) $G$, 
and the new vertices are the \emph{crossing vertices} of $G'$.


In some of our constructions, we will have to combine crossing-critical graphs as 
described in the next definition.

\begin{definition}
\label{def:zip}
Let $d=2$ or $3$. For $i\in\{1,2\}$, let $G_i$ be a graph and let $v_i\in V(G_i)$
be a vertex of degree $d$ 
\doneTodo{Referee 1: 
Def 2.3: Instead of the somewhat technical $d\in\{ 2, 3\}$,
start with "Let $d=2$ or $3$." \\Response: Done as suggested (Drago). }
that is only incident with simple edges,
such that $G_i-v_i$ is connected.
Let $u_i^j$, $j\in\{1,\ldots,d\}$ be the neighbors of $v_i$. 
The \textsl{zip product} of $G_1$ and $G_2$ at $v_1$ and $v_2$ is obtained
from the disjoint union of $G_1-v_1$ and $G_2-v_2$ by adding the edges
$u_1^ju_2^j$, for each $j\in\{1,\ldots,d\}$. 
\end{definition}
Note that, for different labellings of the neighbors of $v_1$ and $v_2$, 
different graphs may result from the zip product. However, the following
has been shown:
\begin{theorem}[\cite{bokalChimaniLeanos13}]
\label{thm:zip}
Let $G$ be a zip product of $G_1$ and $G_2$ as in Definition \ref{def:zip}.
Then, $\crn(G)=\crn(G_1)+\crn(G_2)$. Furthermore, if for both $i=1$ and $i=2$,
$G_i$ is $c_i$-crossing-critical, where $c_i=\crn(G_i)$, then $G$ is $(c_1+c_2)$-crossing-critical.  
\end{theorem}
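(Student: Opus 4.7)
The plan is to prove the equality $\crn(G)=\crn(G_1)+\crn(G_2)$ by matching upper and lower bounds, and then derive the criticality statement from a case analysis on which edge is removed.

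For the upper bound, I start with optimal drawings $D_1, D_2$ of $G_1, G_2$. Since $v_i$ has only simple incident edges and $G_i - v_i$ is connected, by a standard stereographic move I can assume $v_i$ lies on the boundary of the outer face of $D_i$, with the $d$ edges leaving into the outer region. The two drawings are then placed into disjoint regions of the plane. For $d=2$ the neighbors can be joined along two parallel curves in the outer region without any new crossings. For $d=3$, the cyclic order of $u_1^1, u_1^2, u_1^3$ around $v_1$ in $D_1$ may or may not match the reverse of the cyclic order at $v_2$; reflecting $D_2$ if necessary produces compatibility, and it is precisely here that the bound $d\le 3$ becomes essential for the upper bound. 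Deleting $v_1, v_2$ and routing the zipped edges $u_1^j u_2^j$ through the outer region introduces no additional crossings, yielding a drawing of $G$ with exactly $\crn(G_1)+\crn(G_2)$ crossings.

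The lower bound $\crn(G)\ge \crn(G_1)+\crn(G_2)$ is the heart of the argument and where I expect the real difficulty. Starting from an optimal drawing $D$ of $G$, I want to produce drawings of $G_1$ and $G_2$ whose crossings together total at most $\crn(D)$. The conceptual approach is to find a simple closed curve $\gamma$ in the plane which, after possibly a crossing-preserving redrawing of $D$, crosses each of the zipped edges $u_1^j u_2^j$ exactly once and meets no other edge or vertex. Such a $\gamma$ splits the plane into two discs; contracting $\gamma$ inside one disc to a single point reintroduces $v_2$ and produces a drawing of $G_2$, and symmetrically for $v_1$ in the other disc. Every crossing of $D$ then lies on exactly one side of $\gamma$ and is charged to exactly one of the two derived drawings, giving the bound. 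The difficulty is justifying the existence of $\gamma$: the zipped edges may be interleaved with the rest of the drawing in intricate ways, and one must argue, using the connectivity of $G_i - v_i$ and the constraint $d\le 3$, that $D$ can be reorganized so that the $G_1$-part and $G_2$-part occupy topologically separated regions joined only through the $d$ zipped edges. I would attempt this by choosing the zipped edges to be shortest possible curves between their endpoints in $D$, identifying a contractible region containing one of the two sides, and perturbing to eliminate all other intersections with the boundary.

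For criticality, assume $G_i$ is $c_i$-crossing-critical and let $e\in E(G)$; I must show $\crn(G-e)<c_1+c_2$. If $e\in E(G_1-v_1)$, then $G-e$ is itself the zip product of $G_1-e$ and $G_2$ at $v_1, v_2$, so by the already-established equality and criticality of $G_1$, we get $\crn(G-e) = \crn(G_1-e)+\crn(G_2) < c_1+c_2$; the case $e\in E(G_2-v_2)$ is symmetric. If $e$ is a zipped edge $u_1^j u_2^j$, then $G-e$ is the degree-$(d-1)$ zip product of $G_1-v_1u_1^j$ and $G_2-v_2u_2^j$; applying the equality (whose proof extends verbatim to $d-1\in\{1,2\}$) together with criticality of both $G_i$ gives $\crn(G-e) < c_1+c_2$. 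Combined with $\crn(G)=c_1+c_2$, this completes the proof that $G$ is $(c_1+c_2)$-crossing-critical.
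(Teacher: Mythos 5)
The paper does not prove this theorem; it imports it as a black box from \cite{bokalChimaniLeanos13} (with the degree-$2$ case already due to Lea\~{n}os and Salazar \cite{leanosSalazar08}). There is therefore no in-paper proof to compare against, and I assess your attempt on its own merits.

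Your upper-bound construction is fine, and your criticality reduction is structurally correct, modulo one small slip: you invoke the additivity \emph{equality} for the subgraphs $G_1-e$ and $G_2$ (resp.\ $G_1-v_1u_1^j$ and $G_2-v_2u_2^j$), but the hypothesis of the theorem — that deleting the zip vertex leaves a connected graph — need not hold for these subgraphs (for instance $e$ may be a bridge of $G_1-v_1$). This is harmless, since for criticality you only need the one-sided inequality $\crn(G-e)\le\crn(G_1-e)+\crn(G_2)$, which is exactly your upper-bound construction and requires no connectivity. The genuine gap is the lower bound $\crn(G)\ge\crn(G_1)+\crn(G_2)$, which is the substance of the theorem and which your proposal, by your own admission, does not prove. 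Exhibiting a simple closed curve $\gamma$ that meets the drawing of $G$ only in the $d$ zipped edges, once each, is the right target, but it does not follow from taking the zipped edges "shortest possible" or from local perturbation: in an optimal drawing of $G$ the images of $G_1-v_1$ and $G_2-v_2$ can be interleaved arbitrarily, so that every candidate separating curve is forced through many non-zipped edges, and there is no obvious crossing-non-increasing move that removes these intersections. The published proofs handle this by a careful global surgery argument that exploits the connectedness of $G_i-v_i$ (and, in earlier formulations, a bundle of $d$ disjoint paths emanating from $v_i$) to progressively disentangle the two sides; this is exactly the step your sketch flags as "the difficulty" and leaves as a plan. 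As it stands, the proposal does not establish the lower bound, hence not the equality, hence not the criticality claim either (which uses $\crn(G)\ge c_1+c_2$).
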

\doneTodo{Referee 1: 
Theorem 2.4: I don't understand "Consequently".
Just from the fact, that $cr(G)=cr(G_1)+cr(G_2)$, and
both graphs are crossing critical, why G is $(c_1+c_2)$-crossing critical?
\\Response: Replaced ``Consequently" with ``Furthermore". The proof is not
difficult, but not trivial either (Drago). }

For vertices of degree $2$, this theorem was established 
already by Lea\~{n}os and Salazar in~\cite{leanosSalazar08}.


\section{Structure of $\boldsymbol c$-crossing-critical graphs with large maximum degree}
\label{sc:structure}

Dvo\v{r}\'ak, Hlin\v{e}n\'{y}, and Mohar~\cite{dvorakHlinenyMohar18} recently characterized 
the structure of large $c$-crossing-critical graphs.
From their result, it can be derived that in a crossing-critical graph with a vertex of large degree,
there exist many internally vertex-disjoint paths from this vertex to the boundary of a single face.  To keep our contribution self-contained, we give a simple
independent proof.
We are going to apply this structural result to exclude the existence of large degree vertices
in $c$-crossing-critical graphs for $c\le 12$.

Structural properties of crossing-critical graphs have been studied for more
than two decades, and we now briefly review 
some of the previous important results which we shall use.

Richter and Thomassen~\cite{richterThomassen93} proved the following upper bound:
\begin{theorem}[\cite{richterThomassen93}]
	\label{thm:Richter-Thom}
	Every $c$-crossing-critical
	graph has a drawing with at most $\lceil 5c/2+16\rceil$ crossings.
\end{theorem}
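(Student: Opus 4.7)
The plan is to leverage $c$-criticality: since $\crn(G - e) \leq c - 1$ for any edge $e$, it suffices to exhibit one edge $e$ that can be reinserted into an optimal drawing of $G - e$ while adding at most $\tfrac{3}{2}c + 17$ new crossings. Together with the $c - 1$ crossings of $G-e$ this yields $\crn(G) \leq \lceil 5c/2 + 16 \rceil$.

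Fix an edge $e = uv$ and take an optimal drawing $D$ of $G - e$ with at most $c - 1$ crossings; let $P$ be its planarization. The cost $N_e$ of reinserting $e$ into $D$ equals the shortest face-distance in $P$ from a face incident to $u$ to one incident to $v$. To bound $N_e$, I would route $e$ as a thin parallel copy of a path $Q$ in $G - e$ from $u$ to $v$: the new curve picks up at most one crossing per edge of $D$ that $Q$ crosses (at most $c - 1$ in total, since $D$ has at most $c-1$ crossings), plus a \emph{turning} cost at each internal vertex $w$ of $Q$ bounded by $\lfloor(\deg_{D}(w) - 2)/2\rfloor$ (by routing $e$ around $w$ on the cheaper side of its rotation at $w$). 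Thus, in the worst case,
$N_e \leq (c - 1) + \sum_{w \in \mathrm{int}(Q)} \lfloor(\deg(w) - 2)/2\rfloor$,
which a priori is not $O(c)$ since high-degree vertices could inflate the sum.

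The main obstacle is controlling the turning cost, i.e.\ showing that some edge $e$ admits a reinsertion path $Q$ whose total turning cost is $O(c)$. My approach is an averaging argument: starting from an optimal drawing $D^\star$ of $G$ and the identity $\sum_{e\in E(G)} \sigma_e(D^\star) = 2\crn(G)$, I would exhibit an edge $e$ of low crossing count, together with a short $Q$ through low-degree vertices, and bound the accumulated turning cost by applying Euler's formula to $P$ (which bounds the total face-boundary length in terms of $|E(G)|$ and $c$). The coefficient $\tfrac{5}{2}$ should emerge from the face-boundary-to-edge ratio in a plane graph combined with the averaging, and the additive $16$ should absorb the boundary contributions at the faces around $u$ and $v$ together with the Euler constants. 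The hardest step is showing that $c$-criticality excludes the pathological configuration in which every $u$-$v$ path in $G - e$ is forced through many high-degree vertices: such a configuration should contradict the criticality of some other edge, via a local rerouting/surgery in the drawing $D$, but making this rigorous is the technical heart of the argument.
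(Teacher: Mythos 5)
The paper does not prove this statement---it is cited verbatim from Richter and Thomassen~\cite{richterThomassen93}---so I am comparing your attempt against the classical argument rather than against a proof in this manuscript. Your high-level strategy is the right one and coincides with Richter--Thomassen's: delete an edge $e$, take an optimal drawing of $G-e$ with at most $c-1$ crossings, and reinsert $e$ at bounded extra cost. The gap is exactly where you flag it: the ``turning cost'' term $\sum_{w\in\operatorname{int}(Q)}\lfloor(\deg w-2)/2\rfloor$ is not bounded, and the sketch you offer does not close it. The averaging identity $\sum_{e}\sigma_e(D^\star)=2\crn(G)$ only certifies the existence of an edge that is lightly crossed in an optimal drawing of $G$; it says nothing about the degrees encountered along a $u$--$v$ path in $G-e$, nor about the (different) optimal drawing of $G-e$ into which you are actually reinserting $e$.

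More importantly, the hope of finding ``a short $Q$ through low-degree vertices'' is not available in general, and this very paper is a counterexample to that hope: Sections~5--6 construct, for every $c\ge 13$, $c$-crossing-critical graphs containing a vertex of arbitrarily large degree, and Richter--Thomassen's bound must hold for those graphs too, so any argument that relies on routing around low-degree vertices cannot work as stated. The ``local rerouting/surgery'' invoked to exclude the bad configuration is not specified and is therefore not a proof step. The standard way to sidestep the per-vertex turning cost is to reason in the dual of the planarization rather than alongside a primal path: delete the at most $2(c-1)$ crossed edges from the drawing of $G-e$ to obtain a plane subgraph, and control the dual distance between a face at $u$ and a face at $v$ directly, which makes no reference to vertex degrees. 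As written, your proposal correctly reduces the theorem to the reinsertion bound $N_e\le \tfrac32 c+17$, but leaves that bound---the ``technical heart''---unproven.
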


Hlin\v{e}n\'y~\cite{hlineny03} proved that $c\,$-crossing-critical
graphs have path-width bounded in terms of~$c$. 
\begin{theorem}[\cite{hlineny03}]
	\label{thm:bounded-pw}
	There exists a function $f_{\ref{thm:bounded-pw}}:\mathbb{N}\to\mathbb{N}$ such that,
	for every integer $c\ge 1$, every $c$-crossing-critical graph has path-width
	fewer than $f_{\ref{thm:bounded-pw}}(c)$.
\end{theorem}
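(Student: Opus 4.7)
The plan is to combine the Richter--Thomassen bound (Theorem~\ref{thm:Richter-Thom}) with an excluded-grid argument and the crossing-criticality hypothesis. Fix a $c$-crossing-critical graph $G$. By Theorem~\ref{thm:Richter-Thom}, $G$ admits a drawing $D$ with at most $k := \lceil 5c/2 + 16 \rceil$ crossings. Let $G'$ be the planarization of $D$. Then $G'$ is planar, has $|V(G)| + k$ vertices, and its path-width differs from that of $G$ only by a function of $k$; so it suffices to bound the path-width of $G'$.

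Next I would invoke the excluded-grid theorem, using its quantitative (polynomial) version available for planar graphs: if the path-width of $G'$ exceeds a suitable polynomial in a parameter $t$, then $G'$ contains the $t \times t$ grid as a minor. Lifting this minor back through the planarization yields a large grid-like subgraph $H$ of $G$ together with its inherited drawing from $D$. Because the grid is $3$-connected its planar embedding is essentially unique, and the at most $k$ crossings of $D$ can touch only a bounded portion of $H$; the remainder of $H$ therefore lies inside topological disks of $D$ whose interiors contain no crossing point.

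The crossing-criticality of $G$ now has to produce a contradiction once $t$ is large enough in terms of $c$. Deep inside a crossing-free disk of $H$ I would locate an edge $e$ of $G$ and argue, using the surrounding grid as a rerouting backbone, that $\crn(G-e) = \crn(G)$: the dense grid around $e$ lets one simulate the topological role of $e$ without paying any additional crossings in an optimal drawing of $G-e$. This contradicts the requirement $\crn(G-e) < c$ from Definition~\ref{def:crosscritical}, so the path-width of $G'$, and hence of $G$, must be bounded by a function of $c$.

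The main obstacle will be making this local redundancy step precise: one has to control how edges of $G$ outside $H$ attach to the grid, and how the bounded set of crossings in $D$ interacts with these attachments, so that a carefully chosen interior edge $e$ of $H$ really is topologically superfluous. Once this lemma is in place, chasing constants through the quantitative planar grid theorem and through the planarization produces an explicit $f_{\ref{thm:bounded-pw}}(c)$, likely of tower-type as in Hlin\v{e}n\'y's original proof.
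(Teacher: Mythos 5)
There is a genuine gap at the step where you invoke the excluded-grid theorem to bound \emph{path-width}. The grid-minor theorem (and its polynomial planar version) is an obstruction to large \emph{tree-width}, not to large path-width. A planar graph can exclude all grids of size $3$ and still have arbitrarily large path-width: the complete binary tree of depth $p$ is planar, has tree-width $1$ (so contains no $3\times 3$ grid minor), yet its path-width is $\Theta(p)$. Hence "path-width of $G'$ exceeds a polynomial in $t$ $\Rightarrow$ $G'$ contains a $t\times t$ grid minor" is simply false, and the rest of your argument—which is built entirely around a large crossing-free grid region in the drawing—never gets started in the binary-tree case. The correct obstruction set for path-width (as recorded in Lemma~\ref{lemma:pwobstr} of this paper, following Bienstock, Robertson, Seymour and Thomas) consists of \emph{two} structures: a large grid-like pattern of crossing path systems \emph{or} a subtree containing a deep complete binary tree as a rooted minor. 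Your grid-rerouting idea plausibly handles the first; it does not address the second at all, and the second is where most of the subtlety in Hlin\v{e}n\'y's argument lies (a big binary tree attached to a planar drawing does not by itself supply a ``dense rerouting backbone'' the way a grid does).

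Beyond that missing case, the crucial local lemma you postpone—``an interior grid edge $e$ is topologically superfluous, so $\crn(G-e)=\crn(G)$''—is where Hlin\v{e}n\'y's original proof spends essentially all of its effort, and it is not a routine rerouting claim: one has to control how the rest of $G$ and the bounded set of crossings attach to the grid-like structure, and the right notion is not literally a grid but a fence/nest-type structure adapted to the drawing. So even if the grid-minor step were repaired (by splitting into the two obstructions above), the proposal as written would still defer the key content of the theorem to an unproven sublemma. The overall strategy (Richter--Thomassen to bound crossings, planarize, extract a planar structure from large path-width, find a redundant edge, contradict criticality) is the right shape and does match the literature at a high level, but the specific reduction through grid minors is wrong for path-width and the redundancy step is a black box.
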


For simplicity, we omit the exact definition of path-width; 
rather, we only use the following fact~\cite{bienstockRobertsonSeymourThomas91}.
For a rooted tree $T$, let $b(T)$ denote the maximum depth of a rooted complete
binary tree which appears in $T$ as a rooted minor (the \emph{depth} of a rooted tree
is the maximum number of edges of a root-leaf path).
\begin{lemma}\label{lemma:pwobstr}
	For every integer $p\ge 0$, if a graph $G$ either
	\begin{itemize}
		\item contains a subtree $T$ which can be rooted so that $b(T)\ge p$, or
		\item contains pairwise vertex-disjoint paths $P_1$, \ldots, $P_p$ and pairwise vertex-disjoint paths $Q_1$, \ldots, $Q_p$ such that
		$P_i$ intersects $Q_j$ for every $i,j\in\{1,\ldots,p\}$,
	\end{itemize}
	then $G$ has path-width at least $p$.
\end{lemma}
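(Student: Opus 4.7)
\medskip
\noindent\textbf{Proof proposal.} The plan is to prove each of the two sufficient conditions by the contrapositive: given an arbitrary path decomposition $(B_1,\ldots,B_m)$ of $G$ of width strictly less than $p$ (so $|B_\ell|\le p$ for every $\ell$), I will produce a bag of size at least $p+1$ and derive a contradiction.

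For the interlaced-paths condition, I plan to use a Helly-type argument on intervals. For every $i\in\{1,\ldots,p\}$ the set $J_i\subseteq\{1,\ldots,m\}$ of bag indices meeting $V(P_i)$ is an interval (since $P_i$ is connected and the bags containing any fixed vertex already form an interval), and analogously the bags meeting $V(Q_j)$ form an interval $I_j$. The condition $V(P_i)\cap V(Q_j)\ne\emptyset$ translates to $I_j\cap J_i\ne\emptyset$ for all $i,j$. I split into two cases: either the $I_j$'s share a common bag, which then contains one vertex from each pairwise-disjoint $Q_j$ and hence $p$ distinct vertices; or two of them (say $I_1,I_2$) are disjoint on the line, in which case each $J_i$ must bridge the gap and pass through a boundary bag at, say, index $\max I_1$, which then contains one vertex from each $P_i$. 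A small additional observation produces a vertex from the opposite family in the same bag, bringing the count to $p+1$.

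For the binary-tree condition, I would exploit the fact that pathwidth is monotone under taking subgraphs and contractions. Since the subtree $T\subseteq G$ admits the complete rooted binary tree $B_p$ of depth $p$ as a rooted minor, it is enough to prove $\mathrm{pw}(B_p)\ge p$, which I plan to do by induction on $p$. In the inductive step I would consider any path decomposition of $B_p$, let $[\alpha,\beta]$ be the contiguous range of bags containing the root $r$, and note that the two copies of $B_{p-1}$ rooted at $r$'s children must be handled essentially on opposite sides of $[\alpha,\beta]$. The inductive hypothesis applied to each $B_{p-1}$, together with the observation that a ``separator'' bag near the transition must simultaneously carry the root and boundary vertices escaping from both subtrees, forces a bag of size at least $p+1$.

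I expect the main obstacle to be the tight counting in the binary-tree case: a naive induction tends to lose a constant factor, so the sharp bound requires a careful accounting of which active vertices (boundaries of subproblems still being processed on both sides) must coexist with the root in the separator bag. This is essentially the pathwidth-obstruction technique of Bienstock, Robertson, Seymour and Thomas~\cite{bienstockRobertsonSeymourThomas91}, to which the lemma appeals.
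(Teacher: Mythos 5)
The paper does not prove this lemma; it is cited as a known fact from Bienstock, Robertson, Seymour and Thomas~\cite{bienstockRobertsonSeymourThomas91}, with the authors explicitly declining even to recall the exact definition of pathwidth, so your proposal has to stand on its own. Both halves of your outline have a real gap, and the one in the binary-tree half is fatal to the plan as written: the target inequality $\mathrm{pw}(B_p)\ge p$ is false. The pathwidth of the complete rooted binary tree of depth $p$ equals $\lceil p/2\rceil$; already for $p=3$ the in-order layout $a_{11},a_1,a_{12},a,a_{21},a_2,a_{22},r,b_{11},b_1,b_{12},b,b_{21},b_2,b_{22}$ of its fifteen vertices witnesses vertex separation $2$, so $\mathrm{pw}(B_3)=2<3$. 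The ``constant factor'' you describe the naive induction as losing is not an artefact of a careless argument: it is genuinely there, and no amount of accounting of boundary vertices in a separator bag can recover it. A correct version of that bullet would need something like $b(T)\ge 2p$, which incidentally is harmless for the rest of the paper because Lemma~\ref{lemma:pwobstr} is only used qualitatively in deriving Corollary~\ref{cr:bondNestFan}.

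For the interlaced-paths half the interval argument is the right tool, and your two cases do locate an index $\ell$ whose bag $B_\ell$ contains one vertex from every path of one of the two families --- hence $p$ distinct vertices. But that only yields $\mathrm{pw}(G)\ge p-1$. The ``small additional observation'' that is supposed to supply a $(p{+}1)$-st vertex is precisely the crux and it is not given; it is also not automatic, since a priori $B_\ell$ could consist of exactly the $p$ crossing vertices $x_{i,\sigma(i)}\in V(P_i)\cap V(Q_{\sigma(i)})$ for a permutation $\sigma$, so that every $Q_j$ is already represented by one of the $P$-vertices and neither family contributes a spare. To close this you would need a concrete argument about what must enter or leave the bag at the transition index, not merely a count, and as written that step is a hole.
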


Hlin\v{e}n\'{y} and Salazar~\cite{hlinenySalazar10} also proved that distinct vertices 
in a crossing-critical graph cannot be joined by too many paths.
\begin{theorem}[\cite{hlinenySalazar10}]\label{thm:boundedBond}
	There exists a function $f_{\ref{thm:boundedBond}}:\mathbb{N}\to\mathbb{N}$ such that,
	for every integer $c\ge 1$, no two vertices of a $c$-crossing-critical graph are
	joined by more than $f_{\ref{thm:boundedBond}}(c)$ internally vertex-disjoint paths.
\end{theorem}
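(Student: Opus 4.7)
The plan is to assume, for contradiction, that two vertices $u,v$ of a $c$-crossing-critical graph $G$ are joined by $k$ internally vertex-disjoint paths $P_1,\ldots,P_k$, where $k$ will be taken sufficiently large as a function of $c$, and then to derive a drawing of $G$ with fewer than $c$ crossings, contradicting $\crn(G)\ge c$.

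First I would invoke Theorem~\ref{thm:Richter-Thom} to fix a drawing $D$ of $G$ with at most $M:=\lceil 5c/2+16\rceil$ crossings. Since each crossing involves only two edges, at most $2M$ of the paths $P_i$ participate in any crossing in $D$, so (after relabeling) at least $k'\ge k-2M$ of them are \emph{clean}, i.e.\ drawn without crossings. The clean paths together with $u$ and $v$ form a plane generalized theta-graph in $D$; ordering them cyclically around $u$ yields $k'-1$ bounded bigon faces whose closure contains the remainder of the drawing.

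Next I would exploit crossing-criticality: for an interior edge $e$ of any clean path $Q_i$, the subgraph $G-e$ admits a drawing $\tilde D$ with at most $c-1$ crossings. The goal reduces to showing that the two endpoints $x,y$ of $e$ lie on a common face of the planarization of $\tilde D$; were this so, $e$ could be reinserted for free, producing a drawing of $G$ with fewer than $c$ crossings and contradicting $\crn(G)\ge c$. In $\tilde D$ the same clean-path analysis identifies at least $k-2M-2(c-1)-1$ of the paths $Q_j$ ($j\ne i$) as still clean, so $\tilde D$ again exhibits a large planar theta-like arrangement. The two stubs of $Q_i-e$ attach to $u$ and $v$, and each can cross into only a bounded number of the bigons of this arrangement (at most one per crossing of $\tilde D$). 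By pigeonhole, once $k$ is chosen sufficiently large in terms of $c$, both stubs must terminate in the same bigon, and moreover in a bigon whose interior is undisturbed by any of the $\le c-1$ crossings or by any non-clean path segment, placing $x$ and $y$ on a common face.

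The hardest part is this final pigeonhole step: one must transfer the rigid clean-path bigon structure from the fixed drawing $D$ to the redrawn $\tilde D$ — which may look entirely different — while simultaneously controlling how the at most $c-1$ crossings and the $O(c)$ remaining non-clean paths subdivide the bigons of the planarization of $\tilde D$. Quantifying this carefully, likely through a Ramsey-style argument forcing both stubs into the same undisturbed cell, should yield an explicit (roughly exponential in $c$) upper bound on $f_{\ref{thm:boundedBond}}(c)$.
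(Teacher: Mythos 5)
The paper does not prove this statement: Theorem~\ref{thm:boundedBond} is quoted from Hlin\v{e}n\'{y} and Salazar~\cite{hlinenySalazar10} and used as a black box, so there is no in-paper argument to compare against. Judged on its own, your sketch has a real gap at the step you yourself flag as ``the hardest part.'' In $\tilde D$ the two stubs of $Q_i-e$ are \emph{fixed} curves; each is indeed confined to a single bigon of the clean theta arrangement (crossing a clean path would contradict its cleanness), but there is no pigeonhole available: the stub at $u$ and the stub at $v$ independently enter bigons incident to $u$ and to $v$ respectively, and nothing forces these bigons to coincide, no matter how large $k$ is. The quantity you want to pigeonhole over (which bigon a stub enters) is determined by $\tilde D$, not by a free choice you control, so ``many bigons, few disturbed ones'' does not put both stubs into a common undisturbed cell.

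Even granting that both stubs land in the same bigon, the conclusion ``$x$ and $y$ lie on a common face'' does not follow. The paths $P_1,\ldots,P_k$ are only internally vertex-disjoint \emph{paths}, not a spanning structure: other vertices and edges of $G-e$, and branches of $G-e$ hanging off internal vertices of the $P_j$, can be drawn inside that bigon without contributing any crossings, and such material can separate $x$ from $y$ within the bigon's interior. So ``same bigon with no crossings inside'' is strictly weaker than ``same face of the planarization.'' A smaller technical point: your count $k-2M-2(c-1)-1$ mixes cleanness in $D$ with cleanness in $\tilde D$; only cleanness in $\tilde D$ is relevant once you are working in $\tilde D$, so the $2M$ term is a red herring. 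The overall reinsert-for-free strategy is a reasonable first instinct, but the control you need over where $x$ and $y$ end up in an arbitrary near-optimal drawing of $G-e$ is precisely what makes this theorem nontrivial, and the present sketch does not supply it.
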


As seen in the construction of Dvo\v{r}\'ak and Mohar~\cite{dvorakMohar10} and in the construction
we give in Section~\ref{sc:construction}, crossing-critical graphs can contain arbitrarily many cycles intersecting in exactly one
vertex.  However, such cycles cannot be drawn in a nested way.
A \emph{$1$-nest} of depth $m$ in a plane graph $G$ is a sequence $C_1$, \ldots, $C_m$ of cycles in $G$ 
and a vertex $w\in V(G)$ such that, for $1\le i<j\le m$, the cycle $C_i$ is drawn in the closed disk bounded by $C_j$ and
$V(C_i)\cap V(C_j)=\{w\}$ (Figure~\ref{fig:nests}).  
Hern\'{a}ndez-V\'{e}lez et al.~\cite{velezSalazarThomas12} have shown the following.
\begin{theorem}[\cite{velezSalazarThomas12}]\label{thm:boundedNest}
	There exists a function $f_{\ref{thm:boundedNest}}:\mathbb{N}\to\mathbb{N}$ such that,
	for every integer $c\ge 1$, the planarization of every optimal drawing of a $c$-crossing-critical
	graph does not contain a $1$-nest of depth $f_{\ref{thm:boundedNest}}(c)$.
\end{theorem}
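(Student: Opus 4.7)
The plan is to assume for contradiction that for some $c$, arbitrarily deep $1$-nests are realisable: for every $M$, some $c$-crossing-critical graph $G$ has an optimal drawing whose planarization $H$ contains a $1$-nest $C_1, \ldots, C_m$ with common vertex $w$ and $m\ge M$. I will choose $m$ as a suitably large function of $c$ and derive a contradiction using the tools in the preceding results.

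First I would argue that $w$ is a vertex of $G$ rather than a crossing vertex, and that its degree in $G$ is large. Crossing vertices of $H$ have degree $4$, so at most two pairwise edge-disjoint cycles (which our $C_i$ are, as they share only $w$) can pass through such a vertex; for $m\ge 3$ this forces $w\in V(G)$, and the two edges of each $C_i$ at $w$ are then distinct edges of $G$, giving $\deg_G(w)\ge 2m$. Next, by Theorem~\ref{thm:Richter-Thom} the optimal drawing of $G$ has at most $\lceil 5c/2+16\rceil$ crossings, so the planarization $H$ has at most $O(c)$ degree-$4$ vertices. Each such vertex lies on at most two of the cycles $C_i$, so all but $O(c)$ of the cycles $C_i$ are ``clean'': they avoid every crossing vertex and are therefore genuine cycles of $G$, drawn without any crossing on them. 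Passing to this sub-$1$-nest I may henceforth assume $C_1,\ldots,C_{m'}$ with $m'=m-O(c)$ are all cycles of $G$.

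The clean cycles produce an extremely rigid topological picture: since no edge of $G$ crosses any $C_i$, the cycle $C_i$ separates its interior disc from the exterior in the whole drawing of $G$, and the only vertex through which $G$ can ``jump'' from inside $D_i$ to outside is $w$. Thus the annuli $A_i$ between consecutive clean cycles (each topologically a disc once $w$ is identified on the boundary) decompose $G$ into many concentric ``shells'', with communication between different shells possible only via $w$ or via edges of $G$ that lie entirely inside a single annulus.

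I would then attack on two fronts. First, I would use Theorem~\ref{thm:boundedBond}: the cycle $C_i$ gives two internally disjoint $w$-$x$ paths in $G$ for any $x\in V(C_i)\setminus\{w\}$; combined with paths through other shells and connectivity of $G$ (crossing-critical graphs of interest are $2$-connected), one shows that already $\Omega(f_{\ref{thm:boundedBond}}(c))$ of the cycles must share a non-$w$ vertex with a fixed transversal path, producing more than $f_{\ref{thm:boundedBond}}(c)$ internally disjoint paths between two vertices, a contradiction. If instead the cycles are too ``disjoint'' to allow such a transversal, I would use the second route via path-width: the clean cycles yield $m'$ pairwise vertex-disjoint paths $P_i=C_i-w$, and the $2$-connectivity of $G$ together with the fact that $G-w$ must remain connected across each annulus forces a second family of disjoint paths, each meeting many $P_i$. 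Invoking Lemma~\ref{lemma:pwobstr}, the path-width of $G$ exceeds $f_{\ref{thm:bounded-pw}}(c)$, contradicting Theorem~\ref{thm:bounded-pw}.

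The hard part, and the true technical content, is the step that converts ``many nested clean cycles'' into either a bond or a path-width obstruction whose size actually exceeds the relevant bound $f_{\ref{thm:boundedBond}}(c)$ or $f_{\ref{thm:bounded-pw}}(c)$. In particular, one must rule out the degenerate scenario in which most annuli are essentially empty and the nested cycles form a near-``accordion''; this is where the interplay between optimality of the drawing, $2$-connectivity of $G$, and the bound on total crossings has to be used simultaneously. I expect a careful sunflower-type averaging across the $m'$ annuli, selecting many annuli with isomorphic ``transversal behaviour'', to be the core combinatorial step that unlocks the contradiction.
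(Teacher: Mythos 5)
The paper does not prove this theorem; it is cited verbatim from Hern\'{a}ndez-V\'{e}lez, Salazar, and Thomas~\cite{velezSalazarThomas12}, so there is no internal proof to compare against. On its own terms, your opening reductions are sound: for $m\ge 3$ the common vertex $w$ of the $1$-nest cannot be a crossing vertex (a degree-$4$ vertex of the planarization admits at most two cycles through it that pairwise meet only there), hence $w\in V(G)$ and $\deg_G(w)\ge 2m$; and since by Theorem~\ref{thm:Richter-Thom} an optimal drawing has only $O(c)$ crossings, each crossing vertex other than $w$ lies on at most one $C_i$, so all but $O(c)$ of the nested cycles avoid crossing vertices entirely and are therefore genuine uncrossed cycles of $G$. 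This much is correct and would be a natural way to start.

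The proof, however, stops exactly where the theorem's content begins, and you say so yourself. The claimed dichotomy --- either a bond of size exceeding $f_{\ref{thm:boundedBond}}(c)$ or a path-width obstruction exceeding $f_{\ref{thm:bounded-pw}}(c)$ --- is asserted, not derived. A $1$-nest does hand you one family of pairwise disjoint paths, namely the $C_i-w$, but Lemma~\ref{lemma:pwobstr} also needs a second disjoint family transversal to all of them, and nothing in the nest structure plus $2$-connectivity forces such transversals: each annulus between consecutive clean cycles could be bridged by a single short edge, with the endpoints of these bridges drifting around the cycles, producing neither long radial paths nor many internally disjoint paths between one fixed pair of vertices. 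There is also a small inaccuracy in the ``shells'' paragraph: communication across an uncrossed cycle $C_i$ can pass through any vertex of $C_i$, not only through $w$ (you partially retract this in the next clause, but the picture as stated is too rigid). Finally, your sketch uses optimality of the drawing only through the crossing-count bound, whereas a statement of this flavour typically needs optimality in a redrawing argument --- for instance, averaging over annuli and flipping or sliding the contents of one to save a crossing --- which is a different mechanism from the bond/path-width obstructions you reach for. As written this is an outline with an acknowledged hole at the decisive step, not a proof.
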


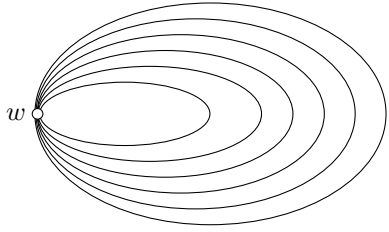
\begin{figure}[tb]
	\centering
	\begin{tikzpicture}[scale=0.7]
	\tikzstyle{every node}=[draw, shape=circle, minimum size=4pt,inner sep=1pt, fill=white]
	\tikzstyle{every path}=[color=black]
	\draw (0.3,0) ellipse (33mm and 21mm);
	\draw (0.02,0) ellipse (30mm and 18mm);
	\draw (-0.26,0) ellipse (27mm and 15mm);
	\draw (-0.55,0) ellipse (24mm and 12mm);
	\draw (-0.84,0) ellipse (21mm and 9mm);
	\draw (-1.31,0) ellipse (16mm and 6mm);
	\node[label=left:$w$] (n1) at (-2.95,0) {};
	\end{tikzpicture}
	\caption{An illustration of a $1$-nest.}
	\label{fig:nests}
\end{figure}

The key structure we use in the proof of Corollary~\ref{cr:haveFan} is a \emph{fan-grid}, which is defined as follows:
\begin{definition}
	Let $G$ be a plane graph and let $v$ be a vertex incident with the outer face of $G$.
	Let $C$ be a cycle in $G$, and let the path $C-v$ be the concatenation of
	vertex-disjoint paths $L$, $Q_1$, \ldots, $Q_n$, $R$ in that order.  Let $H$ be the subgraph
	of $G$ drawn inside the closed disk bounded by $C$.  We say that
	$(v,C,Q_1,\ldots,Q_n)$ is an \emph{$(r\times n)$-fan-grid} with \emph{center} $v$ if
	\begin{itemize}
		\item $H$ contains $n$ internally vertex-disjoint paths $P_1$, \ldots, $P_n$ (the \emph{rays} of the fan-grid), where $P_i$ joins
		$v$ with a vertex of $Q_i$ for $i=1,\ldots, n$, and
		\item $H-V(Q_1\cup \ldots\cup Q_n)$ contains $r$ vertex-disjoint paths from $V(L)$ to $V(R)$. See Figure~\ref{fig:fan-grid}.
	\end{itemize}
\end{definition}

\begin{figure}
	\centering
	\begin{tikzpicture}[scale=0.6]
	\tikzset{every node/.style={simple}}
	\tikzset{every path/.style={thick}}
	\path[use as bounding box] (-6,0) rectangle (6,10);
	
	\node[labeled] (v) at (0,1) {$v$};
	\node[simple] (a1) at (-6,8) {};
	\node[simple] (a2) at (-2,9) {};
	\node[simple] (a3) at (2,9) {};
	\node[simple] (a4) at (6,8) {};
	
	\node[simple] (b1) at (-5,6) {};
	\node[simple] (b2) at (-2.5,6) {};
	\node[simple] (b3) at (0.2,7) {};
	\node[simple] (b4) at (2.5,6) {};
	\node[simple] (b5) at (4,6) {};
	\node[simple] (b6) at (-1.6,4.5) {};
	\node[simple] (b7) at (0.2,5.5) {};
	
	\node[simple] (c1) at (-2.4,3) {};
	\node[simple] (c2) at (-1,3.5) {};
	\node[simple] (c3) at (0,2.5) {};
	\node[simple] (c4) at (1,3) {};
	\node[simple] (c5) at (2.4,3) {};
	
	\node[simple] (d1) at (-3,4) {};
	\node[simple] (d2) at (0.2,4.3) {};
	\node[simple] (d3) at (-1,6) {};
	
	\node[fill=black!16!white] (p1) at (-4.2,8.45) {};
	\node[fill=black!16!white] (p2) at (0.2,8.6) {};
	\node[fill=black!16!white] (p3) at (3.8,8.8) {};
	
	\begin{scope}[on background layer]
	\draw[darkgreen!50!white,line width=4pt] (b1.center) -- (b2.center) -- (b6.center) -- (d3.center) -- (b7.center) -- (b3.center) -- (b4.center) -- (b5.center);
	\draw[darkgreen!50!white,line width=4pt] (c1.center) -- (c2.center) -- (d2.center) -- (c4.center) -- (c5.center);
	
	\end{scope} 
	\draw[red] (v)-- (c1)--(d1)--(b1)--(a1)--(p1)--(a2)--(p2)--(a3)--(p3)--(a4)--(b5)--(c5)--(v);
	\draw[blue] (v)-- (c2)--(b6) -- (b2)--(p1);
	\draw[blue] (v)-- (c3)--(d2)-- (b7) -- (b3)--(p2);
	\draw[blue] (v)-- (c4)--(b4)--(p3);
	
	\coordinate [label=left:\textcolor{black}{$L$}] (L) at (-4.3,5); 
	\coordinate [label=left:\textcolor{black}{$Q_1$}] (Q1) at (-4.3,8.9); 
	\coordinate [label=left:\textcolor{black}{$Q_2$}] (Q2) at (-0.5,9.2); 
	\coordinate [label=left:\textcolor{black}{$Q_3$}] (Q3) at (3.5,9.4); 
	\coordinate [label=left:\textcolor{black}{$R$}] (R) at (4,4.5); 
	\coordinate [label=left:\textcolor{black}{$P_1$}] (P1) at (-2.6,7.5); 
	\coordinate [label=left:\textcolor{black}{$P_2$}] (P2) at (1,7.8); 
	\coordinate [label=left:\textcolor{black}{$P_3$}] (P3) at (3.8,7); 
	
	\end{tikzpicture} 
	
	\caption{ A $(2\times 3)$-fan-grid with center $v$. The rays of this 
		fan-grid ($P_1,P_2,$ and $P_3$) are colored blue. The underlying cycle
		$C$ is red, and the two vertex-disjoint paths from $V(L)$ to $V(R)$
		are green. These paths are shown in the idealized situation where they cross
		each of the paths $P_i$ only once.}
	\label{fig:fan-grid}
\end{figure}
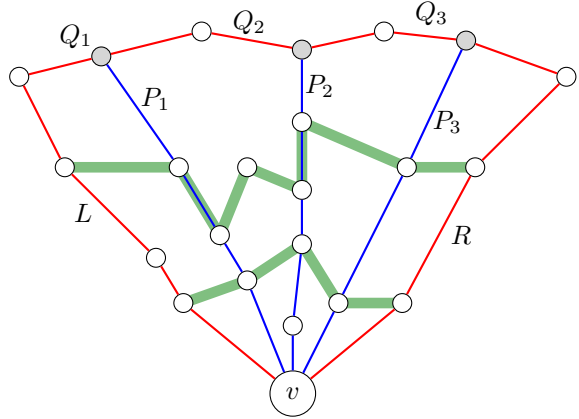

In the argument, we start with a $(0\times n)$-fan-grid and keep enlarging it (adding new rows while sacrificing some of the rays)
as long as possible.  The following definition is useful when looking for the new rows.  A {\em comb with teeth} $v_1,\ldots,v_k$ is a tree consisting of a path $P$
(the \emph{spine} of the comb) and vertex-disjoint paths $P_1,\ldots,P_k$ of length at least one, such that
$P_i$ joins $v_i$ to a vertex in $P$.  We start with simple observations on combs in trees
with many leaves.

\begin{lemma}\label{lemma-boundleaves}
	There exists a function $f_{\ref{lemma-boundleaves}}:\mathbb{N}^3\to\mathbb{N}$ such that the following holds
	for all integers $D,k\ge 1$ and $b\ge 0$.
	Let $T$ be a rooted tree of maximum degree at most $D$ satisfying $b(T)\le b$.
	If every root-leaf path in $T$ contains fewer than $k$ vertices with at least two children,
	then $T$ has at most $f_{\ref{lemma-boundleaves}}(D,b,k)$ leaves.
\end{lemma}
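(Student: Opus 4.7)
The plan is to reduce to an auxiliary tree via contraction and then count leaves directly; the hypothesis $b(T)\le b$ will turn out not to be needed, so the bound will depend only on $D$ and $k$.

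Call a vertex of $T$ a \emph{branching vertex} if it has at least two children. Let $T'$ be the rooted tree obtained from $T$ by iteratively suppressing every non-root internal vertex that has exactly one child (equivalently, by replacing each maximal path of non-branching internal vertices with a single edge). The leaves of $T'$ coincide with the leaves of $T$, so it suffices to bound the number of leaves of $T'$.

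I would then verify two structural properties of $T'$. First, the suppression does not change the number of children of any surviving vertex, so every vertex of $T'$ has at most $D$ children; hence $T'$ has branching factor at most $D$. Second, every non-root non-leaf vertex of $T'$ is a branching vertex of $T$ (since non-branching internal vertices were exactly what we removed). Combined with the hypothesis that every root-leaf path in $T$ contains fewer than $k$ branching vertices, this second property bounds the number of internal non-root vertices on any root-leaf path of $T'$ by $k-1$, so every such path has length at most $k+1$, i.e., depth at most $k+1$.

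A rooted tree of depth at most $k+1$ with branching factor at most $D$ has at most $D^{k+1}$ leaves, so the function $f_{\ref{lemma-boundleaves}}(D,b,k) := D^{k+1}$ works (and in fact does not depend on $b$, which we include only to match the stated signature). The only thing requiring care is the off-by-one in the depth calculation, depending on whether the root itself is a branching vertex, but this is a small bookkeeping matter rather than a real obstacle.
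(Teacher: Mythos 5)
Your argument is correct and takes a genuinely different route from the paper's. The paper proves the lemma by induction on $|V(T)|$: it recurses on the subtrees $T_1,\ldots,T_d$ rooted at the children of the root, decreasing $k$ when the root is branching, and using the hypothesis $b(T)\le b$ to guarantee that $b(T_i)\le b-1$ for all but at most one $i$ -- this is what makes the two-parameter recursion $f(D,b,k)=f(D,b,k-1)+(D-1)f(D,b-1,k-1)$ bottom out. Your compression argument -- suppress every non-root internal vertex with a single child, observe that the surviving tree $T'$ has branching factor at most $D$ and depth at most $k$ (a slightly more careful count than your stated $k+1$, but as you note this is mere bookkeeping), and read off the bound $D^k$ on the leaf count -- dispenses with the induction entirely and, more to the point, reveals that the hypothesis $b(T)\le b$ is redundant for this lemma: the bound can be taken to depend only on $D$ and $k$. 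The paper's recursive bound is tighter when $b$ is small relative to $k$, but the lemma as stated (and its uses downstream in Corollary~\ref{cor-diameter} and Lemmas~\ref{lemma-extend} and~\ref{lemma-start}) only require the existence of some finite bound, so your version is simultaneously simpler and more informative about which hypotheses are actually doing work.
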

\begin{proof}
	Let $f_{\ref{lemma-boundleaves}}(D,b,k)=1$ if $k=1$ or $b=0$, and 
	$$f_{\ref{lemma-boundleaves}}(D,b,k)=f_{\ref{lemma-boundleaves}}(D,b,k-1)+(D-1)\cdot f_{\ref{lemma-boundleaves}}(D,b-1,k-1)$$
	if $k\ge 2$ and $b\ge 1$.
	We prove the claim by the induction on the number of vertices of $T$.  If $|V(T)|=1$, then
	$T$ has only one leaf.  Hence, suppose that $|V(T)|\ge 2$.  Let $v$ be the root of $T$
	and let $T_1$, \ldots, $T_d$ be the components of $T-v$, where $d=\deg (v)\le D$.
	If $d=1$, then the claim follows by the induction hypothesis applied to $T_1$; hence, suppose that $d\ge 2$.
	In particular, $b\ge b(T)\ge 1$ and $k\ge 2$.
	Then, for all $i\in\{1,\ldots,d\}$, each root-leaf path in $T_i$ contains fewer than $k-1$ vertices with at least
	two children.  Furthermore, there exists at most one $i\in\{1,\ldots,d\}$ such that $b(T_i)=b$;
	hence, we can assume that $b(T_i)\le b-1$ for $2\le i\le d$.  By the induction hypothesis,
	$T_1$ has at most $f_{\ref{lemma-boundleaves}}(D,b,k-1)$ leaves and each of $T_2$, \ldots, $T_d$ has at most $f_{\ref{lemma-boundleaves}}(D,b-1,k-1)$
	leaves, implying the claim.
\end{proof}

\begin{corollary}\label{cor-diameter}
	For every triple of integers satisfying $D,k\ge 1$ and $b\ge 0$,
	every rooted tree $T$ of maximum degree at most $D$, $b(T)\le b$, and with more than $f_{\ref{lemma-boundleaves}}(D, b, k)$ leaves
	contains a comb with $k$ teeth, all of which are leaves in $T$.
\end{corollary}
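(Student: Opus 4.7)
The plan is to invoke the contrapositive of Lemma~\ref{lemma-boundleaves}. Since the tree $T$ has more than $f_{\ref{lemma-boundleaves}}(D,b,k)$ leaves, the lemma forces the existence of a root--leaf path $P^*$ in $T$ that contains at least $k$ vertices with two or more children.

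Label $k$ such vertices $w_1,\ldots,w_k$ on $P^*$, in the order from root to leaf. For each $i$, select a child $c_i$ of $w_i$ that is \emph{not} on $P^*$; this is possible because $w_i$ has at least two children but $P^*$ uses only one of them. Let $T_i$ denote the subtree of $T$ rooted at $c_i$, and pick any leaf $v_i$ of $T$ lying in $T_i$. Take the spine $P$ to be the subpath of $P^*$ from $w_1$ to $w_k$, and, for each $i$, let the leg $P_i$ be the unique path in $T$ from $w_i$ through $c_i$ down to $v_i$.

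The verification that $(P;P_1,\ldots,P_k)$ is a comb with teeth $v_1,\ldots,v_k$ reduces to the elementary observation that, in the rooted tree $T$, the subtrees $T_1,\ldots,T_k$ are pairwise disjoint and each is disjoint from $P^*$---both facts hold because each $c_i$ is chosen off the main path $P^*$. These yield immediately the three required comb properties: every $P_i$ has length at least one (it contains the distinct vertices $w_i$ and $c_i$), the legs $P_1,\ldots,P_k$ are pairwise vertex-disjoint, and each $P_i$ meets the spine $P$ exactly at $w_i$. Every $v_i$ is a leaf of $T$ by construction.

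The only potential obstacle is a bookkeeping check of the disjointness of the $T_i$'s, which is automatic in a rooted tree since distinct children of vertices on a common root--leaf path span disjoint subtrees. All the genuine combinatorial work has already been carried out inside Lemma~\ref{lemma-boundleaves}, so this deduction is expected to be short and routine.
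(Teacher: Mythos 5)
Your proof is correct and follows essentially the same approach as the paper's (very terse) proof: apply the contrapositive of Lemma~\ref{lemma-boundleaves} to obtain a root-leaf path with at least $k$ branching vertices, then grow a leg off each branching vertex to a leaf. You are just more explicit than the paper about choosing each $c_i$ off the main path $P^*$, which is precisely the detail needed to guarantee the legs are pairwise vertex-disjoint and meet the spine only at their respective $w_i$.
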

\begin{proof}
	By Lemma~\ref{lemma-boundleaves}, $T$ contains a root-leaf path $P$ with at least $k$ vertices that have at least two children.
	A subpath of $P$ together with the paths from $k$ of these vertices to leaves forms a comb with $k$ teeth.
\end{proof}

Suppose $Q$ is a path and $K$ is a comb in a plane graph $G$, such that all teeth of $K$ lie on $Q$ and
$K$ and $Q$ are otherwise disjoint.  We say that the comb is \emph{$Q$-clean} if both $Q$ and the spine of $K$
are contained in the boundary of the outer face of the subdrawing of $G$ formed by $K\cup Q$.

\begin{observation}\label{obs-clean}
	Suppose $Q$ is a path and $K$ is a comb in a plane graph $G$, such that all teeth of $K$ lie on $Q$ and
	$K$ and $Q$ are otherwise disjoint.  Let $k\ge 2$ be an integer.  If $Q$ is contained in the boundary of the face of $G$ and
	$K$ has at least $3k-1$ teeth, then $K$ contains a $Q$-clean subcomb with at least $k$ teeth.
\end{observation}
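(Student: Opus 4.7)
My plan is to apply a three-way pigeonhole argument based on the planar embedding of the tooth-paths relative to the spine $P$ of $K$. First I would exploit the hypothesis that $Q$ lies on the boundary of a face $F$ of $G$: since $F$ contains no edges of $G$ and $K$ meets $Q$ only at the teeth, the $F$-side of $Q$ is free of edges of $K\cup Q$ and extends unobstructed to infinity, so the face of $K\cup Q$ (and of $K'\cup Q$ for any subcomb $K'$) touching $Q$ from the $F$-side is unbounded, hence is the outer face. Therefore $Q$ is automatically on the outer face of every such subdrawing, and to verify $Q$-cleanness of $K'$ it suffices to show that the subspine of $K'$ lies on the outer face of $K'\cup Q$.

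Next, for each tooth $v_i$ I would classify the tooth-path $P_i$ into one of three mutually exclusive types based on how $P_i$ sits relative to the spine $P$. Because $P_i$ is disjoint from $P$ except at $q_i$ and cannot cross $P$ in the embedding, and since $v_i$ lies on the $Q$-side of $P$, the path $P_i$ either (D) enters the topological strip on the $Q$-side of $P$ at $q_i$ and remains there until reaching $v_i$, or else it leaves $q_i$ on the side of $P$ opposite to $Q$ and must wrap around one of the endpoints of $P$ to descend to $v_i$: either (WL) around $p_0$ or (WR) around $p_1$. Since $K$ has at least $3k-1$ teeth, pigeonhole yields at least $\lceil (3k-1)/3\rceil = k$ teeth of some single type; let $K'$ be the subcomb they form.

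The remaining---and main technical---step is to verify that $K'$ is $Q$-clean in each of the three cases. For type D, the selected tooth-paths all lie in the topological disk bounded by portions of $P$ and $Q$; being pairwise non-crossing, they pair their attachments on $P$ with their teeth on $Q$ in matching order, so $K'\cup Q$ is a planar caterpillar whose subspine is exposed on the side opposite $Q$ and hence plainly on the outer face. For type WL, pairwise non-crossing combined with a common wrap around $p_0$ forces the selected tooth-paths to be nested; all their teeth then lie on the side of $p_0$ away from the subspine, so the strip directly between the subspine and $Q$ is free of edges of $K'$ and connects through the unbounded plane to the $F$-side of $Q$, placing the subspine on the outer face of $K'\cup Q$. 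The type WR case is symmetric. In all three cases $K'$ is a $Q$-clean subcomb with at least $k$ teeth, as required.
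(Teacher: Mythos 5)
The paper offers no proof of this observation, so there is no written argument to compare against; I can only evaluate your proposal on its own terms. Your overall strategy is sound and identifies the right combinatorial reason for the constant $3$: the attachment points $q_1,\dots,q_m$ of the tooth-paths (listed in the order that the teeth $v_1<\dots<v_m$ appear on $Q$) decompose, because of planarity, into at most three maximal monotone runs along the spine. Indeed, the exits of the tooth-paths appear on the boundary circle of a tubular neighbourhood of $P$ in the cyclic order $1,\dots,m$; the two endpoints $p_0,p_1$ of $P$ cut this circle into two arcs, on each of which the positions $q_i$ are monotone, and cutting the resulting cyclic sequence between index $m$ and index $1$ yields at most three linear monotone runs. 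A monotone subfamily $q_{i_1},\dots,q_{i_k}$ gives a clean subcomb because the spine intervals $[q_{i_j},q_{i_{j+1}}]$ bounding the internal faces then tile the subspine rather than overlap, so every edge of the subspine also borders the outer face. Your D/WL/WR trichotomy is essentially this decomposition in disguise, and the pigeonhole count $\lceil(3k-1)/3\rceil=k$ is correct.

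That said, a few points would need tightening before this is a complete argument. First, ``the $Q$-side of $P$'' is not globally defined, because a path does not separate the plane; the trichotomy must instead be phrased via which arc of $\partial N(P)$ (as split by the endpoints $p_0,p_1$) the exit of $P_i$ lies on, together with the wrap direction. Second, your opening claim that the $F$-side of $Q$ ``extends unobstructed to infinity'' is only literally true when $F$ is the unbounded face of $G$; for a bounded face one should invoke a homeomorphism of the sphere (equivalently, a circle inversion) that moves $F$ to the outside before the rest of the argument proceeds. Third, the cleanness verification in the WL/WR cases is terse: what you actually need is the statement above that the selected $q_{i_j}$'s are monotone along $P$ (which the nesting forces), and from that, that the face-intervals on the subspine do not overlap. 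With these clarifications, your argument does establish the observation and, as far as I can tell, captures exactly the intuition that the authors left unstated.
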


Our aim is to keep growing a fan-grid using the following lemma (increasing $r$ at the expense of sacrificing some of the rays,
see the outcome (d)) until we either obtain a structure that cannot appear in a planarization of a $c$-crossing-critical
graph (outcomes (a)--(c)), or are blocked off from further growth by many rays ending in the boundary of the same face (outcome (e)).

\begin{lemma}\label{lemma-extend}
	There exists a function $f_{\ref{lemma-extend}}:\mathbb{N}^5\to\mathbb{N}$ such that the following holds.
	Let $G$ be a plane graph with a vertex $v$ incident with the outer face.  Let $D$, $b$, $m$, $r$, $k$, and $t$
	be positive integers.  Let $n=f_{\ref{lemma-extend}}(D,b,m,k,t)$. If $G$ contains an $(r\times n)$-fan-grid with center $v$,
	then $G$ also contains at least one of the following substructures:
	\begin{itemize}
		\item[(a)] two vertices joined by more than $D$ internally vertex-disjoint paths, or
		\item[(b)] a $1$-nest of depth greater than $m$, or
		\item[(c)] a subtree $T$ which can be rooted so that $b(T)>b$, or
		\item[(d)] an $((r+1)\times k)$-fan-grid with center $v$, or
		\item[(e)] more than $t$ internally vertex-disjoint paths from $v$ to distinct vertices contained in the boundary of a single face of $G$.
	\end{itemize}
\end{lemma}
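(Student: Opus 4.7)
The plan is to try enlarging the fan-grid by adding a new row at the top, arguing that either we succeed---giving outcome~(d)---or the failure forces one of the remaining outcomes. Let $U$ denote the closed ``upper strip'' of $H$ bounded above by the arc $Q:=Q_1\cup\cdots\cup Q_n$ and below by the topmost existing row of the fan-grid (or, when $r=0$, by the vertex $v$ together with the short portions of $L$ and $R$ nearest $v$). The upper segment of each ray $P_i$ lies inside $U$ and ends on $Q$. Note that the target bound $n$ does not depend on $r$, which matches the fact that the argument is localised to~$U$.

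The first key step is to construct a rooted tree $T\subseteq U$ at $v$ whose leaves sit on $Q$ and which captures all $n$ rays as root-to-leaf paths, possibly with additional branching as allowed by the drawing inside $U$. I would then negotiate between the hypotheses of Corollary~\ref{cor-diameter}: if at some vertex $T$ has large degree that cannot be converted into outcome~(a) by bundling extensions along the rays, I truncate $T$ there. The resulting $T$ has maximum degree bounded by a $D'$ determined by $D$ and still possesses many leaves on $Q$. Either $b(T)>b$, in which case $T$ witnesses outcome~(c), or Corollary~\ref{cor-diameter} applies and $T$ contains a comb $K$ whose teeth are many leaves of $T$; Observation~\ref{obs-clean}, applied with the face-boundary arc $Q$, then yields a $Q$-clean sub-comb $K'$ retaining a linear fraction of those teeth.

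The final step is to analyse the ``slots'' into which the paths of $K'$ together with $C$ partition $U$. If many of these slots share a common face along the corresponding arc of $Q$, the associated paths of $K'$ yield more than $t$ internally vertex-disjoint $v$-to-face-boundary paths, establishing outcome~(e). If, on the other hand, many slots host a local row segment crossing between the two vertical sides of $U$, then stitching these segments with the spine of $K'$ produces a full new row traversing the cycle $C$; after trimming to $k$ surviving rays (possible since $n\gg k$), this yields the $((r+1)\times k)$-fan-grid of outcome~(d). The only remaining topological alternative is that the slots stack as nested regions around $v$, which upon iteration gives a $1$-nest of depth exceeding $m$, i.e., outcome~(b).

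The principal obstacle will be the combinatorial bookkeeping: defining $f_{\ref{lemma-extend}}$ as an iterated nesting of $f_{\ref{lemma-boundleaves}}$ whose arguments absorb the successive losses from truncating $T$ to bounded degree, pruning to a $Q$-clean comb, and selecting slots. A secondary difficulty lies in the slot analysis itself, where delineating the ``new row'' case from the ``nested enclosures'' case requires a careful planar argument based on how the comb paths interact with the side arcs of $U$.
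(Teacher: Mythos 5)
Your overall plan---build a tree, extract a comb via Corollary~\ref{cor-diameter} and Observation~\ref{obs-clean}, and then do a case analysis---has the right ingredients, but the specific decomposition you propose has gaps that I don't think can be patched without essentially starting over along the lines of the paper's proof.

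The first problem is the tree $T$ rooted at $v$. The $n$ rays $P_1,\ldots,P_n$ are internally vertex-disjoint, so they meet only at $v$; a tree containing them all as root-to-leaf paths therefore has degree at least $n$ at the root $v$, which is unbounded and in particular exceeds $D$. This does \emph{not} give outcome~(a), since (a) concerns two distinct vertices joined by many paths, not one vertex of high degree. Your remedy is to ``truncate $T$'' at a high-degree vertex that cannot be converted to~(a), but the offending vertex is the root itself, so truncation destroys the tree. The paper avoids this entirely by applying Corollary~\ref{cor-diameter} not to a tree at $v$ but to trees inside individual $C$-bridges of $G_1$ (the part of $G$ outside the disk bounded by $C$) connecting to many of the $Q_i$'s: there the root lies in the bridge component and high degree \emph{does} force two vertices of the bridge to be joined by many paths or a vertex of $C$ to have many neighbours in the bridge, yielding~(a) legitimately.

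The second problem is the ``slot analysis.'' You gesture at three alternatives---many slots on a common face gives~(e), many slots containing row segments that stitch into a new row gives~(d), and nested slots give~(b)---but none of these is substantiated, and the 1-nest case in particular is not a simple topological alternative. A $1$-nest is a sequence of cycles pairwise intersecting in a single vertex $w$ and concentrically drawn; it is far from clear how ``nested slots around $v$'' produce such a structure (and $w$ need not be $v$). In the paper, outcome~(b) comes from a long chain in a carefully defined partial order $\prec$ on $C$-bridges (ordered by the interval of $Q_i$'s they touch), together with an arithmetic argument ($s_2 = 2s_1m$, the pigeonhole on $\min$ and $\max$ of the intervals) to extract $m$ properly nested bridges forming the cycles of the nest. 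Likewise, outcome~(e) arises from a subtle recursion on the ``spread'' $\max(J(H))-\min(J(H))$ of bridges, not from a single-face-sharing count; and outcome~(d) comes from a single bridge touching many $Q_i$'s (so one comb, one spine, one clean new row), not from stitching fragments across slots, which raises unaddressed planarity issues. Your closing acknowledgement that the bookkeeping and the slot analysis are ``the principal obstacle'' is accurate---those are precisely the parts of the argument that are missing, and they constitute the technical heart of the lemma.
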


\begin{proof}
	Let $s_1=2f_{\ref{lemma-boundleaves}}(D, b, 3k+5)$ and $s_2=2s_1m$.  For an integer $l\ge 0$, let
	$d(l)=(t\cdot(s_1-1))^l$.
	Let $f_{\ref{lemma-extend}}(D,b,m,k,t)=td(s_2)+1$.
	Let $(v,C,Q_1,\ldots,Q_n)$ be an $(r\times n)$-fan-grid in $G$.
	Let $G_1$ be the graph obtained from $G$ by removing the vertices and edges drawn in the open disk bounded by $C$, and
	let $f$ denote the resulting face bounded by $C$.
	
	Suppose that, for some $i\in\{1,\ldots, n\}$, there exists a component $R$ of $G_1-(V(C)\setminus V(Q_i))$ and a set
	$J\subseteq \{1,\ldots,n\}\setminus\{i\}$ of size more than $s_1$ such that $Q_j$
	has a neighbor in $R$ for every $j\in J$.  By symmetry, we can assume that there exists $J'\subseteq J$ of size more than
	$f_{\ref{lemma-boundleaves}}(D, b, 3k+5)$ such that $j>i$ for each $j\in J'$.  Observe that $G$ contains a tree $T$
	with all internal vertices in $R$ and exactly $|J'|$ leaves, one in each of $Q_j$ for $j\in J'$; we root $T$ in a vertex belonging to $R$.
	By Corollary~\ref{cor-diameter}, $\Delta(T)>D$ or $b(T)>b$ or $T$ contains a comb $K$ with $3k+5$ teeth, all of which are
	leaves of $T$.  In the former two cases, $G$ contains (a) or (c).  In the last case, we extract a $(C-v)$-clean subcomb with $k+2$ teeth from $K$
	using Observation~\ref{obs-clean} and combine it with a part
	of the $(r\times n)$-fan-grid in $G$ to form an $((r+1)\times k)$-fan-grid in $G$, showing that $G$ contains (d).
	
	Therefore, we can assume that the following holds:
	\\($\star$) For every $i\in \{1,\ldots, n\}$, every component $R$ of $G_1-(V(C)\setminus V(Q_i))$
	has neighbors in fewer than $s_1$ paths $Q_1$, \ldots, $Q_n$ other than $Q_i$.
	
	A \emph{$C$-bridge} of $G_1$ is either a graph consisting of a single edge of $E(G_1)\setminus E(C)$ and its ends,
	or a graph consisting of a component of $G_1-V(C)$ together with all edges between the component and $C$ and their endpoints.
	For a $C$-bridge $H$, let $J(H)$ denote the set of indices $j\in \{1,\ldots, n\}$ such that $H$ intersects $Q_j$.
	By ($\star$), we have $|J(H)|\le s_1$.  For two $C$-bridges $H_1$ and $H_2$, we write
	$H_1\prec H_2$ if $\min(J(H_2))\le \min(J(H_1))$, $\max(J(H_1))\le \max(J(H_2))$, and either
	at least one of the inequalities is strict or $J(H_2)\subsetneq J(H_1)$ (note that in the last case,
	the planarity implies $|J(H_2)|=2$).
	
	Suppose there exist $s_2+1$ $C$-bridges $H_0,H_1, \ldots, H_{s_2}$ such that $H_0\prec H_1\prec \ldots\prec H_{s_2}$.
	For $1\le j\le m+1$, let $b(j)=2(j-1)s_1$.  If, for $1\le j\le m$, we have
	$\min(J(H_{b(j)}))>\min(J(H_{b(j+1)}))$ and $\max(J(H_{b(j)}))<\max(J(H_{b(j+1)}))$,
	then $G$ contains (b).  Hence, by symmetry we can assume that there exists $j\in \{1,\ldots,m\}$
	such that $\min(J(H_{b(j)}))=\min(J(H_{b(j+1)}))$.  Consequently, there exists an index $i$
	such that $\min(J(H_p))=i$, for  $b(j)\le p\le b(j+1)$.  But then $|\bigcup_{p=b(j)}^{b(j+1)} J(H_p)\setminus\{i\}|>s_1$,
	which contradicts ($\star$).
	
	Consequently, there is no chain of order greater than $s_2$ in the partial ordering $\prec$.
	For a $C$-bridge $H$, let $l(H)$ denote the order of the longest chain of $\prec$ with the maximum element $H$.
	Suppose that $\max(J(H))-\min(J(H))>d(l(H))$, and choose a $C$-bridge $H$ with this property such that $l(H)$ is minimum.
	Since $|J(H)|\le s_1$, there exist two consecutive elements $j_1$ and $j_2$ of $J(H)$ such that $j_2-j_1>t\cdot d(l(H)-1)$.
	If $l(H)=1$, this implies there exists a face $f$ of $G$ such that all paths $Q_j$ with $j_1\le j\le j_2$ contain vertices incident with $f$,
	and $G$ contains (e).  Hence, suppose that $l(H)>1$.
	Let $B$ be the set of bridges $H'\prec H$ such that $j_1\le \min(J(H'))\le \max(J(H'))\le j_2$
	that are maximal in $\prec$ with this property.
	By the minimality of $l(H)$, every bridge $H'\in B$ satisfies $\max(J(H'))-\min(J(H'))\le d(l(H)-1)$.
	Consequently, there are more than $t+1$ indices $j$ such that $j_1\le j\le j_2$ and either $j=\max(J(H'))$ for some $H'\in B$
	or there does not exists any bridge $H'\prec H$ such that $\min(J(H'))\le j\le \max(J(H'))$.
	Observe there exists a face $f$ of $G$ such that, for each such index $j$, the path $Q_j$ contains a vertex incident with $f$.
	Hence, $G$ again contains (e).
	
	Consequently, we can assume that $\max(J(H))-\min(J(H))\le d(s_2)$, for each $C$-bridge $H$.
	Since $n>td(s_2)$, applying an analogous argument to the $C$-bridges that are maximal in $\prec$ yields that
	$G$ contains (e).
\end{proof}

To start up the growing process based on Lemma \ref{lemma-extend}, we need to show that a fan-grid with many rays exists.
\begin{lemma}\label{lemma-start}
	Let $G$ be a $2$-connected plane graph with a vertex $v$ incident with the outer face.
	For every positive integers $D$, $b$, and $k$, if $v$ has degree more than $f_{\ref{lemma-boundleaves}}(D, b+1, 3k+5)$,
	then $G$ contains at least one of the following:
	\begin{itemize}
		\item[(a)] two vertices joined by more than $D$ internally vertex-disjoint paths, or
		\item[(c)] a subtree $T$ which can be rooted so that $b(T)>b$, or
		\item[(d)] a $(0\times k)$-fan-grid with center $v$.
	\end{itemize}
\end{lemma}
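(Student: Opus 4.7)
\emph{Plan.} My plan is to carry out a tree-then-comb argument: I will find a tree in (the vicinity of) $G - v$ with many $N(v)$-leaves, apply Corollary~\ref{cor-diameter} to extract a comb with many teeth, and then convert the comb into a $(0\times k)$-fan-grid centered at $v$ using the planar embedding.

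The first step is to let $T \subseteq G - v$ be a tree minimal subject to $V(T) \supseteq N(v)$, so that every leaf of $T$ lies in $N(v)$. For any $x \in V(T)$, each of the $\deg_T(x)$ components of $T - x$ contains an $N(v)$-leaf, yielding $\deg_T(x)$ internally vertex-disjoint $x$-to-$v$ paths (go through the component to the leaf and then use its edge to $v$). Hence, if $\Delta(T) > D$, outcome (a) holds; assume otherwise. If some rooting of $T$ gives $b(T) > b$, outcome (c) is immediate via $T$ itself; assume otherwise, so $b(T) \le b$ for every rooting.

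Next, to apply Corollary~\ref{cor-diameter} I need more than $f_{\ref{lemma-boundleaves}}(D, b, 3k+5)$ leaves, but $T$ may have as few as two leaves if most neighbors of $v$ are internal in $T$. To fix this, I plan to modify $T$ to an auxiliary tree $\widehat T \subseteq G$, using the edges $vu$ (for $u \in N(v)$ internal in $T$) to reroute through the hub $v$ and thereby promote each such $u$ to a leaf. A careful bookkeeping should yield $|N(v)|$ leaves in $N(v)$, maximum degree at most $D$, and binary depth at most $b+1$ under a suitable rooting; the extra $+1$ is exactly the level contributed by the inserted hub, which is why the hypothesis uses $b+1$ instead of $b$. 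Applying Corollary~\ref{cor-diameter} to $\widehat T$ with parameters $D$, $b+1$, $3k+5$ then yields either outcome (c) (at threshold $b$) or a comb $K$ with $3k+5$ teeth, all lying in $N(v)$.

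Finally, the $(0 \times k)$-fan-grid is built from $K$ together with the star of edges from $v$ to the teeth: the planar embedding at $v$ induces a cyclic order on the teeth, and a pigeonhole argument in the spirit of Observation~\ref{obs-clean} extracts $k + 2$ teeth appearing in compatible order both at $v$ and along the spine of $K$. The two outermost selected teeth, joined to $v$, close up to form the cycle $C$, while the remaining $k$ teeth provide the rays $P_1, \ldots, P_k$.

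The hardest step will be the surgery producing $\widehat T$ from $T$: I must simultaneously promote every internal $N(v)$-vertex to a leaf, cap the maximum degree at $D$, and ensure the binary depth grows by at most one. This is a delicate rearrangement that exploits the planar structure and the edges $vu$ in tandem, and is the technical heart of the argument.
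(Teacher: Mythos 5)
The outline is sensible up to the point where it must actually produce the tree with many leaves, and there the proposal has a genuine gap. You start from a minimal Steiner tree $T\subseteq G-v$ for $N(v)$ and then plan to ``promote'' each internal $N(v)$-vertex to a leaf by ``rerouting through the hub $v$.'' But any tree $\widehat T$ that contains $v$ as a hub will give $v$ degree on the order of $\deg_G(v)$, which by hypothesis exceeds $f_{\ref{lemma-boundleaves}}(D,b+1,3k+5)\ge D$. That destroys the hypothesis $\Delta(\widehat T)\le D$ needed for Corollary~\ref{cor-diameter}, and it does not rescue the argument by giving outcome~(a) either: many internally disjoint paths fanning out from $v$ to distinct leaves is not the same as more than $D$ internally disjoint paths \emph{between two fixed vertices}. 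So the ``careful bookkeeping'' you defer to is not merely delicate; the hub-based surgery you sketch cannot simultaneously put $v$ in the tree and keep the maximum degree at most $D$. You acknowledge this is the technical heart and leave it unproved, which is precisely where the proof fails.

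The paper resolves this with a clean trick you have not found: split $v$ into $\deg(v)$ vertices of degree $1$, forming an auxiliary graph $G'$. Since $G$ is $2$-connected, $G'$ is connected, so it contains a subtree $T$ whose leaf set is exactly this set $S$ of degree-$1$ split copies of $v$ --- these leaves exist automatically, with no surgery. Rooting $T$ at a non-leaf, Corollary~\ref{cor-diameter} gives the trichotomy: if $\Delta(T)>D$ at some vertex $w$, then $w\ne v$ and the components of $T-w$ each reach a distinct leaf in $S$, yielding more than $D$ internally disjoint $w$--$v$ paths in $G$, i.e.\ (a); if $b(T)>b+1$, then $T-S$ is a subtree of $G$ with $b(T-S)>b$ (removing leaves lowers binary depth by at most one), giving (c); otherwise a comb with $3k+5$ teeth in $S$ exists, from which a clean subcomb and hence a $(0\times k)$-fan-grid is extracted. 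Your final comb-to-fan-grid step is in the right spirit, but without the degree-$1$ splitting the earlier steps do not go through as written.
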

\begin{proof}
	Let $G'$ be the graph obtained from $G$ by splitting $v$ into vertices of degree $1$, and let $S$ be the
	set of these vertices.  Since $G$ is $2$-connected, $G'$ is connected, and thus it contains a subtree $T$
	whose leaves coincide with $S$.  Root $T$ arbitrarily in a non-leaf vertex.
	By Corollary~\ref{cor-diameter}, $\Delta(T)>D$ or $b(T)>b+1$ or 
	 $T$ has a comb with $3k+5$ teeth in $S$.  In the first case, (a) holds.  In the second case,
	$b(T-S)>b$ and $T-S$ is a subtree of $G$, and thus (c) holds.  In the last case, we can extract a $v$-clean subcomb with
	at least $k+2$ teeth using Observation~\ref{obs-clean}, which gives rise to a $(0\times k)$-fan-grid with center $v$ in $G$.
\end{proof}

Note that a $((p+1)\times(p+1))$-fan-grid contains two systems of $p+1$ pairwise vertex-disjoint paths such that
every two paths from the two systems intersect; hence, by Lemma~\ref{lemma:pwobstr} a plane graph of path-width
at most $p$ contains neither a $((p+1)\times(p+1))$-fan-grid nor a subtree $T$ which can be rooted so that $b(T)>p$.
Hence, starting from Lemma~\ref{lemma-start} and iterating Lemma~\ref{lemma-extend} at most $p+1$ times, we obtain the following.
\begin{corollary}\label{cr:bondNestFan}
	There exists a function $f_{\ref{cr:bondNestFan}}:\mathbb{N}^4\to\mathbb{N}$ such that the following holds.
	Let $G$ be a $2$-connected plane graph.  Let $D$, $m$, $p$, and $t$ be positive integers.  Let $\Delta=f_{\ref{cr:bondNestFan}}(D,m,p,t)$.  If $G$
	has path-width at most $p$ and maximum degree greater than $\Delta$, 
	then $G$ contains at least one of the following:
	\begin{itemize}
		\item[(a)] two vertices joined by more than $D$ internally vertex-disjoint paths, or
		\item[(b)] a $1$-nest of depth greater than $m$, or
		\item[(e)] more than $t$ internally vertex-disjoint paths from a vertex $v$ to distinct vertices contained in the boundary of a single face of $G$.
	\end{itemize}
\end{corollary}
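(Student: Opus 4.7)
The plan is to combine Lemma~\ref{lemma-start} with an iteration of Lemma~\ref{lemma-extend}, using the path-width bound both to rule out the rooted-tree outcome (c) of each lemma and to force the iteration to terminate within $p+1$ rounds. I would first fix the fan-grid widths from the top down: set $k_{p+1}:=p+1$ and, for $j=p,p-1,\ldots,0$, define $k_j:=f_{\ref{lemma-extend}}(D,p,m,k_{j+1},t)$; then let $\Delta:=f_{\ref{lemma-boundleaves}}(D,p+1,3k_0+5)$.

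Given any vertex $v$ of $G$ of degree greater than $\Delta$, I would re-embed $G$ on the sphere so that $v$ is incident with the outer face (any face of a $2$-connected plane graph may be chosen as the outer one, and the combinatorial outcomes (a), (b), (e) are independent of this choice). Then I would apply Lemma~\ref{lemma-start} with parameters $D$, $b=p$, $k=k_0$. Outcome (a) directly matches the conclusion; outcome (c) would produce a rooted subtree of depth exceeding $p$, which by Lemma~\ref{lemma:pwobstr} forces path-width at least $p+1$, contradicting the hypothesis. So we may assume outcome (d): a $(0\times k_0)$-fan-grid centered at $v$.

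Next I would iterate Lemma~\ref{lemma-extend} for $j=0,1,\ldots,p$: starting from the $(j\times k_j)$-fan-grid with $k_j=f_{\ref{lemma-extend}}(D,p,m,k_{j+1},t)$, the lemma yields one of (a)--(e). Outcomes (a), (b), (e) immediately establish the corollary; (c) is forbidden by path-width exactly as before; and (d) produces a $((j+1)\times k_{j+1})$-fan-grid feeding the next round. The crux is that (d) cannot occur in the final, $(p+1)$-st application: otherwise we would obtain a $((p+1)\times(p+1))$-fan-grid, which (as remarked just before the corollary) contains two $(p+1)$-element families of pairwise vertex-disjoint paths---the $p+1$ horizontal transversals from $V(L)$ to $V(R)$ and any $p+1$ of the rays $P_i$---that pairwise intersect, again forcing path-width at least $p+1$ by Lemma~\ref{lemma:pwobstr}. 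Hence one of (a), (b), (e) must appear within the first $p+1$ iterations.

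The main delicate point is verifying the ``pairwise intersect'' claim for the horizontal transversals and rays in an $((p+1)\times(p+1))$-fan-grid. This is a Jordan-curve argument inside the closed disk bounded by the fan-grid's cycle $C$: each ray $P_i$, together with the arc of $C$ running from $v$ through $L$ to $Q_i$, forms a closed curve separating the disk into two regions, one containing $V(L)\setminus\{v\}$ and the other containing $V(R)\setminus\{v\}$, so every horizontal transversal (which avoids $Q_1\cup\cdots\cup Q_n$) must cross every ray. The remaining work is purely mechanical bookkeeping of the nested definition of $\Delta$ and verification that the input size $k_j$ required by the $(j+1)$-st application of Lemma~\ref{lemma-extend} matches the width guaranteed by step $j$.
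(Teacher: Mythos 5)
Your proposal follows the paper's proof exactly: the same nested‑parameter bookkeeping $k_{p+1}=p+1$, $k_j=f_{\ref{lemma-extend}}(D,p,m,k_{j+1},t)$, $\Delta=f_{\ref{lemma-boundleaves}}(D,p+1,3k_0+5)$, the same use of Lemma~\ref{lemma-start} to seed a $(0\times k_0)$-fan-grid, and the same $(p+1)$-fold iteration of Lemma~\ref{lemma-extend} in which the path-width hypothesis (via Lemma~\ref{lemma:pwobstr}) eliminates outcome (c) at each step and the final occurrence of (d). The only point worth tidying (which the paper's remark before the corollary also leaves implicit) is that the rays $P_1,\ldots,P_{p+1}$ are only \emph{internally} vertex-disjoint---they all contain $v$---so the second family fed to Lemma~\ref{lemma:pwobstr} must be the truncated paths $P_i-v$, and since one $L$--$R$ transversal could pass through $v$, it is safest to target a $((p+2)\times(p+2))$-fan-grid (i.e.\ set $k_{p+1}=p+2$); this is a constant-level adjustment that does not change the argument.
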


We now apply this result to an optimal planarization of a $c$-crossing-critical graph.
\begin{corollary}
	\label{cr:haveFan}
	There exists a function $f_{\ref{cr:haveFan}}:\mathbb{N}^2\to\mathbb{N}$ such that the following holds.
	Let $c\ge 1$ and $t\ge 3$ be integers and let $G$ be an optimal drawing of a $2$-connected $c$-crossing-critical graph.
	If $G$ has maximum degree greater than $f_{\ref{cr:haveFan}}(c,t)$, then there exists
	a path $Q$ contained in the boundary of a face of $G$ and internally vertex-disjoint paths $P_1$, \ldots, $P_t$
	starting in the same vertex not in $Q$ and ending in distinct vertices appearing in order on $Q$ (and otherwise disjoint from $Q$),
	such that no crossings of $G$ appear on $P_1$, $P_t$, nor in the face of $P_1\cup P_t\cup Q$ that contains $P_2$, \ldots, $P_{t-1}$.
\end{corollary}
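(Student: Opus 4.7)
The plan is to reduce Corollary~\ref{cr:haveFan} to Corollary~\ref{cr:bondNestFan} applied to the planarization $G^*$ of the drawing $G$, and then to use the bound on the number of crossings from Theorem~\ref{thm:Richter-Thom} to extract the required crossing-free fan from outcome~(e). Set $\xi:=\lceil 5c/2+16\rceil$; then the set $X$ of crossing vertices of $G^*$ satisfies $|X|\le\xi$. Each vertex of $V(G)$ retains its degree in $G^*$ (crossings lie in edge interiors), so the hypothesized high-degree vertex $v$ has the same degree in $G^*$; and the path-width of $G^*$ is at most $f_{\ref{thm:bounded-pw}}(c)+\xi$, since each crossing vertex can be inserted into the bags of a path decomposition of $G$ lying between its original endpoints. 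Although $G^*$ need not be $2$-connected, $G^*-v$ is connected because $G$ is, and this is the only consequence of $2$-connectivity actually used in Lemma~\ref{lemma-start} (and thereby in Corollary~\ref{cr:bondNestFan}). I would set $D:=f_{\ref{thm:boundedBond}}(c)+\xi$, $m:=f_{\ref{thm:boundedNest}}(c)$, $p:=f_{\ref{thm:bounded-pw}}(c)+\xi$, and $t^*:=(\xi+1)(t+1)$, and define $f_{\ref{cr:haveFan}}(c,t):=f_{\ref{cr:bondNestFan}}(D,m,p,t^*)$.

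Under the hypothesis that $G$ has maximum degree greater than $f_{\ref{cr:haveFan}}(c,t)$, Corollary~\ref{cr:bondNestFan} applied to $G^*$ forces outcome (a), (b), or (e). Outcome (b) is a $1$-nest of depth exceeding $f_{\ref{thm:boundedNest}}(c)$ in the planarization of an optimal drawing, contradicting Theorem~\ref{thm:boundedNest}. In outcome (a), both endpoints $x,y$ must lie in $V(G)$, because a crossing vertex has degree only four while more than $D\ge 4$ internally vertex-disjoint paths emanate from each endpoint. Since each crossing vertex of $G^*$ is internal to at most one of those paths, at most $|X|\le\xi$ of them can ``turn'' at a crossing onto a different $G$-edge; the remaining more than $f_{\ref{thm:boundedBond}}(c)$ paths traverse every encountered crossing entirely within a single $G$-edge, and therefore descend to internally vertex-disjoint $x$--$y$ paths in $G$, contradicting Theorem~\ref{thm:boundedBond}. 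So outcome (e) must occur, producing more than $t^*$ internally vertex-disjoint paths $R_1,\ldots,R_{t^*+1}$ from $v$ in $G^*$ to distinct vertices on the boundary of a single face $f$ of $G^*$.

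For the extraction step, I would (after truncating each $R_i$ at its first visit to the boundary of $f$, and re-choosing the outer face as $f$ if necessary) order the paths so that their endpoints appear in this order on the boundary of $f$; the paths partition the closed disk bounded by $f$ into $t^*$ sectors. Each crossing vertex of $X$ is either internal to at most one $R_i$ or lies in a unique sector, so at most $\xi$ items in the alternating sequence $R_1,S_1,R_2,\ldots,S_{t^*},R_{t^*+1}$ of length $2t^*+1$ are tainted. By pigeonhole there is a run of at least $2t$ consecutive untainted items, which contains $t$ consecutive crossing-free paths $P_1,\ldots,P_t$ together with their $t-1$ crossing-free intervening sectors. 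Taking $Q$ to be the sub-arc of the boundary of $f$ joining the endpoints of $P_1$ and $P_t$ through the endpoints of $P_2,\ldots,P_{t-1}$ completes the construction.

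The principal obstacle beyond the bookkeeping is the passage from paths in $G^*$ to paths in $G$ in outcome (a): a $G^*$-path that turns at a crossing vertex need not correspond to a path in $G$. Inflating $D$ by $\xi$ neutralizes this issue, because each of the $|X|\le\xi$ crossings is used by at most one path in an internally vertex-disjoint family, so the number of turning paths is at most $\xi$, and more than $f_{\ref{thm:boundedBond}}(c)$ non-turning paths survive to give the contradiction with Theorem~\ref{thm:boundedBond}.
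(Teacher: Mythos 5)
Your proof follows the same route as the paper's: planarize $G$, apply Corollary~\ref{cr:bondNestFan} with the same parameter inflation by the Richter--Thomassen bound, rule out outcomes (a) and (b) via Theorems~\ref{thm:boundedBond} and~\ref{thm:boundedNest}, and extract a crossing-free sub-fan from outcome (e) by a pigeonhole over the at most $\lceil 5c/2+16\rceil$ crossing vertices. Your handling of outcome (a) via ``non-turning'' paths is a sound unpacking of the paper's one-line count, and your alternating ray/sector pigeonhole is a valid reorganization of the paper's choice of $c'+2$ blocks of $t$ rays. Two remarks, though. First, you need not worry about $G^*$ failing to be $2$-connected: the paper observes that a cut vertex of the planarization would have to be a crossing vertex, and the corresponding crossing could then be eliminated, contradicting optimality of the drawing, so Corollary~\ref{cr:bondNestFan} applies to $G^*$ directly.

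Second, there is a real (though easily repaired) gap in the extraction step: the corollary requires $v\notin Q$, and the apex $v$ may lie on the boundary of the face $f$. Your parenthetical ``re-choosing the outer face as $f$'' does nothing to address this. The paper handles it explicitly by requiring, of the chosen block, not only that its closure contain no crossing vertex but also that $v$ not lie in the corresponding $Q_i$; this is precisely why it takes $c'+2$ blocks rather than $c'+1$. In your framework, the fix is to also count as tainted the (at most one) sector whose $\partial f$-arc contains $v$; this costs one extra unit in the pigeonhole and, with your current $t^*=(\xi+1)(t+1)$, the resulting run need not have length $2t$ for all $t$, so you should enlarge $t^*$ accordingly (e.g.\ $t^*=(\xi+2)(t+1)$).
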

\doneTodo{Referee 4: 
	line 140: I find the part after "such that" misleading/wrong. In particular, I think that "or" should be "and".\\
	Response: Done. "or" was replaced by "nor". (Drago)}
\doneTodo{Referee 1: 
	Corollary 2.5: why is our function denoted by $f_{2.5}$? very strange notation.
	Also here: a figure would be very useful. (at least in the full version) 
	\\Response: In the proof of this corollary, there is a series of similar statements,
	each with its own function. Hence, we denoted the function of each claim with the label
	of that claim. As this is the function of Corollary 2.5, its denotation is $f_{2.5}$.
	For the short statement, it would make sense to simplify it, but as this function does
	not appear anywhere and we would like to refrain from maintaining several copies of the 
	LaTeX code, we prefer to keep it as it is. As suggested, there are more figures planned, 
	and they will be added at the latest to the journal version. (Drago). }

\begin{proof}
	Let $c'=\lceil 5c/2+16\rceil$, $D=\max(5,f_{\ref{thm:boundedBond}}(c)+c')$, $m=f_{\ref{thm:boundedNest}}(c)$, $p=f_{\ref{thm:bounded-pw}}(c)+c'$,
	and $f_{\ref{cr:haveFan}}(c,t)=f_{\ref{cr:bondNestFan}}(D,m,p,(c'+2)t)$.
	
	By Theorem~\ref{thm:Richter-Thom}, $G$ has at most $c'$ crossings.  Let $G'$ be the planarization of $G$.
	Note that $G'$ is $2$-connected, since otherwise a crossing vertex would form a cut in $G'$ and the corresponding crossing
	in $G$ could be eliminated, contradicting the optimality of the drawing of $G$.  By Theorem~\ref{thm:bounded-pw},
	$G$ has path-width at most $p-c'$, and thus $G'$ has path-width at most $p$.  By Theorem~\ref{thm:boundedBond},
	$G$ does not contain more than $D-c'$ internally vertex-disjoint paths between any two vertices, and thus $G'$
	does not contain more than $D$ internally vertex-disjoint paths between any two vertices.  By Theorem~\ref{thm:boundedNest},
	$G'$ does not contain a $1$-nest of depth $m$.  Hence, by Corollary~\ref{cr:bondNestFan}, $G'$ contains
	more than $(c'+1)t$ internally disjoint paths from a vertex $v$ to distinct vertices contained in the boundary of a single face $f$ of $G'$.
	Let $Q_1$, \ldots, $Q_{c'+2}$ be disjoint paths contained in the boundary of $f$ such that, for $i=1,\ldots, c'+2$,
	$t$ of the paths $P_{i,1}$, \ldots, $P_{i,t}$ from $v$ end in $Q_i$ in order.  Let $g_i$ denote the face of $Q_i\cup P_{i,1}\cup P_{i,t}$
	containing $P_{i,2}$, \ldots, $P_{i,t-1}$.  Note that the closures of $g_1$, \ldots, $g_{c'+2}$ intersect only in $v$ 
	and since $G'$ contains at most $c'$ crossing vertices, there exists $i\in\{1,\ldots, c'+2\}$ such that no crossing vertex
	is contained in the closure of $g_i$ and $v$ is not in $Q_i$.  Hence, for $j=1,\ldots, t$, we can set $Q=Q_i$ and $P_j=P_{i,j}$.
\end{proof}

\section{Crossing-critical graphs with at most 12 crossings}
\label{sc:boundedMD} 

We now use Corollary~\ref{cr:haveFan} to prove the following ``redrawing'' lemma.

\begin{lemma}\label{lemma-redraw}
Let $G$ be a $2$-connected $c$-crossing-critical graph.  If $G$ has maximum degree greater than $f_{\ref{cr:haveFan}}(c,6c+1)$,
then there exist integers $r\ge 2$ and $k\ge 0$ such that $kr\le c-1$ and $G$ has a drawing with at most $c-1-kr+\binom{k}{2}$ crossings.
\end{lemma}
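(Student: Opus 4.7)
The plan is to invoke Corollary~\ref{cr:haveFan} with parameter $t=6c+1$, obtaining a high-degree vertex $v$, a face $f$ of an optimal drawing of $G$, a path $Q$ on the boundary of $f$, and $6c+1$ internally vertex-disjoint crossing-free paths $P_1,\ldots,P_{6c+1}$ from $v$ to distinct vertices $q_1,\ldots,q_{6c+1}$ appearing in order on $Q$. The fan region enclosed by $P_1$, $P_{6c+1}$ and the relevant subpath of $Q$ contains no crossings, and the face $f$ is a topological disk empty of graph vertices and edges, so new curves can be routed freely through $f$.

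Since $G$ is $c$-crossing-critical, for any edge $e$ we have $\crn(G-e)\le c-1$. Choosing $e$ on one of the fan paths, let $D'$ be an optimal drawing of $G-e$ with at most $c-1$ crossings. The plan is to modify $D'$ by rerouting $k$ suitably chosen ``heavy'' curves through the face $f$ (using push-offs of uncrossed fan paths as rails), and then reconstruct a drawing of $G$ by adding $e$ back along a cleaned rail. Each of the $k$ rerouted curves travels from its endpoint along a push-off of some fan path $P_i$ to a point near $q_i$, then through the empty face $f$, and out to its other endpoint. Using disjoint rails and distinct exit points on $\partial f$, the rerouted curves can be kept disjoint from the rest of $D'$, so each reroute eliminates all of its previous crossings---on average at least $r\ge 2$ each---giving a total saving of $kr$ crossings. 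Inside the disk $f$, the $k$ rerouted curves can be realised as $k$ chords between fixed boundary points, contributing at most $\binom{k}{2}$ pairwise crossings among themselves. The resulting drawing of $G$ therefore has at most $(c-1)-kr+\binom{k}{2}$ crossings, while $kr\le c-1$ records that only the $c-1$ crossings of $D'$ were available to eliminate.

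The main obstacle is the combinatorial step of choosing $k$ and $r\ge 2$. With at most $\lceil 5c/2+16\rceil$ crossings in $D'$ by Theorem~\ref{thm:Richter-Thom} and $6c+1$ fan paths available, a pigeonhole/averaging argument distinguishes two cases: either every relevant ``heavy curve'' associated with a fan path bears at most one crossing in $D'$, in which case we can take $k=0$ and $D'$ itself already satisfies the claimed bound; or some collection of $k\ge 1$ curves carries a collective number of at least $kr$ crossings for some $r\ge 2$, and these can be rerouted together through $f$ as described. The subsidiary technical point---that each rerouted curve creates no unintended crossings with unaltered edges of $D'$---follows because a push-off of a crossing-free fan path can be drawn arbitrarily close to the path itself and so inherits crossing-freeness, while $f$ is empty of graph elements; the only new crossings created are the pairwise ones inside $f$, which are bounded by $\binom{k}{2}$.
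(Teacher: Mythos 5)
Your proposal captures some of the right high-level intuitions—invoking Corollary~\ref{cr:haveFan} with $t=6c+1$, deleting an edge on a fan path to obtain $D'$ with at most $c-1$ crossings, and attributing the $\binom{k}{2}$ term to pairwise crossings among $k$ rerouted curves—but there are several genuine gaps that would prevent it from working as written.

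The most serious problem is that you conflate two different drawings. The fan paths $P_1,\ldots,P_{6c+1}$, the face $f$, and the crossing-free fan region all live in an \emph{optimal drawing of $G$}, not in $D'$. The subgraphs $P_i$ (minus the one through $e$) are present in the abstract graph $G-e$, but in $D'$ they could be drawn arbitrarily: they need not be crossing-free, need not end on a common face boundary, and in particular there is no empty ``fan region'' of $D'$ to route curves through. Your ``push-offs of uncrossed fan paths as rails'' and ``through the empty face $f$'' therefore have no clear meaning inside $D'$. The paper resolves this by a different route: it applies the pigeonhole principle (at most $c-1$ crossings in $D'$ versus $6c+1$ fan paths) to find \emph{in $D'$} two uncrossed ``frame'' subgraphs $L=T_{i_1}$ and $R=T_{i_2}$, and then works with the topological relationship between the drawing of $G$ and the drawing of $D'$ via the cyclic order of four specific edges at $v$.

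Second, you never establish $r\ge 2$; it appears in your sketch only as an assertion (``on average at least $r\ge 2$ each''). In the paper, $r$ is defined by a Menger min-cut inside the middle component $M$, and $r\ge 2$ is a \emph{derived} conclusion: the proof shows that if the cyclic order of the four edges at $v$ is unchanged between the two drawings, or if $r\le 1$ in the flipped case, then one can reconstruct a drawing of $G$ with strictly fewer than $c$ crossings, contradicting $c$-crossing-criticality of $G$. Only the surviving case (orientation flipped, $r\ge 2$) requires the mirroring operation that produces the $\binom{k}{2}$ term. This case distinction is the technical heart of the proof, and your sketch does not touch it.

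Third, your $\binom{k}{2}$ term comes from a different (and ungrounded) mechanism. In the paper, $k$ is defined as the minimum number of crossings of a path in $M-v$ with $S_1\cup S_2$ in $D'$, and the $\binom{k}{2}$ new crossings arise from a ``shrink and mirror'' operation applied to the part of $D'$ inside a closed curve $q$; the $k$ edges crossing $q$ each have to be reconnected to their outside parts, giving at most $\binom{k}{2}$ new crossings among them. Your picture of ``$k$ chords in a disk'' sounds similar numerically, but it does not explain why the inside of that region had to be mirrored (the orientation mismatch at $v$), nor where the $k$ crossings were in the first place. Finally, your $k=0$ fallback is also off: $D'$ is a drawing of $G-e$, so even with $k=0$ you still need to re-add $e$ to obtain a drawing of $G$, and the mechanism for doing so crossing-free is precisely what the rest of the argument must supply.
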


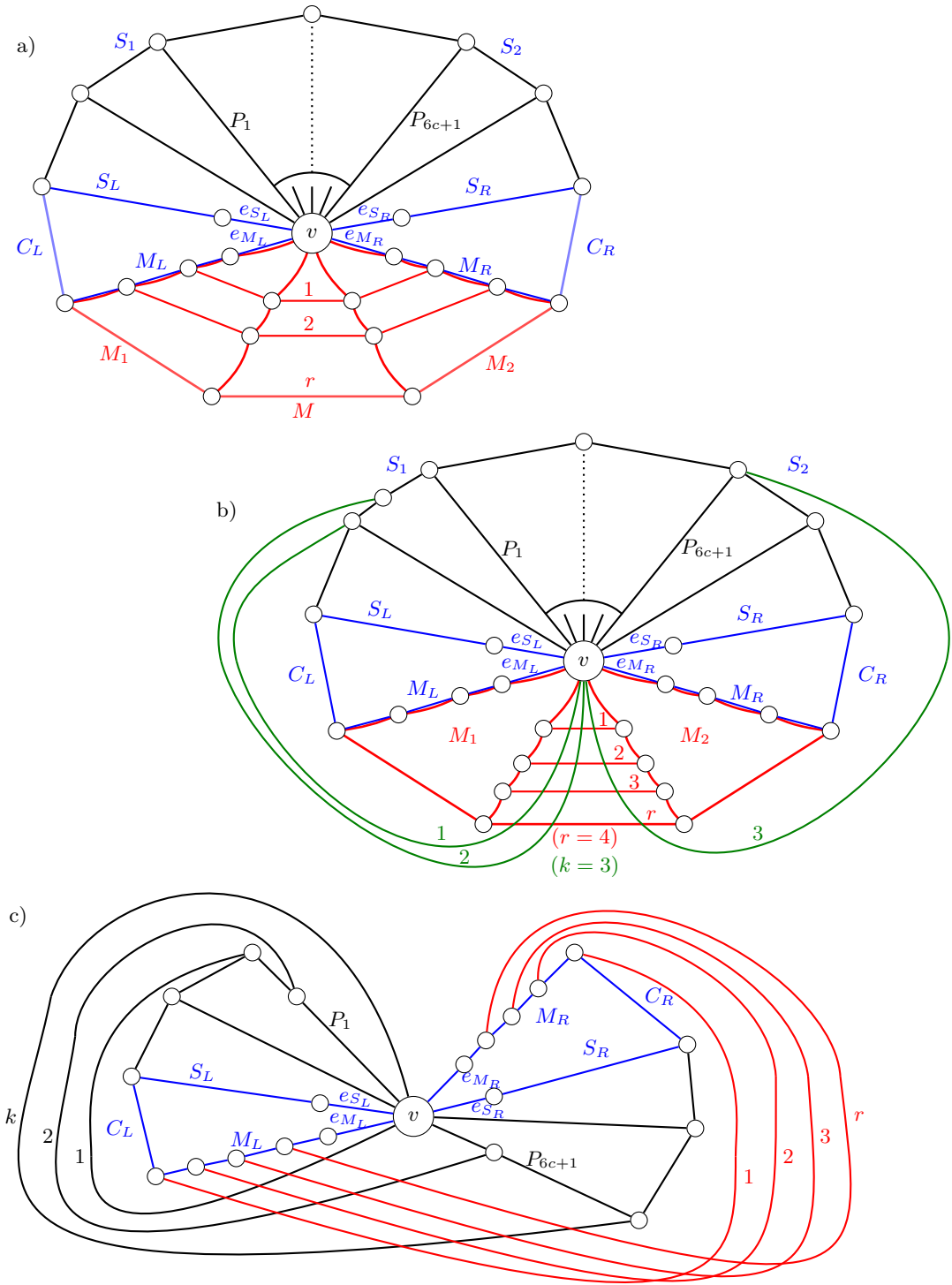
\begin{figure}[p]
 \begin{tikzpicture}[xscale=1.15, yscale=0.7]\small
  \tikzset{every node/.style={labeled}}
  \tikzset{every path/.style={thick}}

\node (v) at (0,0) {$v$};
\node[simple] (p0) at (-3,3){};
\node[simple] (p1) at (-2,4.1){};
\node[simple] (p2) at (0,4.7){};
\node[simple] (p3) at (2,4.1){};
\node[simple] (p4) at (3,3){};
\node[simple] (p5) at (3.5,1){};
\node[simple] (p51) at (1.16,0.33){};
\node[simple] (p6) at (3.2,-1.5){};
\node[simple] (p61) at (1.06,-0.5){};
\node[simple] (p62) at (1.6,-0.75){};
\node[simple] (p63) at (2.4,-1.15){};
\node[simple] (p7) at (1.3,-3.5){};
\node[simple] (p71) at (0.52,-1.45){};
\node[simple] (p72) at (0.8,-2.2){};
\node[simple] (p8) at (-1.3,-3.5){};
\node[simple] (p81) at (-0.52,-1.45){};
\node[simple] (p82) at (-0.8,-2.2){};
\node[simple] (p9) at (-3.2,-1.5){};
\node[simple] (p91) at (-1.06,-0.5){};
\node[simple] (p92) at (-1.6,-0.75){};
\node[simple] (p93) at (-2.4,-1.15){};
\node[simple] (p10) at (-3.5,1){};
\node[simple] (p101) at (-1.16,0.33){};

\tikzset{every node/.style={edge-label}}
 
\draw (v) -- (p0) -- (p1)-- (p2) -- (p3)-- (p4) --  (v);
\draw (p4) -- (p5); 
\draw (p10) -- (p0);
\draw (v) -- (p1); 
\draw (v) -- (p3);
 \draw[black,dotted] (0,1.3)--(p2);
 \draw (v)-- (-0.25,1);
 \draw (v)-- (0.25,1);
 \draw (v)-- (0,1);
 \draw (-0.5,1) .. controls (-0.25,1.4) and (0.25,1.4) ..  (0.5,1);
\draw[blue] (v) -- (p101) -- (p10);
\draw[blue] (v) -- (p91) -- (p92) -- (p93) -- (p9);
\draw[blue] (v) -- (p51) -- (p5);
\draw[blue] (v) -- (p61) -- (p62)  -- (p63) -- (p6); 
\draw[red] (p62) -- (p71) -- (p81) --  (p92);
\draw[red] (p63) -- (p72) -- (p82) --  (p93);

\begin{scope}[on background layer]	
	\draw[blue!50!white,line width=1pt] (p5.center) -- (p6.center);
	\draw[blue!50!white,line width=1pt] (p9.center) -- (p10.center);
	\draw[red!100!white,line width=1pt] (v.center)  to[bend right=14] (p71.center)  to[bend right=14] (p72.center)  to[bend right=14] (p7.center);
	\draw[red!100!white,line width=1pt] (v.center)  to[bend left=14] (p81.center)  to[bend  left=14] (p82.center)  to[bend  left=14] (p8.center);
	\draw[red!100!white,line width=1pt] (v.center)  to[bend left=14] (p91.center)  to[bend  left=14] (p92.center)  to[bend  left=14] (p93.center) to[bend left=12]  (p9.center);
	\draw[red!70!white,line width=1pt] (p6.center) -- (p7.center) -- (p8.center) -- (p9.center);	
	\draw[red!100!white,line width=1pt] (v.center)  to[bend right=14] (p61.center) to[bend right=20]  (p62.center) to[bend right=14]  (p63.center) to[bend right=14]  (p6.center);
	 \end{scope}

\coordinate [label=left:\textcolor{black}{$P_1$}] (P1) at (-0.7,2.4); 
\coordinate [label=left:\textcolor{black}{$P_{6c+1}$}] (P6) at (2,2.4);
\coordinate [label=left:\textcolor{blue}{$S_1$}] (S1) at (-2.2,4.1);
\coordinate [label=left:\textcolor{blue}{$S_2$}] (S2) at (2.8,4);
\coordinate [label=left:\textcolor{blue}{$S_L$}] (S_L) at (-2.4,1.1);
\coordinate [label=left:\textcolor{blue}{$e_{S_L}$}] (e_L) at (-0.45,0.43);
\coordinate [label=left:\textcolor{blue}{$M_L$}] (M_L) at (-1.8,-0.6);
\coordinate [label=left:\textcolor{blue}{$e_{M_L}$}] (e_M) at (-0.5,-0.1);
\coordinate [label=left:\textcolor{blue}{$C_L$}] (C_L) at (-3.4,-0.3);
\coordinate [label=left:\textcolor{blue}{$S_R$}] (S_R) at (2.4,1);
\coordinate [label=left:\textcolor{blue}{$e_{S_R}$}] (e_R) at (1.1,0.43);
\coordinate [label=left:\textcolor{blue}{$M_R$}] (M_R) at (2.4,-0.7);
\coordinate [label=left:\textcolor{blue}{$e_{M_R}$}] (e_R) at (1,-0.1);
\coordinate [label=left:\textcolor{blue}{$C_R$}] (C_R) at (4,-0.3);
\coordinate [label=left:\textcolor{red}{$1$}] (1) at (0.1,-1.2);
\coordinate [label=left:\textcolor{red}{$2$}] (2) at (0.1,-1.94);
\coordinate [label=left:\textcolor{red}{$r$}] (r) at (0.1,-3.2);
\coordinate [label=left:\textcolor{red}{$M_1$}] (M_1) at (-2.3,-2.6);
\coordinate [label=left:\textcolor{red}{$M_2$}] (M_2) at (2.7,-2.8);
\coordinate [label=left:\textcolor{red}{$M$}] (M_1) at (0.1,-3.8);
\coordinate [label=left:\textcolor{black}{a)}] (a) at (-3.48,4);
\end{tikzpicture}


\hfill
 \begin{tikzpicture}[xscale=1.15, yscale=0.7]\small

  \tikzset{every node/.style={labeled}}
  \tikzset{every path/.style={thick}}
\path[use as bounding box] (-5,-5) rectangle (4.75,5);

\node (v) at (0,0) {$v$};
\node[simple] (p0) at (-3,3){};
\node[simple] (p1) at (-2,4.1){};
\node[simple] (p01) at (-2.6,3.5){};
\node[simple] (p2) at (0,4.7){};
\node[simple] (p3) at (2,4.1){};
\node[simple] (p4) at (3,3){};
\node[simple] (p5) at (3.5,1){};
\node[simple] (p51) at (1.16,0.33){};
\node[simple] (p6) at (3.2,-1.5){};
\node[simple] (p61) at (1.06,-0.5){};
\node[simple] (p62) at (1.6,-0.75){};
\node[simple] (p63) at (2.4,-1.15){};
\node[simple] (p7) at (1.3,-3.5){};
\node[simple] (p71) at (0.52,-1.45){};
\node[simple] (p72) at (0.8,-2.2){};
\node[simple] (p73) at (1.05,-2.8){};
\node[simple] (p8) at (-1.3,-3.5){};
\node[simple] (p81) at (-0.52,-1.45){};
\node[simple] (p82) at (-0.8,-2.2){};
\node[simple] (p83) at (-1.05,-2.8){};
\node[simple] (p9) at (-3.2,-1.5){};
\node[simple] (p91) at (-1.06,-0.5){};
\node[simple] (p92) at (-1.6,-0.75){};
\node[simple] (p93) at (-2.4,-1.15){};
\node[simple] (p10) at (-3.5,1){};
\node[simple] (p101) at (-1.16,0.33){};

\tikzset{every node/.style={edge-label}}
 
\draw (v) -- (p0) -- (p01) -- (p1)-- (p2) -- (p3)-- (p4) --  (v);
\draw (p4) -- (p5); 
\draw (p10) -- (p0); 
\draw (v) -- (p1); 
\draw (v) -- (p3);
 \draw[black,dotted] (0,1.3)--(p2);
 \draw (v)-- (-0.25,1);
 \draw (v)-- (0.25,1);
 \draw (v)-- (0,1);
 \draw (-0.5,1) .. controls (-0.25,1.4) and (0.25,1.4) ..  (0.5,1);
\draw[blue] (v) -- (p101) -- (p10);
\draw[blue] (v) -- (p91) -- (p92) -- (p93) -- (p9)--(p10);
\draw[blue] (v) -- (p51) -- (p5);
\draw[blue] (v) -- (p61) -- (p62)  -- (p63) -- (p6) -- (p5); 
\draw[red] (p71) -- (p81);
\draw[red] (p72) -- (p82);
\draw[red] (p73) -- (p83);

\begin{scope}[on background layer]	
	\draw[red!100!white,line width=1pt] (v.center)  to[bend right=14] (p71.center)  to[bend right=14] (p72.center)  to[bend right=14] (p73.center)  
	to[bend right=14] (p7.center);
	\draw[red!100!white,line width=1pt] (v.center)  to[bend left=14] (p81.center)  to[bend  left=14] (p82.center)  to[bend  left=14] (p83.center) to[bend left=14] (p8.center);
	\draw[red!100!white,line width=1pt] (v.center)  to[bend left=14] (p91.center)  to[bend  left=14] (p92.center)  to[bend  left=14] (p93.center) to[bend left=12]  (p9.center);
	\draw[red!100!white,line width=1pt] (p6.center) -- (p7.center) -- (p8.center) -- (p9.center);	
	\draw[red!100!white,line width=1pt] (v.center)  to[bend right=14] (p61.center) to[bend right=20]  (p62.center) to[bend right=14]  (p63.center) to[bend right=14]  (p6.center);
	 \end{scope}
	 
 \draw[darkgreen] (v) .. controls (-0.7,-7) and (-4,-3) ..  (-4.5,0);
  \draw[darkgreen] (-4.5,0) .. controls (-4.7,1.5) and (-4,2) .. (p0);
 \draw[darkgreen] (v) .. controls (0,-8) and (-4.2,-3) ..  (-4.7,0);
 \draw[darkgreen] (-4.7,0) .. controls (-4.9,1.5) and (-4.2,3) .. (p01);
 \draw[darkgreen] (v) .. controls (0.5,-7.3) and (4.2,-3) ..  (4.7,0);
 \draw[darkgreen] (4.7,0) .. controls (4.9,1.5) and (4.2,3) .. (p3);
	 
\coordinate [label=left:\textcolor{black}{$P_1$}] (P1) at (-0.7,2.4); 
\coordinate [label=left:\textcolor{black}{$P_{6c+1}$}] (P6) at (2,2.4);

\coordinate [label=left:\textcolor{blue}{$S_1$}] (S1) at (-2.2,4.2);
\coordinate [label=left:\textcolor{blue}{$S_2$}] (S2) at (3,4.2);

\coordinate [label=left:\textcolor{blue}{$S_L$}] (S_L) at (-2.4,1.1);
\coordinate [label=left:\textcolor{blue}{$e_{S_L}$}] (e_L) at (-0.45,0.43);
\coordinate [label=left:\textcolor{blue}{$M_L$}] (M_L) at (-1.8,-0.6);
\coordinate [label=left:\textcolor{blue}{$e_{M_L}$}] (e_M) at (-0.5,-0.1);
\coordinate [label=left:\textcolor{blue}{$C_L$}] (C_L) at (-3.4,-0.3);
\coordinate [label=left:\textcolor{blue}{$S_R$}] (S_R) at (2.4,1);
\coordinate [label=left:\textcolor{blue}{$e_{S_R}$}] (e_R) at (1.1,0.43);
\coordinate [label=left:\textcolor{blue}{$M_R$}] (M_R) at (2.4,-0.7);
\coordinate [label=left:\textcolor{blue}{$e_{M_R}$}] (e_R) at (1,-0.1);
\coordinate [label=left:\textcolor{blue}{$C_R$}] (C_R) at (4,-0.3);
\coordinate [label=left:\textcolor{red}{$1$}] (C1) at (0.4,-1.25);
\coordinate [label=left:\textcolor{red}{$2$}] (C2) at (0.6,-1.97);
\coordinate [label=left:\textcolor{red}{$3$}] (C3) at (0.8,-2.6);
\coordinate [label=left:\textcolor{red}{$r$}] (C4) at (1,-3.3);
\coordinate [label=left:\textcolor{darkgreen}{$1$}] (e1) at (-1.7,-3.7);
\coordinate [label=left:\textcolor{darkgreen}{$2$}] (e2) at (-1.4,-4.2);
\coordinate [label=left:\textcolor{darkgreen}{$3$}] (e3) at (2.4,-3.7);
\coordinate [label=left:\textcolor{red}{$M_1$}] (M_1) at (-1.3,-1.6);
\coordinate [label=left:\textcolor{red}{$M_2$}] (M_2) at (1.7,-1.6);
\coordinate [label=left:\textcolor{darkgreen}{($k=3$)}] (k) at (0.51,-4.4);
\coordinate [label=left:\textcolor{red}{($r=4$)}] (r) at (0.5,-3.8);
\coordinate [label=left:\textcolor{black}{b)}] (b) at (-4.4,3.2);
\end{tikzpicture}


\vspace*{2ex}
 \begin{tikzpicture}[xscale=1.2, yscale=0.6]\small
  \tikzset{every node/.style={labeled}}
  \tikzset{every path/.style={thick}}
\path[use as bounding box] (-5,-5) rectangle (5,5);

\node (v) at (0,0) {$v$};
\node[simple] (p0) at (-3,3){};
\node[simple] (p1) at (-2,4.1){};
\node[simple] (p11) at (-1.45,3){};
\node[simple] (p3) at (2,4.1){};
\node[simple] (p31) at (0.63,1.3){};
\node[simple] (p32) at (0.9,1.9){};
\node[simple] (p33) at (1.22,2.5){};
\node[simple] (p34) at (1.55,3.2){};
\node[simple] (p4) at (3.4,1.8){};
\node[simple] (p41) at (1,0.5){};
\node[simple] (p5) at (3.5,-0.3){};
\node[simple] (p6) at (2.8,-2.6){};
\node[simple] (p63) at (1,-0.9){};
\node[simple] (p9) at (-3.2,-1.5){};
\node[simple] (p91) at (-1.06,-0.5){};
\node[simple] (p92) at (-1.6,-0.75){};
\node[simple] (p93) at (-2.2,-1.05){};
\node[simple] (p94) at (-2.7,-1.26){};
\node[simple] (p10) at (-3.5,1){};
\node[simple] (p101) at (-1.16,0.33){};

\tikzset{every node/.style={edge-label}}
 
\draw (v) -- (p0)  -- (p1);
\draw (p4) -- (p5); 
\draw (p10) -- (p0); 
\draw (v) -- (p11)--(p1); 
\draw (v) -- (p5);
\draw[blue] (v) -- (p101) -- (p10);
\draw[blue] (v) -- (p91) -- (p92) -- (p93) -- (p94)--(p9)--(p10);
\draw[blue] (v) -- (p31) -- (p32)  -- (p33) -- (p34) -- (p3)--(p4)--(p41)--(v); 
\draw (v) -- (p63) -- (p6)--(p5); 
	 
 \draw (v) .. controls (-4,-4) and (-4,-2) ..  (-4,-1);
 \draw (-4,-1) .. controls (-4,0) and (-4.3,3) .. (p1);
 \draw (p63) .. controls (-5.5,-5) and (-4.5,-2) ..  (-4.2,2);
 \draw (-4.2,2) .. controls (-4.2,4) and (-2,6.5) .. (p11);
 \draw (p6) .. controls (-6.5,-5) and (-5,-2) ..  (-4.5,3);
 \draw (-4.5,3) .. controls (-4,6) and (-1,7.7) .. (v);

 \draw[red] (p9) .. controls (4,-6) and (4,-4) ..  (4,-1);
 \draw[red] (4,-1) .. controls (4,0) and (4.3,3) .. (p3);
 \draw[red] (p94) .. controls (4.7,-5.8) and (4.5,-4.8) ..  (4.5,1);
 \draw[red] (4.5,1) .. controls (4.4,4) and (1.5,6) .. (p34);
\draw[red] (p93) .. controls (5.4,-5.7) and (5.1,-4.7) ..  (4.9,1);
\draw[red] (4.9,1) .. controls (4.7,4) and (1.5,7) .. (p33);
\draw[red] (p92) .. controls (6,-5.4) and (5.6,-4.3) ..  (5.3,1);
\draw[red] (5.3,1) .. controls (5.2,4) and (1.2,8) .. (p32);

\coordinate [label=left:\textcolor{black}{$P_1$}] (P1) at (-0.7,2.4); 
\coordinate [label=left:\textcolor{black}{$P_{6c+1}$}] (P6) at (2.1,-1.1);
\coordinate [label=left:\textcolor{blue}{$S_L$}] (S_L) at (-2.4,1.1);
\coordinate [label=left:\textcolor{blue}{$e_{S_L}$}] (e_L) at (-0.45,0.43);
\coordinate [label=left:\textcolor{blue}{$M_L$}] (M_L) at (-1.8,-0.6);
\coordinate [label=left:\textcolor{blue}{$e_{M_L}$}] (e_M) at (-0.5,-0.1);
\coordinate [label=left:\textcolor{blue}{$C_L$}] (C_L) at (-3.4,-0.3);
\coordinate [label=left:\textcolor{blue}{$S_R$}] (S_R) at (2.5,1.7);
\coordinate [label=left:\textcolor{blue}{$e_{S_R}$}] (e_R) at (1.2,0.15);
\coordinate [label=left:\textcolor{blue}{$M_R$}] (M_R) at (2,2.5);
\coordinate [label=left:\textcolor{blue}{$e_{M_R}$}] (e_R) at (1.15,0.9);
\coordinate [label=left:\textcolor{blue}{$C_R$}] (C_R) at (3.3,3);
\coordinate [label=left:\textcolor{red}{$1$}] (r1) at (4.3,-1.5);
\coordinate [label=left:\textcolor{red}{$2$}] (r2) at (4.79,-1);
\coordinate [label=left:\textcolor{red}{$3$}] (r3) at (5.25,-0.5);
\coordinate [label=left:\textcolor{red}{$r$}] (rr) at (5.65,0);
\coordinate [label=left:\textcolor{black}{$1$}] (b1) at (-4,-1);
\coordinate [label=left:\textcolor{black}{$2$}] (b2) at (-4.4,-0.5);
\coordinate [label=left:\textcolor{black}{$k$}] (bk) at (-4.85,0);
\coordinate [label=left:\textcolor{black}{c)}] (c) at (-4.7,5);
\end{tikzpicture}

 \caption{An illustration of the proof of Lemma \ref{lemma-redraw}. 
~a) The original optimal
drawing of $G$, with subdrawings of $M_1$ and $M_2$ (red) that will be glued into the drawing of $G_0$ from an
optimal drawing of $G-e$. 
~b) A drawing of $G$ with at most $c-1$ crossings, obtained from $G_0$
(black, blue, green) and $M_1,M_2$ (red). 
~c) A drawing of $G$ with at most $\binom{c-1-kr+{k}}{2}$ crossings, obtained
from $G_0$ (black, blue) and $M_1,M_2$ (red).}
 \label{fig:lemma-redraw}

\end{figure}

\begin{proof}
\doneTodo{Referee 2: 
L147:  You need to introduce v as the common end of the Pi.\\
Response: Done as suggested. DB.}

Consider an optimal drawing of $G$. Let $P_1$, \ldots, $P_{6c+1}$ be paths obtained using Corollary~\ref{cr:haveFan} and $v$ their common end vertex.
For $2\le i\le 6c-1$, let $T_i$ denote the $2$-connected block of $G-((V(P_{i-1})\cup V(P_{i+2}))\setminus \{v\})$
containing $P_i$ and $P_{i+1}$, and let $C_i$ denote the cycle bounding the face of $T_i$ containing $P_{i-1}$.
Note that if $2\le i$ and $i+3\le j\le 6c-1$, then $G-V(T_i\cup T_j)$ has at most three components: one containing $P_{i+2}-v$,
one containing $P_1-v$, and one containing $P_{6c+1}-v$, where the latter two components can be the same.

Let $e$ be the edge of $P_{3c+1}$ incident with $v$ and let $G'$ be an optimal drawing of $G-e$.
Since $G$ is $c$-crossing-critical, $G'$ has at most $c-1$ crossings.  Hence, there exist indices $i_1$ and $i_2$
such that $2\le i_1\le 3c-1$, $3c+2\le i_2\le 6c-1$, and none of the edges of $T_{i_1}$ and $T_{i_2}$ is crossed.
Let us set $L=T_{i_1}$, $C_L=C_{i_1}$, $R=T_{i_2}$, and $C_R=C_{i_2}$. Let $M$, $S_1$, and $S_2$ denote the subgraphs
of $G$ consisting of the components of $G-V(L\cup R)$ containing $P_{3c+1}-v$, $P_1-v$, and $P_{6c+1}-v$, respectively,
together with the edges from these components to the rest of $G$ and their incident vertices (where possibly $S_1=S_2$).
Let $S_L$ and $M_L$ be subpaths of $C_L$ of length at least one intersecting in $v$ such that $V(S_1\cap C_L)\subseteq V(S_L)$ and $V(M\cap C_L)\subseteq V(M_L)$.
Analogously, let $S_R$ and $M_R$ be subpaths of $C_R$ of length at least one intersecting in $v$ such that $V(S_2\cap C_R)\subseteq V(S_R)$ and $V(M\cap C_R)\subseteq V(M_R)$.
See Figure~\ref{fig:lemma-redraw}.
\doneTodo{Referee 4: 
line 160: ... such that $V(S_1 \cap C_L)$ ... I think that you meant to relate vertex sets rather than graphs.\\
Response DB: The suggested formulation is weaker and gives more freedom when multiple edges are involved. As this is the case, we adopt the recommendation as suggested, although essentially they follow from the stronger assumptions. Zdenek, please review this one.}

\doneTodo{Referee 1: 
Page 5, line 163: We can assume without loss of generality ...
\\Response: Done (Drago). }

We can assume without loss of generality (by circle inversion of the plane if
necessary) that neither $C_L$ nor $C_R$ bounds the outer face of $C_L\cup C_R$ in the drawings inherited from $G$ and from $G'$.
Let $e_{M_L}$, $e_{S_L}$, $e_{S_R}$, $e_{M_R}$ be the clockwise cyclic order of the edges of $C_L\cup C_R$ incident with $v$ in
the drawing $G$, where $e_Q\in E(Q)$ for every $Q\in \{M_L,S_L,S_R,M_R\}$.
By the same argument, we can assume that the clockwise cyclic order
of these edges in the drawing of $G'$ is either the same or $e_{M_L}$, $e_{S_L}$, $e_{M_R}$, $e_{S_R}$.

\doneTodo{Referee 4: 
line 172: Please, explain the notion of "rearranging drawing".\\
Response: Done in a footnote. Zdenek, please review this one. }
In $G$, $L$ is drawn in the closed disk bounded by $C_L$, $R$ is drawn in the closed disk bounded by $C_R$, and
$M$, $S_1$, and $S_2$ together with all the edges joining them to $v$ are drawn in the outer face of $C_L\cup C_R$.
Since $C_L$ and $C_R$ are not crossed in the drawing $G'$, we can if necessary rearrange the drawing of $G'$ without creating any new
crossings\footnote{As $G$ is not necessarily $3$-connected, it is possible that some $2$-connected components or some edges
of $L,R$ are drawn in the exterior of the disk bounded by $C_L$, $C_R$. However, these
can be flipped into the interior of $C_L$, $C_R$, and after such \textsl{rearranging}, 
$C_L$, $C_R$ bound the outer face of the drawings of $L$, $R$. Similarly, if $S_1\neq S_2$, either of them could be in the interior of $C_L, C_R$, and we flip them into the exterior, so that the interior of $C_L, C_R$ contains only drawings of $L$, $R$, respecitvely.} 
so that the same holds for the drawings of $L$, $R$, $M$, $S_1$, and $S_2$ in $G'$.  Let $r\ge 1$ denote
the maximum number of pairwise edge-disjoint paths in $M-v$ from $V(M\cap C_L-v)$ to $V(M\cap C_R-v)$.
By Menger's theorem, $M-v$ has disjoint induced subgraphs $M'_1$ and $M'_2$ such that $V(M-v)=V(M'_1)\cup V(M'_2)$,
$V(M\cap C_L-v)\subseteq V(M'_1)$, $V(M\cap C_R-v)\subseteq V(M'_2)$, and $G$ contains exactly $r$ edges with one end in $M'_1$
and the other end in $M'_2$.  For $i\in \{1,2\}$, let $M_i$ be the subgraph of $M$ induced by $V(M'_i)\cup \{v\}$.
Let $F$ be a path in $M-v$ from $V(M\cap C_L-v)$ to $V(M\cap C_R-v)$ that has in the drawing $G'$ the smallest number of
intersections with the edges of $S_1\cup S_2$, and let $k$ denote the number of such intersections.  Let $G_0$ denote the drawing
$G'-(V(M)\setminus V(M_L\cup M_R))$.  Since $M-v$ contains $r$ pairwise edge-disjoint paths from $V(M\cap C_L-v)$ to $V(M\cap C_R-v)$
and each of them crosses $S_1\cup S_2$ at least $k$ times, we conclude that $G'$ has at least $kr$ crossings
(and thus $kr\le c-1$) and $G_0$ has at most $c-1-kr$ crossings.

Suppose first that edges of $C_L\cup C_R$ incident with $v$ are in $G'$ drawn in the same clockwise cyclic order as in $G$.
We construct a new drawing of the graph $G$ in the following way:  Start with the drawing of $G_0$.  Take the plane drawings of $M_1$
and $M_2$ as in $G$, ``squeeze'' them and draw them very close to $M_L$ and $M_R$, respectively, so that they do not intersect any edges
of $G_0$.  Finally, draw the $r$ edges between $M_1$ and $M_2$ very close to the curve tracing $F$ (as drawn in $G'$), so that
each of them is crossed at most $k$ times.  This gives a drawing of $G$ with at most $(c-1-kr)+kr<c$ crossings, contradicting
the assumption that $G$ is $c$-crossing-critical.

\doneTodo{Referee 4: 
line 194: Please, explain the notion of "mirrored version".\\
Response: Done in a footnote. DB. }
Hence, we can assume that the edges of $C_L\cup C_R$ incident with $v$ are in $G'$ drawn in the clockwise order
$e_{M_L}$, $e_{S_L}$, $e_{M_R}$, $e_{S_R}$.  If $r=1$, then proceed analogously to the previous paragraph, except that a mirrored
version \footnote{\textsl{Mirrored version of a drawing} is the drawing obtained by reversing the vertex rotations of edges around every vertex and every crossing, and embedding the edges and the vertices accordingly. The name explains that this is homeomorphic to the original drawing seen in a mirror.} of the drawing of $M_2$ is inserted close to $M_R$; as there is only one edge between $M_1$ and $M_2$, this does
not incur any additional crossings, and we again conclude that the resulting drawing of $G$ has fewer than $c$ crossings, a contradiction.
Therefore, $r\ge 2$.

\doneTodo{Referee 4: 
line 200: ... in $v$ and the RELATIVE INTERIORS of the edges ... \\
Response: done as suggested. }
Consider the drawing $G'$, and let $q$ be a closed curve passing through $v$, following $M_L$ slightly outside $C_L$ till it meets
$F$, then following $F$ almost till it hits $M_R$, then following $M_R$ slightly outside $C_R$ till it reaches $v$.  Note that
$q$ only crosses $G_0$ in $v$ and in relative interiors of the edges, and it has at most $k$ crossings with the edges.  Shrink and mirror the part of the drawing of $G_0$
drawn in the open disk bounded by $q$, keeping $v$ at the same spot and the parts of edges crossing $q$ close to $q$; then reconnect
these parts of the edges with their parts outside of $q$, creating at most $\binom{k}{2}$ new crossings in the process.
Observe that in the resulting re-drawing of $G_0$, the path $M_L\cup M_R$ is contained in the boundary of a face (since $q$ is drawn close to it
and nothing crosses this part of $q$), and thus we can add $M$
planarly (as drawn in $G$) to the drawing without creating any further crossings.  Therefore, the resulting
drawing has at most $c-1-kr+\binom{k}{2}$ crossings.
\end{proof}

It is now easy to prove Theorem~\ref{th:boundedMD}.

\begin{proof}[Proof of Theorem~\ref{th:boundedMD}]
We prove by induction on $c$ that, for every positive integer $c\le 12$, there exists an integer $\Delta_c$ such that
every $c$-crossing-critical graph has maximum degree at most $c$.  The only $1$-crossing-critical graphs are subdivisions
of $K_5$ and $K_{3,3}$, and thus we can set $\Delta_1=4$.  Suppose now that $c\ge 2$ and the claim holds for every smaller value.
We define $\Delta_c=\max(2\Delta_{c-1},f_{\ref{cr:haveFan}}(c,6c+1))$.  Let $G$ be a $c$-crossing-critical graph and suppose
for a contradiction that $\Delta(G)>\Delta_c$.

If $G$ is not $2$-connected, then it contains induced subgraphs $G_1$ and $G_2$ such that $G_1\neq G\neq G_2$, $G=G_1\cup G_2$,
and $G_1$ intersects $G_2$ in at most one vertex.  Then $c\le \crn(G)=\crn(G_1)+\crn(G_2)$, and for every edge $e\in E(G_1)$ we have
$c>\crn(G-e)=\crn(G_1-e)+\crn(G_2)$.  Hence, $\crn(G_1)\ge c-\crn(G_2)$ and $\crn(G_1-e)<c-\crn(G_2)$ for every edge $e\in E(G_1)$, and thus $G_1$
is $(c-\crn(G_2))$-crossing-critical.  Similarly, $G_2$ is $(c-\crn(G_1))$-crossing-critical.  Since $\crn(G_1)\ge 1$
and $\crn(G_2)\ge 1$, it follows by the induction hypothesis that $\Delta(G_i)\le \Delta_{c-1}$ for $i\in\{1,2\}$, and thus
$\Delta(G)\le \Delta_c$, which is a contradiction.
\doneTodo{Referee 1: 
Page 6, line 220: $G_2$ is $(c-cr(G_1))$-crossing critical (NOT $cr(G_2)$)
\\Response: Corrected as recommended. }

Hence, $G$ is $2$-connected.  By Lemma~\ref{lemma-redraw}, there exist integers $r\ge 2$ and $k\ge 0$ such that
$kr\le c-1$ and $c-1-kr+\binom{k}{2}\ge c$, and thus $\binom{k}{2}\ge kr+1\ge 2k+1$.  This inequality is only satisfied
for $k\ge 6$, and thus the first inequality implies $c\ge kr+1\ge 13$.  This is a contradiction.  Hence, the maximum degree
of $G$ is at most $\Delta_c$.
\end{proof}

\section{Explicit $\mathbf{13}$-crossing-critical graphs with large degree}
\label{sc:construction} 

We define the following family of graphs, which is illustrated in
Figure~\ref{fig:G13-labeled}.
To simplify the terminology and the pictures, we introduce ``thick edges'':
for a positive integer $t$, we say that $uv$ is a {\em$t$-thick edge},
or an edge of {\em thickness~$t$}, if there is a bunch of $t$ parallel edges
between $u$ and $v$.
Naturally, if a $t_1$-thick edge crosses a $t_2$-thick edge, then this
counts as $t_1t_2$ ordinary crossings.
By routing every parallel bunch of edges along the ``cheapest'' edge
of the bunch, we get the following important folklore claim:

\begin{claim}\label{cl:thickedge}
For every graph $G$, 
there exists an optimal drawing $\ca D$ of $G$, 
such that every bunch of parallel edges is drawn as one thick edge in~$\ca D$.
\end{claim}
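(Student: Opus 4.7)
\medskip

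\noindent\textbf{Proof plan for Claim~\ref{cl:thickedge}.}
The plan is to start from any optimal drawing $\ca D_0$ of $G$ and locally reroute each bunch of parallel edges so that they form a bundle, while arguing that the total number of crossings can only decrease (and hence, by optimality, stays the same). Fix one bunch $B=\{e_1,\dots,e_t\}$ of $t$ parallel edges between vertices $u$ and $v$. For every $e_i\in B$ let $c(e_i)$ be the number of crossings of $e_i$ with edges \emph{outside} $B$ in $\ca D_0$, and let $x$ be the number of crossings between pairs of edges \emph{inside} $B$. Let $e^*\in B$ achieve the minimum value of $c$.

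Next, replace the drawings of $e_1,\dots,e_t$ by $t$ pairwise disjoint simple curves from $u$ to $v$ drawn inside a sufficiently thin tubular neighborhood of the curve of $e^*$ (thin enough so that the neighborhood avoids every vertex of $G$ other than $u,v$ and every crossing of $\ca D_0$ not on $e^*$). The rerouted edges pairwise do not cross, and each of them crosses the same edges of $G\setminus B$ as $e^*$ did, contributing exactly $c(e^*)$ crossings each. Hence the contribution of $B$ to the total number of crossings changes from $\sum_{i=1}^t c(e_i)+x$ to $t\cdot c(e^*)$, and since $c(e^*)\le c(e_i)$ for every $i$ and $x\ge 0$, the new drawing has at most as many crossings as $\ca D_0$; by optimality of $\ca D_0$ it is still optimal, and the bunch $B$ is now drawn as a single thick edge.

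Finally, iterate the procedure over all bunches of parallel edges. Because rerouting a bunch $B$ only modifies the curves of edges in $B$ and stays inside a tubular neighborhood that avoids all other vertices and crossings, it does not damage the ``thick edge'' form previously achieved for other bunches. After finitely many steps, we obtain an optimal drawing $\ca D$ in which every bunch of parallel edges is drawn as one thick edge, as required.

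The only real obstacle is the topological bookkeeping in the rerouting step, namely choosing the tubular neighborhood of $e^*$ so thin that it contains no vertices of $G$ and no crossings of $\ca D_0$ except those on $e^*$ itself, and then drawing $t$ disjoint near-parallel copies of $e^*$ inside it while preserving the general-position conditions of Definition~\ref{def:crossingn} (no edge through a vertex, no triple crossings). This is standard and can be done because $e^*$ is a simple curve whose intersections with the rest of the drawing form a finite set of points.
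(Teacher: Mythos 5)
Your proposal is correct and follows essentially the same idea the paper gives just before the claim, namely routing every parallel bunch along the ``cheapest'' edge of that bunch; the paper states this as folklore without writing out the tubular-neighborhood argument, while you spell out the bookkeeping (intra-bunch crossings $x\ge 0$ and the minimality of $c(e^*)$) that makes the crossing count non-increase, and you correctly note that the local rerouting does not disturb other already-rerouted bunches.
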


\begin{figure}
 \centering
 \begin{tikzpicture}[scale=1.3]
   \path[use as bounding box] (-5.4,-1.8) rectangle (4.4,4);
  \tikzset{every node/.style={labeled}}
  \tikzset{every path/.style={thick}}
  \node[fill=black!10!white] (x) at (-0.92,0) {$x_1$};
  \node[fill=black!10!white,
  label=below:{{\small\hspace*{-6ex}$(x_2\!=\!x_0^{k\!-\!1}\!\!=\!x_0^k)$\hspace*{-6ex}}}] (xx) at (-0.08,0) {$x_2$};
  \node (u1) at (-2,0) {$u_1$};
  \node (u2) at (-3,0) {$u_2$};
  \node (u3) at (-4,0) {$u_3$};
  \node (u4) at (-5,0) {$u_4$};
  \node[label=above:{{\small\hspace*{-6ex}($u_5=w_2^1$)}}] (u5) at (-4.5,1.5) {$u_5$};
  \node (v1) at (1,0) {$v_1$};
  \node (v2) at (2,0) {$v_2$};
  \node (v3) at (3,0) {$v_3$};
  \node (v4) at (4,0) {$v_4$};
  \node[label=above:{{\small($w_3^k=v_5$)\hspace*{-6ex}}}] (v5) at (3.5,1.5) {$v_5$};
  \node[label=left:{{\small($w_3^{k-2}\!=w^{k-1}_2$)~~~}}] (h1) at (-1,3.75) {$w^{k\!-\!1}_2$};
  \node (h2) at (-0.3,2.5) {$w^{k\!-\!1\!}_1$};
  \node (h3) at (0.75,2.44) {$w^{k\!-\!1\!}_4$};
  \node[label=right:{{\small~~($w_3^{k-1}\!=w^k_2$)}}] (k1) at (1.7,3.55) {$w^k_2$};
  \node (k2) at (1.6,2.1) {$w^{k}_1$};
  \node (k3) at (2.2,1.5) {$w^{k}_4$};
  \tikzset{every node/.style={edge-label}}
  \draw[red] (x) -- (u5) -- node {$4$} (u4) -- node {$3$} (u3) -- node {$4$}
   (u2) -- node {$5$} (u1) -- node {$7$} (x) -- node {$7$} (xx) -- node {$7$} (v1) -- node {$5$}
   (v2) -- node {$4$} (v3) -- node {$3$} (v4) -- node {$4$} (v5) -- (xx);
  \draw[blue]
   (u2) edge[bend right=70] (v3)
   (u3) edge[bend right=70] (v2)
   (u1) edge[out=300,in=205] node {2} (v4)
   (v1) edge[out=255,in=335] node {2} (u4)
  ;
  \draw (h1) edge[thick] node {2} (h2) (h2) -- (h3) (h3) edge[thick] node {2} (k1) (k1) -- (h1);
  \draw (k1) edge[thick] node {2} (k2) (k2) -- (k3) (k3) edge[thick] node {2} (v5) (v5) -- (k1);
  \draw (h2) -- (xx) -- (h3);
  \draw (k2) -- (xx) -- (k3);
  \begin{scope}[on background layer]
   \draw[fill=black!20!white] (x.center) to[bend right=16] (u5.center) -- (h1.center) to[bend right=16] cycle;
  \end{scope}
  \node[rotate=33, draw=none, fill=none] (label) at (140:3) {\huge$\cdots$};
 \end{tikzpicture}

 \caption{The graph $\ccgn{13}{k-2,2}$ of Definition~\ref{def:ccg13k}, drawn with
  $13$ crossings.
  The thick edges of this graph have their thickness written as numeric labels,
  and all the unlabeled edges are of thickness~$1$.
  The bowtie part of this graph is drawn in red and blue (where blue edges
  are those between $u_i$ and $v_j$ vertices), and the wedges are drawn in black.
  Only the $(k-1)$-th and $k$-th wedges (incident to~$x_2$) are detailed,
  while the remaining $k-2$ wedges to the left (which are in this example
  all incident to~$x_1$) analogously span the grey shaded area.}
 \label{fig:G13-labeled}
\doneTodo{Referee 4: 
Figure 1: Please, draw at least two $D_i$'s. \\
Response: done PH. }
\end{figure}
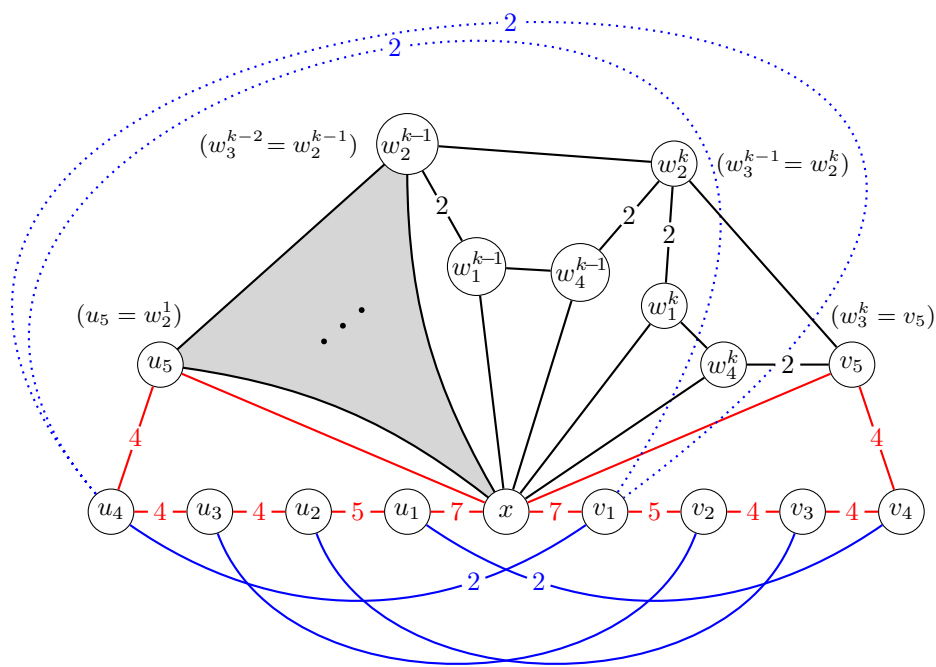

\begin{definition}[Critical family $\{\ccgn{13}{k_1,\ldots,k_m}\}$]\label{def:ccg13k}
Let $m\geq1$ and $k,k_1,k_2,\ldots,k_m\geq1$ be integers such
that~$k=k_1+\ldots+k_m$.
Let $C_u$ be a $6$-cycle on the vertex set $\{x_1,u_1,u_2,u_3,u_4,u_5\}$ 
with (thick) edges $x_1u_1$, $u_1u_2$, $u_2u_3$, $u_3u_4$, $u_4u_5$, $u_5x_1$
which are of thickness~$7,5,4,3,4,1$ in this order.
Analogously, let $C_v$ be a $6$-cycle on the vertex set
$\{x_m,v_1,v_2,v_3,v_4,v_5\}$ isomorphic to $C_u$ in this order of vertices.
Let $P_x$ be a path of length $m-1$ on the vertices $(x_1,x_2,\ldots,x_m)$
in this order and with all edges of thickness~$7$.
We denote by $B$ the graph obtained from the union $C_u\cup P_x\cup C_v$
(identifying the vertex $x_1$ of $C_u$ and $P_x$ and the vertex $x_m$ of
$C_v$ and~$P_x$) by adding edges $u_2v_3$ and $u_3v_2$,
and $2$-thick edges $u_1v_4$ and $u_4v_1$.

Let $D_i$, for $i\in\{1,\ldots,k\}$, denote the graph
on the vertex set $\{x_0^i,w_1^i,w_2^i,w_3^i,w_4^i\}$
with the edges $x_0^iw_1^i$, $x_0^iw_4^i$, $w_1^iw_4^i$, $w_2^iw_3^i$
and the $2$-thick edges $w_1^iw_2^i$ and $w_3^iw_4^i$.
From the union $B\cup D_1\cup\ldots\cup D_k$ 
we obtain the graph $\ccgn{13}{k_1,\ldots,k_m}$ via
\begin{itemize}
\item identifying $u_5$ with $w_2^1$ and $w_3^k$ with $v_5$,
\item for $i=2,3,\ldots,k$, identifying $w_3^{i-1}$ with $w_2^{i}$, and
\item for $j=1,2,\ldots,m$ and all $i$ such that
$k_1+\ldots+k_{j-1}+1\leq i\leq k_1+\ldots+k_j$,
identifying $x_0^i$ with $x_j$ of the path~$P_x$.
\end{itemize}

This definition is illustrated in Figure~\ref{fig:G13-labeled} for~$m=2$.
For reference, we will call the graph $B$ the {\em bowtie} of 
$\ccgn{13}{k_1,\ldots,k_m}$,
and the graph $D_i$ the $i$-th {\em wedge} of $\ccgn{13}{k_1,\ldots,k_m}$.
\end{definition}

\begin{observation}\label{ob:basics13n}
a) For every $m\geq1$ and $k_1,k_2,\ldots,k_m\geq1$, the graph 
$\ccgn{13}{k_1,\ldots,k_m}$ is $3$-connected and non-planar.
\\b) For $1<j<m$, the degree of the vertex $x_j$ equals~$2k_j+14$,
and the degree of the vertices $x_1$ and $x_m$ equals~$2k_1+15$
and~$2k_m+15$, respectively.
\end{observation}

In order to prove Theorem~\ref{th:construction13}, 
it is enough to consider the graph $G=\ccgn{13}{k_1,\ldots,k_m}$
for $m\geq2$ and $k_1=\dots=k_m=\lfloor d/2\rfloor$, and prove that
$\crn(G)\geq13$ and that, for every edge $e$ of $G$, we get
$\crn(G-e)\leq12$.
Before stepping into the proof, we remark that this does not hold for $m=1$
since $\crn(\ccgn{13}k)\leq12$ for all~$k$
(readers aware of the earlier conference paper \cite{bokal2019bounded} should note
that the similarly looking construction in \cite{bokal2019bounded} had the edges
$u_3u_4$ and $v_3v_4$ of thickness $4$ instead of~$3$).

\begin{lemma}\label{lem:base23}
$\crn(\ccgn{13}{1,1})=13$.
\end{lemma}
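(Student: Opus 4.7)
The plan is to prove the upper bound $\crn(\ccg{13}k)\le 13$ and the lower bound $\crn(\ccg{13}k)\ge 13$ separately for $k\in\{2,3\}$. The upper bound follows directly from the drawing displayed in Figure~\ref{fig:G13-labeled}: by Claim~\ref{cl:thickedge} each bunch of parallel edges can be drawn as a single thick edge, the wedges $D_1,\ldots,D_k$ are placed planarly in the face near $x$, and all crossings occur inside the bowtie $B$. A weighted count between the four diagonal edges $u_2v_3,u_3v_2,u_1v_4,u_4v_1$ and the thick edges of the cycles $C_u$ and $C_v$ yields exactly $13$ crossings, which in particular establishes $\crn(\ccg{13}k)\le 13$ for the two specific values $k\in\{2,3\}$.

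For the lower bound, the plan is to take an arbitrary drawing $\ca D$ of $\ccg{13}k$, again reduced to thick edges by Claim~\ref{cl:thickedge}, and to argue that at least $13$ weighted crossings are necessary. First one should show that the bowtie $B$ alone forces most of the required crossings: on the branch vertex set $\{x\}\cup\{u_1,\ldots,u_5\}\cup\{v_1,\ldots,v_5\}$ one locates a $K_{3,3}$-subdivision, which must host at least one crossing in $\ca D$; by enumerating the topologically distinct routings of the four diagonals relative to the nested versus non-nested placements of $C_u$ and $C_v$ and to the chosen rotation at $x$, one lifts this single crossing to the full weighted lower bound using the large edge thicknesses $7,5,4,4,4$. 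Second, one should argue that the chain of wedges $D_1,\ldots,D_k$, which attach to $B$ at the $2$-vertex separators $\{x,u_5\}, \{x,w_3^{1}\},\ldots,\{x,v_5\}$, cannot be inserted into the drawing of $B$ without contributing further crossings: the wedge chain occupies certain faces incident with $x$ and forces the rotation of $B$ around $x$ to avoid them, which closes off the cheap routings of the diagonals.

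The main obstacle is the case analysis in the lower-bound step, since the four-edge diagonal structure of $B$ admits many topologically distinct routings, and one must track how the forced crossings within $B$ combine with the contributions from the wedge chain so that the total weighted crossing count reaches exactly $13$ rather than a weaker value. For $k=3$ an additional subtlety is that more wedges yield more ways to absorb the rotation at $x$ without crossings, so one cannot simply quote the $k=2$ case; instead, one shows that some wedge in the middle of the chain is forced to lie in a face that compels additional crossings. Because $\ccg{13}2$ and $\ccg{13}3$ have only $16$ and $19$ vertices respectively (suppressing thickness), a computer-assisted certificate using an exact crossing number solver of the type developed by Wiedera and collaborators is a viable check on, or substitute for, the manual case analysis, and gives the lemma in both cases.
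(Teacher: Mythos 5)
Your fallback to a computer-assisted certificate is exactly what the paper does for the lower bound: it invokes the Crossing Number Web Compute tool of Chimani and Wiedera, which solves an ILP based on Kuratowski constraints via branch-and-cut-and-price and emits a machine-readable certificate that $\crn(\ccg{13}2)\ge 13$ and $\crn(\ccg{13}3)\ge 13$; the upper bound is read off the drawing in Figure~\ref{fig:G13-labeled}. So your proposal lands on the paper's approach. However, be aware that the manual lower-bound sketch you devote most of the proposal to is not carried out in the paper and, as written, would not constitute a proof: locating a single $K_{3,3}$-subdivision forces only one crossing, and the step where you ``lift this single crossing to the full weighted lower bound using the large edge thicknesses'' conceals the entire difficulty (you yourself flag the case analysis as the main obstacle). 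The wedge-chain part of the sketch is similarly undeveloped. In the paper the computer certificate is not a sanity check on a manual argument; it \emph{is} the lower-bound argument, and the certificate for $k=3$ alone spans over two thousand branching cases. The manual induction that reduces general $k$ to the base cases appears only in the subsequent Lemma~\ref{lem:inductg}, not here.
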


\begin{proof}
Figure~\ref{fig:G13-labeled} outlines a drawing of $\ccgn{13}{k_1,k_2}$
with $13$ crossings for all $k_1,k_2\geq1$.
For the lower bound on $\crn(\ccgn{13}{1,1})$, 
we use the computer tool {\em Crossing Number Web
Compute}~\cite{chimaniWiedera16} which uses an
ILP formulation of the crossing number problem (based on Kuratowski
subgraphs), and solves it via a branch-and-cut-and-price routine.
Moreover, this computer tool generates machine-readable proofs%
\footnote{See \url{http://crossings.uos.de/job/zS43gWV2yd-ZKmit_DNwSg}.
 Vertices~$x_1$ and $x_2$ are labeled~$0$ and $6$, respectively.
 Cycles $C_u$ and $C_v$ traverse vertices~$0,1,2,3,4,5$ and~$6,7,8,9,10,11$ in that order, respectively.
}
of the lower bound, which (roughly) consist of a branching tree 
in which every leaf holds an LP formulation of selected Kuratowski
subgraphs certifying that, in this case, the crossing number must be greater
than~$12$.
\end{proof}

\begin{remark}
Subsequently to finishing this paper, Hlin\v{e}n\'y and Korbela have found a
relatively short self-contained and computer-free proof \cite{G13handwritten}
of Lemma~\ref{lem:base23}.
\end{remark}

\begin{figure}
a)%
 \newcommand{\contour}[5]{\draw[fill=#2!50!white,draw=none,shift=(#1.center),xscale=.8333] (0,0) -- +(#3:#5) arc (#3:#4:#5);}
 \begin{tikzpicture}[xscale=1.2]
  \tikzset{every node/.style={labeled}}
  \tikzset{every path/.style={thick}}

  \node (x) at (0,0) {$x_1$};
  \node (u5) at (-3.6,2) {$u_5$};
  \node (v5) at (3.6,2) {$\!w^{k_1}_3\!$};
  \node (a1) at (-2.2,3.2){$w^i_2$};
  \node (a2) at (0,3.7){$w^i_3$};
  \node (a3) at (2.2,3.2){$\!w^{i\!+\!1\!}_3$}; 
  \node (b3) at (-1.2,2.1){$w^i_1$};
  \node (b4) at (-0.4,2.3){$w^i_4$};
  \node (b6) at (1.2,2.1){$\!w^{i\!+\!1\!}_4$};
  \node[simple] (b5) at (0.4,2.3){$\!w^{i\!+\!1\!}_1$};
  \tikzset{every node/.style={edge-label}}
 
 \begin{scope}[on background layer]
	\draw[blue!50!white,line width=5pt] (a1.center) -- (a2.center) -- (a3.center);
	\contour{a2}{blue}{0}{180}{.43}
	\draw[darkgreen!50!white,line width=5pt] (b3.center) -- (b4.center) -- (a2.center)
		 -- (b5.center)  -- (b6.center);
	\contour{b4}{darkgreen}{190}{60}{.42}
	\contour{a2}{darkgreen}{-100}{-80}{.43}
	\contour{b5}{darkgreen}{120}{-10}{.44}
 \end{scope}

\draw[red] (u5) to[bend right=10] (x);
\draw (a1) -- (a2) -- (a3);
\draw (b3) -- (x) -- (b4);
\draw (b5) -- (x) -- (b6);
\draw  (a1) edge[thick] node {2} (b3);
\draw (b3) -- (b4) edge[thick] node {2} (a2) (a2) edge[thick] node {2} (b5) (b5)--(b6);
\draw (b6) edge[thick] node {2} (a3);

\begin{scope}[on background layer]
\draw[fill=black!16!white] (x.center) to[bend right=10] (u5.center) -- (a1.center) to[bend right=10] cycle;
\end{scope}

\begin{scope}[on background layer]
\draw[fill=black!16!white] (x.center) to[bend right=10] (a3.center) -- (v5.center) to[bend right=10] cycle;
\end{scope}

 \node[rotate=35, draw=none, fill=none] (label) at (138:3) {\huge$\cdots$};
 \node[rotate=-35, draw=none, fill=none] (label) at (41:3) {\huge$\cdots$};
\end{tikzpicture}

\vspace*{-13ex}\par\hspace*{.3\hsize}
b)%
 \begin{tikzpicture}[xscale=1.2]
  \tikzset{every node/.style={labeled}}
  \tikzset{every path/.style={thick}}

  \node (x) at (0,0) {$x_1$};
  \node (u5) at (-3.6,2) {$u_5$};
  \node (v5) at (3.6,2) {$\!w^{k_1}_3\!$};
  \node (a1) at (-2.2,3.2){$w^i_2$};
  \node (a2) at (0,3.7){$w^i_3$};
  \node (a3) at (1.2,3.3){$\!w^{i\!+\!1\!}_3$}; 
  \node (b3) at (-1.2,2.1){$w^i_1$};
  \node (b4) at (-0.4,2.3){$w^i_4$};
  \node (b6) at (1.8,4.6){$\!w^{i\!+\!1\!}_4$};
  \node[simple] (b5) at (1,4.8){$\!w^{i\!+\!1\!}_1$};
  \tikzset{every node/.style={edge-label}}
 
 \begin{scope}[on background layer]
	\draw[blue!50!white,line width=5pt] (a1.center) -- (a2.center) -- (a3.center);
	\contour{a2}{blue}{-10}{190}{.43}
	\draw[darkgreen!50!white,line width=5pt] (b3.center) -- (b4.center) -- (a2.center)
		 -- (b5.center)  -- (b6.center);
	\contour{b4}{darkgreen}{190}{60}{.42}
	\contour{a2}{darkgreen}{-100}{-20}{.43}
	\contour{b5}{darkgreen}{-130}{-10}{.44}
 \end{scope}

\draw[red] (u5) to[bend right=10] (x);
\draw (a1) -- (a2) -- (a3);
\draw (b3) -- (x) -- (b4);
\draw (b5) .. controls (4,6.5) and (2,2) .. (x) edge[out=50,in=270] (b6);
\draw  (a1) edge[thick] node {2} (b3);
\draw (b3) -- (b4) edge[thick] node {2} (a2) (a2) edge[thick] node {2} (b5) (b5)--(b6);
\draw (b6) edge[thick] node {2} (a3);

\begin{scope}[on background layer]
\draw[fill=black!16!white] (x.center) to[bend right=10] (u5.center) -- (a1.center) to[bend right=10] cycle;
\end{scope}

\begin{scope}[on background layer]
\draw[fill=black!16!white] (x.center) to[bend right=10] (a3.center) -- (v5.center) to[bend right=10] cycle;
\end{scope}

 \node[rotate=35, draw=none, fill=none] (label) at (138:3) {\huge$\cdots$};
 \node[rotate=-35, draw=none, fill=none] (label) at (41:3) {\huge$\cdots$};
\end{tikzpicture}

 \caption{Two cases of vertex $w^i_3$ of the induction step in the proof of Lemma~\ref{lem:inductg}.
	In each of them we ``shrink'' two wedges into one by drawing 
	new edges $w_1^iw_4^{i+1}$ (green) and $w_2^iw_3^{i+1}$ (blue) along the depicted paths.
	In case (a), this introduces no new crossing,
	while in case (b) the new crossing between the green and the blue is
	``paid by'' a crossing which must have been on the $4$-cycle
	$(x_1,w^i_4,w^i_3,w^{i+1}_1)$ before.}
 \label{fig:induction}
\end{figure}

\begin{lemma}\label{lem:inductg}
For every $k_1\geq1$ and $k_2\geq1$, $\crn(\ccgn{13}{k_1,k_2})\geq13$. 
\end{lemma}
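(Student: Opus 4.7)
The argument proceeds by induction on $k$, with the base cases $k\in\{2,3\}$ supplied by Lemma~\ref{lem:base23}. For the inductive step, fix $k\ge 4$ and suppose for contradiction that $\mathcal D$ is a drawing of $\ccg{13}k$ with at most $12$ crossings; by Claim~\ref{cl:thickedge} we may assume every bunch of parallel edges is routed together as a single thick curve in~$\mathcal D$. The aim is to modify $\mathcal D$ into a drawing of $\ccg{13}{k-1}$ with at most $12$ crossings, which contradicts the inductive hypothesis.

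The main idea is to select a wedge $D_j$ that can be surgically removed. Since the wedges are pairwise almost edge-disjoint (two wedges share only a chain vertex, and the bowtie meets each wedge only at $x$, $u_5$, or $v_5$), an averaging argument over the $k\ge 4$ wedges yields an index $j$ for which the weighted number of crossings involving edges of $D_j$ is small. I would then delete the interior vertices $w_1^j,w_4^j$ together with their incident thick edges; this removes every crossing on $D_j$ and leaves the short edge $w_2^jw_3^j$ as a pendant chord inside a newly opened face. Contracting this chord identifies $w_2^j$ with $w_3^j$ and, after renumbering, produces exactly $\ccg{13}{k-1}$.

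The main obstacle is controlling the behaviour of the contracted edge $w_2^jw_3^j$. If $w_2^jw_3^j$ had been drawn crossing-free in $\mathcal D$, the contraction is entirely local and the resulting drawing has at most $12$ crossings. Otherwise, one has to reroute $w_2^jw_3^j$ through the new face opened up by removing $D_j$'s interior. I expect that this can be done without introducing new crossings provided both endpoints lie on the boundary of that face; if not, a local mirror-and-reinsert operation (reminiscent of the argument in the proof of Lemma~\ref{lemma-redraw}) on the subdrawing near $D_j$ should reduce to the planar case. The symmetry between the two halves of the bowtie from Observation~\ref{ob:basics13}(c) and the availability of \emph{two} consecutive base cases $k=2,3$ give extra flexibility: boundary wedges ($j=1$ or $j=k$), which interact directly with the bowtie via $u_5$ or $v_5$, can be handled by first reducing from $k$ to $k-1$ from the opposite end, or by reducing from $k$ to $k-2$ in one step.

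If the averaging in the main step yields only a weaker bound on $c_j$, it would still suffice to carry out the reduction $\ccg{13}k\to\ccg{13}{k-1}$ provided the number of crossings saved by deleting $D_j$ exceeds any crossings newly introduced when contracting $w_2^jw_3^j$; the key numerical point is that the interior of $D_j$ contains all of the two $2$-thick edges $w_1^jw_2^j$ and $w_3^jw_4^j$, so even a single crossing touching these contributes two to $c_j$, making the accounting generous enough to absorb the contraction cost.
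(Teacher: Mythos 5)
Your inductive skeleton is the same as the paper's — identical base cases $k\in\{2,3\}$ verified computationally, and a reduction of $\ccg{13}k$ to a smaller member of the family — but the inductive step itself is genuinely different and, as written, has two unresolved gaps.

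The paper never deletes a wedge and then contracts a crossed edge. Instead, it fixes a shared chain vertex $w_3^i=w_2^{i+1}$ and distinguishes three cases according to the local rotation of the four edge-bundles $w_3^iw_4^i$, $w_3^iw_2^i$, $w_3^iw_1^{i+1}$, $w_3^iw_3^{i+1}$ there (up to orientation reversal, these are exhaustive). In each case, it \emph{adds} one or two new thin edges routed in a narrow corridor alongside existing paths (e.g.\ $w_1^iw_4^{i+1}$ along $(w_1^i,w_4^i,w_3^i,w_1^{i+1},w_4^{i+1})$ and $w_2^iw_3^{i+1}$ along $(w_2^i,w_3^i,w_3^{i+1})$), arranged so that the two new edges avoid each other near $w_3^i$ thanks to the assumed rotation. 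Only then does it delete the internal vertices $w_4^i,w_3^i,w_1^{i+1}$ (or the larger set in the three-wedge case). Because every crossing incurred by a new edge is inherited from the path it traces, the transformed drawing has at most $\crn(\ccg{13}k)$ crossings, yielding a drawing of $\ccg{13}{k-1}$ (or $\ccg{13}{k-2}$) and hence the bound $\crn(\ccg{13}k)\ge\crn(\ccg{13}{k-1})\ge13$ without any contradiction bookkeeping.

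Your version has two concrete problems. First, contracting the edge $w_2^jw_3^j$ is not a crossing-preserving operation on drawings: if that edge is crossed, sliding $w_3^j$ onto $w_2^j$ along it drags all the edges incident with $w_3^j$ across those crossings, and in general multiplies them. Your fallback — reroute $w_2^jw_3^j$ through the ``newly opened face'' and, failing that, a ``mirror-and-reinsert'' — is not substantiated: after deleting $w_1^j,w_4^j$ there is no reason $w_2^j$ and $w_3^j$ should share a face, and mirroring a subdrawing that meets the rest of $G$ in more than a cycle can create new crossings. Second, the averaging argument does not deliver what you need for small $k$: with at most $12$ crossings, each counted at most twice across wedges, the best you can say is some wedge has at most $\lfloor 24/k\rfloor$ incident crossings, which is $0$ only for $k>24$. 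In the range $4\le k\le 24$ you cannot conclude any wedge is crossing-free, so the ``the contraction is entirely local'' branch is unavailable precisely where the induction must work. The paper's rotation case analysis sidesteps both issues: it never contracts anything, and it needs no global count of crossings per wedge, only a local combinatorial fact about one or two consecutive chain vertices.
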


\begin{proof}
We proceed by induction on $k_1+k_2$, where the base case $k_1=k_2=1$ 
is proved in Lemma~\ref{lem:base23}.
Hence, we may assume that $k_1\geq2$, up to symmetry.

Consider a drawing of $\ccgn{13}{k_1,k_2}$ with $c=\crn(\ccgn{13}{k_1,k_2})$ crossings.
Let $1\leq i\leq k_1-1$.
By Claim~\ref{cl:thickedge}, we may assume that all thick edges
are drawn together in a bunch.
We now distinguish three cases based on the cyclic order of edges leaving the
vertex $w_3^i=w_2^{i+1}$ (the orientation is not important):
\begin{itemize}
\item
The edges incident to $w_3^i=w_2^{i+1}$, in a small neighbourhood of
$w_3^i$, have the cyclic order $w_3^iw_4^i$, $w_3^iw_1^{i+1}$,
$w_3^iw_3^{i+1}$, $w_3^iw_2^{i}$.
See in Figure~\ref{fig:induction}\,a), where this cyclic order is anti-clockwise.
In this case, we draw a new edge $w_1^iw_4^{i+1}$ along the path
$(w_1^i,w_4^i,w_3^i,w_1^{i+1},w_4^{i+1})$, and another new edge
$w_2^iw_3^{i+1}$ along the path $(w_2^i,w_3^i,w_3^{i+1})$
(both new edges are of thickness~$1$).
Then we delete the vertices $w_4^i,w_3^i,w_1^{i+1}$ together with incident
edges.
The resulting drawing represents a graph which is clearly isomorphic to
$\ccgn{13}{k_1-1,k_2}$ -- the wedges $i$ and $i+1$ incident to~$x_1$ 
have been replaced with one wedge.

Moreover, thanks to the assumption, we can avoid crossing between
$w_1^iw_4^{i+1}$ and $w_2^iw_3^{i+1}$ in the considered neighbourhood of
former $w_3^i$.
Therefore, every crossing of the new drawing (including possible crossings
of each of the new edges $w_1^iw_4^{i+1}$ and $w_2^iw_3^{i+1}$
among themselves or with other edges)
existed already in the original drawing of $\ccgn{13}{k_1,k_2}\!$,
and so $\crn(\ccgn{13}{k_1-1,k_2})\leq c$.
However, $\crn(\ccgn{13}{k_1-1,k_2})\geq13$ by the induction assumption, and
so $c\geq13$ holds true in this case.

\item The same proof as above works if the cyclic order around 
$w_3^i$ is $w_3^iw_4^i$, $w_3^iw_1^{i+1}$,
$w_3^iw_2^i$, $w_3^iw_3^{i+1}$.

\item 
In a small neighbourhood of $w_3^i=w_2^{i+1}$,
the incident edges have (up to orientation reversal) the cyclic order 
$w_3^iw_4^i$, $w_3^iw_3^{i+1}$, $w_3^iw_1^{i+1}$, $w_3^iw_2^{i}$.
See Figure~\ref{fig:induction}\,b).
Consider the $4$-cycle $C:=(x_1,w_4^i,w_3^i,w_3^{i+1})$ which, importantly,
uses only single edges of the $2$-thick edges incident to $w_3^i$.
In this case of the cyclic order around $w_3^i$, both sides of $C$
contain a part of the drawing of $\ccgn{13}{k_1,k_2}\!$.
Since $\ccgn{13}{k_1,k_2}-V(C)$ is connected, some edge of $C$ must be
crossed.
Consequently, the subdrawing of $\ccgn{13}{k_1,k_2}-E(C)$ has $\leq c-1$ crossings.

We finish similarly as in the first case, but within $\ccgn{13}{k_1,k_2}-E(C)$:
we draw a new edge $w_1^iw_4^{i+1}$ along the path
$(w_1^i,w_4^i,w_3^i,w_1^{i+1},w_4^{i+1})$, and another new edge
$w_2^iw_3^{i+1}$ along the path $(w_2^i,w_3^i,w_3^{i+1})$
(both new edges are of thickness~$1$, and we have so far removed only one of
the two edges of each of $w_4^iw_3^i$ and $w_3^iw_1^{i+1}$).
These two new edges mutually cross once (at most -- in case that the named
paths cross also somewhere else than at $w_3^i$, we may eliminate multiple
crossings by standard means).
After deleting the original vertices $w_4^i,w_3^i,w_1^{i+1}$,
we hence get a drawing which is again clearly isomorphic to $\ccgn{13}{k_1-1,k_2}$
and has at most $c-1+1=c$ crossings.
Since $\crn(\ccgn{13}{k_1-1,k_2})\geq13$ by the induction assumption, 
$c\geq13$ holds true also in this case.
\qedhere\end{itemize}
\end{proof}

\begin{figure}[htp]
 \centering
a)
 \begin{tikzpicture}[xscale=1.1, yscale=0.95]
   \path[use as bounding box] (-6.3,-1.9) rectangle (5.7,4.8);
  \tikzset{every node/.style={labeled}}
  \tikzset{every path/.style={thick}}
  \node[fill=black!10!white] (x) at (-1,0) {$x_1$};
  \node[fill=black!10!white] (xx) at (0,0) {$x_2$};
  \node (u1) at (-2,0) {$u_1$};
  \node (u2) at (-3,0) {$u_2$};
  \node (u3) at (-4,0) {$u_3$};
  \node (u4) at (-5,0) {$u_4$};
  \node (u5) at (-4.5,1.5) {$u_5$};
  \node (v1) at (1,0) {$v_1$};
  \node (v2) at (2,0) {$v_2$};
  \node (v3) at (3,0) {$v_3$};
  \node (v4) at (4,0) {$v_4$};
  \node (v5) at (3.5,1.5) {$v_5$};
  \node (h1) at (-1,3.75) {$w^{k\!-\!1}_2$};
  \node (h2) at (-0.3,2.5) {$w^{k\!-\!1\!}_1$};
  \node (h3) at (0.75,2.44) {$w^{k\!-\!1\!}_4$};
  \node (k1) at (1.7,3.55) {$w^k_2$};
  \node (k2) at (1.6,2.1) {$w^{k}_1$};
  \node (k3) at (2.2,1.5) {$w^{k}_4$};
  \tikzset{every node/.style={edge-label}}
  \draw[red] (x) -- (u5) -- node {$4$} (u4) -- node {$3$} (u3) -- node {$4$}
   (u2) -- node {$5$} (u1) -- node {$7$} (x) -- node {$7$} (xx) -- node {$7$} (v1) -- node {$5$}
   (v2) -- node {$4$} (v3) -- node {$3$} (v4) -- node {$4$} (v5) -- (xx);
  \draw[blue]
   (u2) edge[bend right=70] (v3)
   (u3) edge[bend right=70] (v2)
   (u1) edge[bend right=35] node {2} (v4)
  ;
    \draw[blue,thick,dashed]
     (u4) .. controls (-9.2,5.8) and (10.2,5.7) .. node {2} (v1)
     (u4) .. controls (-11.7,6.75) and (15.8,5.6) .. node {2} (v1)
    ;
  \draw (h1) edge[thick] node {2} (h2) (h2) -- (h3) (h3) edge[thick] node {2} (k1) (k1) -- (h1);
  \draw (k1) edge[thick] node {2} (k2) (k2) -- (k3) (k3) edge[thick] node {2} (v5) (v5) -- (k1);
  \draw (h2) -- (xx) -- (h3);
  \draw (k2) -- (xx) -- (k3);
  \begin{scope}[on background layer]
   \draw[fill=black!20!white] (x.center) to[bend right=16] (u5.center) -- (h1.center) to[bend right=16] cycle;
  \end{scope}
  \node[rotate=33, draw=none, fill=none] (label) at (140:3) {\huge$\cdots$};
 \end{tikzpicture}
\\[1ex]b)%
 \begin{tikzpicture}[xscale=1.1, yscale=0.95]
   \path[use as bounding box] (-5.2,-1.7) rectangle (7.1,4.2);
  \tikzset{every node/.style={labeled}}
  \tikzset{every path/.style={thick}}
  \node[fill=black!10!white] (x) at (-1,0) {$x_1$};
  \node[fill=black!10!white] (xx) at (0,0) {$x_2$};
  \node (u1) at (-2,0) {$u_1$};
  \node (u2) at (-3,0) {$u_2$};
  \node (u3) at (-4,0) {$u_3$};
  \node (u4) at (-5,0) {$u_4$};
  \node (u5) at (-4.5,1.5) {$u_5$};
  \node (v1) at (1.5,1) {$v_1$};
  \node (v2) at (3.2,2.1) {$v_2$};
  \node (v3) at (4.4,2) {$v_3$};
  \node (v4) at (5,1) {$v_4$};
  \node (v5) at (3.5,0.35) {$v_5$};
  \node (h1) at (-1,3.75) {$w^{k\!-\!1}_2$};
  \node (h2) at (-0.3,2.5) {$w^{k\!-\!1\!}_1$};
  \node (h3) at (0.75,2.44) {$w^{k\!-\!1\!}_4$};
  \node (k1) at (1.7,3.55) {$w^k_2$};
  \node (k2) at (1.6,2.1) {$w^{k}_1$};
  \node (k3) at (2.4,0.8) {$w^{k}_4$};
  \tikzset{every node/.style={edge-label}}
  \draw[red] (x) -- (u5) -- node {$4$} (u4) -- node {$3$} (u3) -- node {$4$}
   (u2) -- node {$5$} (u1) -- node {$7$} (x) -- node {$7$} (xx) -- node {$7$} (v1) -- node {$5$}
   (v2) -- node {$4$} (v3) -- node {$3$} (v4) -- node {$4$} (v5) -- (xx);
  \draw[blue]
   (u2) .. controls (6,-3.5) and (7,3) .. (v3)
   (u3) .. controls (7.5,-4.5) and (8,5) .. (v2)
   (u1) edge[out=335,in=235] node {2} (v4)
   (u4) .. controls (10,-6.5) and (9.5,10.2) .. node {2} (v1)
  ;
  \draw (h1) edge[thick] node {2} (h2) (h2) -- (h3) (h3) edge[thick] node {2} (k1) (k1) -- (h1);
  \draw (k1) edge[thick] node {2} (k2) (k2) -- (k3) (k3) edge[thick] node {2} (v5) (v5) -- (k1);
  \draw (h2) -- (xx) -- (h3);
  \draw (k2) -- (xx) -- (k3);
  \begin{scope}[on background layer]
   \draw[fill=black!20!white] (x.center) to[bend right=16] (u5.center) -- (h1.center) to[bend right=16] cycle;
  \end{scope}
  \node[rotate=33, draw=none, fill=none] (label) at (140:3) {\huge$\cdots$};
 \end{tikzpicture}
\\[1ex]c)
 \begin{tikzpicture}[xscale=1.1, yscale=0.95]
   \path[use as bounding box] (-5.3,-3.2) rectangle (6.6,3.9);
  \tikzset{every node/.style={labeled}}
  \tikzset{every path/.style={thick}}
  \node[fill=black!10!white] (x) at (-0.9,0) {$x_1$};
  \node[fill=black!10!white] (xx) at (1.5,0.35) {$x_2$};
  \node (u1) at (-2,0) {$u_1$};
  \node (u2) at (-3,0) {$u_2$};
  \node (u3) at (-4,0) {$u_3$};
  \node (u4) at (-5,0) {$u_4$};
  \node (u5) at (-4.5,1.5) {$u_5$};
  \node (v1) at (2,1.6) {$v_1$};
  \node (v2) at (3.2,2.1) {$v_2$};
  \node (v3) at (4.4,2) {$v_3$};
  \node (v4) at (5,1) {$v_4$};
  \node (v5) at (4,0.75) {$v_5$};
  \node (h1) at (-0.75,3.5) {$w^{i}_2$};
  \node (h2) at (0.3,2.25) {$w^{i}_1$};
  \node (h3) at (1,-0.75) {$w^{i}_4$};
  \node (k1) at (2.5,-2) {$w^i_3$};
  \tikzset{every node/.style={edge-label}}
  \draw[red] (x) -- (u5) -- node {$4$} (u4) -- node {$3$} (u3) -- node {$4$}
   (u2) -- node {$5$} (u1) -- node {$7$} (x) -- node {$7$} (xx) -- node {$7$} (v1) -- node {$5$}
   (v2) -- node {$4$} (v3) -- node {$3$} (v4) -- node {$4$} (v5) -- (xx);
  \draw[blue]
   (u2) .. controls (5,-7.5) and (7,3) .. (v3)
   (u3) .. controls (6.5,-8.8) and (8.2,5.5) .. (v2)
   (u1) .. controls (2,-3.8) and (4.2,-3) .. node {2} (v4)
   (u4) .. controls (9,-10.2) and (8.5,8.8) .. node {2} (v1)
  ;
  \draw (h1) edge[thick] node {2} (h2) (h2) -- (h3) (h3) edge[thick] node {2} (k1)
	 (k1) edge[out=180,in=280] (h1);
  \draw (h2) -- (xx) -- (h3);
  \begin{scope}[on background layer]
   \draw[fill=black!20!white] (x.center) to[bend right=16] (u5.center) -- (h1.center) to[bend right=16] cycle;
   \draw[fill=black!20!white] (xx.center) to[bend right=-16] (k1.center) --
(v5.center) to[bend right=-16] cycle;
  \end{scope}
  \node[rotate=29, draw=none, fill=none] (label) at (138:3) {\huge$\cdots$};
  \node[rotate=58, draw=none, fill=none] (label) at (-9:3) {\huge$\cdots$};
 \end{tikzpicture}

 \caption{Three drawings of the graph $\ccgn{13}{k_1,k_2}$ with $13$ or $14$
 crossings (where $k_2=2$ in cases (a) and (b), while $k_1=i-1$ in (c)\,).
 These drawings and their straightforward adjustments are used to argue
 criticality of the bowtie (red) edges of $\ccgn{13}{k_1,k_2}$.
 The grey areas span the crossing-free
 wedges of $\ccgn{13}{k_1,k_2}$ which are not detailed in the pictures,
 similarly as in Figure~\ref{fig:G13-labeled}.}
 \label{fig:G13-3red}
\end{figure}

\begin{figure}
 \centering\bigskip
 a)
 \begin{tikzpicture}[scale=0.68]\small
  \path[use as bounding box] (-5.4,-4.9) rectangle (4.1,4.9);
  \tikzset{every node/.style={simple}}
  \tikzset{every path/.style={thick}}
  \node[labeled,minimum size=5mm] (x) at (0,0) {$x_2$};
  \node[labeled,minimum size=5mm] (x1) at (-1,1) {$x_1$};
  \node[labeled,minimum size=5mm] (x3) at (1,-1) {$x_3$};
  \node (u1) at (0,4) {};
  \node (u2) at (-1,3) {};
  \node (u3) at (-2,2) {};
  \node (u4) at (-3,1) {};
  \node[labeled] (u5) at (-3.5,0) {$u_5$};
  \node (v1) at (0,-4) {};
  \node (v2) at (1,-3) {};
  \node (v3) at (2,-2) {};
  \node (v4) at (3,-1) {};
  \node[labeled] (v5) at (3.5,0) {$v_5$};

  \node[labeled] (wi2) at (210:3.5) {$w^i_2$};
  \node[labeled] (wi1) at ($(wi2)+(1.5,-1)$) {$w^i_1$};
  \node[labeled] (wi3) at (30:3.75) {$w^i_3$};
  \node[labeled] (wi4) at ($(wi3)+(-0.5,2)$) {$w^i_4$};

  \tikzset{every node/.style={edge-label}}

  \begin{scope}[on background layer]
   \path[draw, thick, fill=black!16!white] (u5.center) -- (wi2.center)
	 to[bend left=12] (x.center) -- (x1.center) to[bend left=18] cycle;
   \path[draw, thick, fill=black!16!white] (v5.center) -- (wi3.center)
	 to[bend left=12] (x.center) -- (x3.center) to[bend left=18] cycle;

   \draw[red] (u5) -- node {$4$} (u4) -- node {$3$} (u3) -- node {$4$} (u2) -- node {$5$} (u1)
	 -- node {$\,7$} (x1) (x3) -- node {$\,7$} (v1) -- node {$5$} (v2) -- node {$4$} (v3)
	 -- node {$3$} (v4) -- node {$4$} (v5.center) to[bend left=8] (x3.center)
	 -- node {$7$} (x.center) -- node {$7$} (x1.center) to[bend right=8] (u5);

   \draw[blue]
     (u1) .. controls ($(u1)+(-14,-7)$) and ($(v4)+(0,-8)$) .. node {2} (v4)
     (u2) .. controls ($(u2)+(-11,-5)$) and ($(v3)+(0,-6)$) .. (v3)
     (u3) .. controls ($(u3)+(-8,-3)$) and ($(v2)+(0,-4)$) .. (v2)
     (u4) to[out=190, in=180, distance=32mm] node {2} (v1)
    ;

   \coordinate (tmp) at (-.5, -4);
   \draw
    (x) -- (wi1) -- node {2} (wi2)
    (wi3) -- node {2} (wi4) -- (x)
    (wi1) .. controls ($(wi1)+(-10,0)$) and ($(wi4)+(-2.5,3.5)$) .. (wi4)
    (wi2) to[out=20, in=110, distance=2cm] (tmp) to[out=290, in=320, distance=3.5cm] (wi3)
   ;
  \end{scope}

  \node[rotate=97, draw=none, fill=none] (label) at (15:3.1) {\Large$\cdots$};
  \node[rotate=104, draw=none, fill=none] (label) at (195:3) {\Large$\cdots$};
 \end{tikzpicture}
 b)
 \begin{tikzpicture}[scale=0.68]
  \path[use as bounding box] (-5.4,-4.9) rectangle (4.2,4.9);
  \tikzset{every node/.style={simple}}
  \tikzset{every path/.style={thick}}
  \node[labeled,minimum size=5mm] (x) at (0,0) {$x_2$};
  \node[labeled,minimum size=5mm] (x1) at (-1,1) {$x_1$};
  \node[labeled,minimum size=5mm] (x3) at (1,-1) {$x_3$};
  \node (u1) at (0,4) {};
  \node (u2) at (-1,3) {};
  \node (u3) at (-2,2) {};
  \node (u4) at (-3,1) {};
  \node[labeled] (u5) at (-3.5,0) {$u_5$};
  \node (v1) at (0,-4) {};
  \node (v2) at (1,-3) {};
  \node (v3) at (2,-2) {};
  \node (v4) at (3,-1) {};
  \node[labeled] (v5) at (3.5,0) {$v_5$};

  \node[labeled] (wi2) at (210:3.5) {$w^i_2$};
  \node[labeled] (wi3) at (30:3.75) {$w^i_3$};
  \node[labeled] (wi4) at ($(wi3)+(-0.5,2)$) {$w^i_4$};
  \node[labeled] (wi1) at ($(wi4)+(-1.5,0.7)$) {$w^i_1$};

  \tikzset{every node/.style={edge-label}}

  \begin{scope}[on background layer]
   \path[draw, thick, fill=black!16!white] (u5.center) -- (wi2.center)
	 to[bend left=12] (x.center) -- (x1.center) to[bend left=18] cycle;
   \path[draw, thick, fill=black!16!white] (v5.center) -- (wi3.center)
	 to[bend left=12] (x.center) -- (x3.center) to[bend left=18] cycle;

   \draw[red] (u5) -- node {$4$} (u4) -- node {$3$} (u3) -- node {$4$} (u2) -- node {$5$} (u1)
	 -- node {$\,7$} (x1) (x3) -- node {$\,7$} (v1) -- node {$5$} (v2) -- node {$4$} (v3)
	 -- node {$3$} (v4) -- node {$4$} (v5.center) to[bend left=8] (x3.center)
	 -- node {$7$} (x.center) -- node {$7$} (x1.center) to[bend right=8] (u5);

   \draw[blue]
     (u1) .. controls ($(u1)+(-14,-7)$) and ($(v4)+(0,-8)$) .. node {2} (v4)
     (u2) .. controls ($(u2)+(-11,-5)$) and ($(v3)+(0,-6)$) .. (v3)
     (u3) .. controls ($(u3)+(-8,-3)$) and ($(v2)+(0,-4)$) .. (v2)
     (u4) to[out=190, in=180, distance=32mm] node {2} (v1)
    ;

   \coordinate (tmp) at (0, -5);
   \draw
    (x) -- (wi1) to[out=165,in=150, distance=4cm] node[near start] {2} (wi2)
    (wi3) -- node {2} (wi4) -- (x)
    (wi1) -- (wi4)
    (wi2) to[out=-25, in=180] (tmp) to[out=0, in=320] (wi3)
   ;
  \end{scope}

  \node[rotate=100, draw=none, fill=none] (label) at (15:3.1) {\Large$\cdots$};
  \node[rotate=105, draw=none, fill=none] (label) at (195:3) {\Large$\cdots$};
 \end{tikzpicture}

 \caption{Two drawings of the graph $\ccgn{13}{k_1,k_2,k_3}$, having (a) $13$ and (b)
 $18$ crossings. These drawings are used to argue criticality of edges
 of the $i$-th wedge. The grey areas span the crossing-free wedges of
 $\ccgn{13}{k_1,k_2,k_3}$ which are not detailed in the pictures,
 similarly as in Figure~\ref{fig:G13-3red}.}
 \label{fig:G13-wedge-split}
\end{figure}

\begin{theorem}\label{thm:mainalt}
For every integers $m\geq2$ and $k_1,k_2,\ldots,k_m\geq1$, the graph
$\ccgn{13}{k_1,\ldots,k_m}$ is $13$-crossing-critical.
\end{theorem}

\begin{proof}
Let $G=\ccgn{13}{k_1,\ldots,k_m}$ and 
$P_x$ be the $7$-thick path on the vertices $(x_1,x_2,\ldots,x_m)$
from Definition~\ref{def:ccg13k}.
We first prove that $\crn(G)\geq13$.
Using Claim~\ref{cl:thickedge}, at most one edge of $P_x$ is crossed, 
or we already have $14$ crossings.
So assume that all edges of $P_x$ except possibly $x_jx_{j+1}$ have no crossing.
Contracting the edges $E(P_x)\setminus\{x_jx_{j+1}\}$ thus creates no new
crossing and results in a valid drawing isomorphic to
$\ccgn{13}{l_1,l_2}$ where $l_1=k_1+\ldots+k_j$ and
$l_2=k_{j+1}+\ldots+k_m$.
We conclude with $\crn(G)\geq\crn(\ccgn{13}{l_1,l_2})\geq13$ by
Lemma~\ref{lem:inductg}.

Regarding criticality, our proof strategy can be described as follows.
We provide a collection of drawings of our graph $G$, such that each edge
$e$ of $G$ in some of the drawings, when deleted, exhibits a ``drop'' 
of the crossing number below~$13$; that is $\crn(G-e)\leq12$.
Note that, for thick edges, we are deleting only one edge of the multiple bunch.

We start with the edges of the bowtie of $G$.
For the blue edges (i.e., $u_2v_3,u_3v_2,u_1v_4,u_4v_1$), this follows
immediately from the drawing in Figure~\ref{fig:G13-labeled}
in which deleting any blue edge saves crossings.
Furthermore, one can easily split the vertices $x_1$ and $x_2$ 
in the picture to produce the full path $P_x$ as needed.
For the remaining, red bowtie edges, criticality is witnessed by the
three drawings in Figure~\ref{fig:G13-3red}.
In the first one (a), which is almost the same as Figure~\ref{fig:G13-labeled},
two alternate routings of the edge $u_4v_1$ show
criticality of the edges $x_2v_5$ and $v_4v_5$, respectively.
We symmetricaly argue about the edges $x_1u_5$ and $u_4u_5$.
The second one (b) shows criticality of the edge $v_1v_2$.
However, by pulling $v_1$ in this picture away from $x_2$ we also certify
criticality of $x_2v_1$, and by pulling $v_2$ or also $v_3$ towards $x_2$ we
get criticality of $v_2v_3$ and $v_3v_4$.
Again, we can easily split the vertices $x_1$ and $x_2$ in the drawings to
produce the path $P_x$ as needed and without further crossings.
The edges $x_1u_1$, $u_1u_2$, $u_2u_3$ and $u_3u_4$ are symmetric, too.

Consider now a red edge $x_jx_{j+1}$ of $P_x$.
Let the first wedge incident to $x_{j+1}$ be the $i$-th wedge~$D_i$.
We twist the picture from Figure~\ref{fig:G13-labeled} at the
edge $x_jx_{j+1}$, such that the wedges preceding $D_i$ stay above the path
$P_x$, and the wedges succeeding $D_i$ are now below $P_x$.
This is illustrated for $j=1$ in Figure~\ref{fig:G13-3red}(c).
The wedge $D_i$ now crosses the $7$-thick edge $x_jx_{j+1}$, 
giving a drawing of $G$ with $14$ crossings, and so certifying criticality
of the edge $x_jx_{j+1}$, since deleting it drops the number of crossings in
this drawing down to~$12$.

We are left with the last, and perhaps most interesting, cases in which $e$
is an edge in the $i$-th wedge $D_i$.
We consider a twist of the drawing of $G$ similar to that in
Figure~\ref{fig:G13-3red}(c), but this time with the wedge $D_i$ crossing
the blue bowtie edges (and itself).
This gives a drawing with $13$ crossings involving the edges
$x_2w_1^i$, $w_1^iw_4^i$ and $w_2^iw_3^i$,
which is illustrated in Figure~\ref{fig:G13-wedge-split}(a).
Hence the listed edges, and the edge $x_2w_4^i$ by symmetry,
are also critical in $G$, as desired.
Finally, we deal, up to symmetry, with the $2$-thick edge $w_1^iw_2^i$.
A slight adjustement of the last drawing gives a drawing illustrated in
Figure~\ref{fig:G13-wedge-split}(b) with exactly $18$ crossings
which are between the blue edges and $w_2^iw_3^i$, $w_1^iw_2^i$.
Since deleting one edge from the $2$-thick edge $w_1^iw_2^i$ drops the
number of crossings again down to $12$, we have shown also criticality of
$w_1^iw_2^i$ and the proof is finished.
\end{proof}

Theorem~\ref{th:construction13} is now established 
for $k_1=\ldots=k_m=\lfloor d/2\rfloor$.

\section{Extended crossing-critical construction}
\label{sc:extended} 

In the previous section, we have constructed an infinite family of
$13$-crossing-critical graphs with unbounded maximum degree.
The construction leaves a natural question about analogous
$c$-crossing-critical families for $c>13$.


Clearly, the disjoint union of the graph from Theorem~\ref{thm:mainalt}
with $c{-}13$ disjoint copies of $K_{3,3}$ yields a (disconnected) 
$c$-crossing-critical graph
with maximum degree greater than $d$, for every $c\ge 14$.
Though, our aim is to preserve also the $3$-connectivity property
of the resulting graphs.

First, to motivate the coming construction, we recall that the zip product of
Definition~\ref{def:zip} requires a vertex of degree $3$ in the considered graphs.
However, the graphs of Definition~\ref{def:ccg13k} have no such
vertex, and so we come with the following modification.

\begin{figure}[ht]
 \centering\medskip
  \begin{tikzpicture}
  \tikzset{every node/.style={labeled}}
  \tikzset{every path/.style={thick}}
  \node (v2) at (1,0) {$t_1$};
  \node (v3) at (3.2,0) {$s$};
  \node (v4) at (5,0) {$t_2$};
  \node (u2) at (-0.3,-1) {$t_3$};
  \tikzset{every node/.style={edge-label}}
  \draw[red] (v2) -- node {$h\!+\!1$} (v3) -- node {$h$} (v4) ;
  \draw[red] (v2) to (0.4,0.2);
  \draw[blue] (u2) edge[bend right=40] (v3) ;
  \draw[blue] (v2) to (0.4,-0.2);
  \end{tikzpicture}
\raise7ex\hbox{\Large$\quad\leadsto\quad$}
  \begin{tikzpicture}
  \tikzset{every node/.style={labeled}}
  \tikzset{every path/.style={thick}}
  \node (v2) at (1,0) {$t_1$};
  \node (v31) at (3.5,0) {$s$};
  \node (v33) at (2.3,-1) {$s'$};
  \node (v4) at (5,0) {$t_2$};
  \node (u2) at (-0.3,-1) {$t_3$};
  \tikzset{every node/.style={edge-label}}
  \draw[red] (v2) -- node {$h$} (v31) -- node {$h$} (v4) ;
  \draw[red] (v2) to (0.4,0.2);
  \draw[red] (v31) -- (v33) -- (v2) ;
  \draw[blue] (u2) edge[bend right=30] (v33) ;
  \draw[blue] (v2) to (0.4,-0.2);
  \end{tikzpicture}

 \caption{An illustration of the operation of locally introducing
a vertex ($s'$) of degree $3$ from Lemma~\ref{lem:plus3deg}.
This operation can be applied, e.g., to vertices
$t_1=v_1$, $s=v_2$, $t_2=v_3$, and $t_3=u_3$ of Figure~\ref{fig:G13-labeled}.}
 \label{fig:v3-split}
\end{figure}

\begin{lemma}\label{lem:plus3deg}
Assume a graph $H$ with vertices $t_1,t_2,t_3$ and $s$ such that
\begin{itemize}
\item[a)]
vertex $s$ has no more neighbours than $t_1,t_2,t_3$ in $H$,
the edge $t_1s$ is $(h+1)$-thick, $t_2s$ is $h$-thick, $t_3s$ is $1$-thick,
and
\item[b)]
vertex $t_1$ is of degree at most $h+5$ in $H$, or there is a neighbour
$w\not=s$ of $t_1$ such that $t_1$ is of degree at most $h+3$ in $H-t_1w$.
\end{itemize}
Other edges of $H$ are not important. 

Let $H'$ be created by making the edge $t_1s$ only $h$-thick,
deleting the edge $t_3s$,
and adding a new vertex $s'$ adjacent via three $1$-thick edges to
the vertices $t_1,t_3$ and~$s$.
See Figure~\ref{fig:v3-split}. Then $\crn(H')\geq\crn(H)$.
Furthermore, if $H$ is a $c$-crossing-critical graph and
$\crn(H'-ss')<c$, then $H'$ is also $c$-crossing-critical.
\end{lemma}

\begin{proof}
Assume a drawing $\ca D$ of $H'$ with $\crn(H')$ crossings.
By Claim~\ref{cl:thickedge}, we have $st_1$ and $st_2$ drawn each as one
thick edge.
We consider two cases based on the crossings on $ss'$ in $\ca D$.

First, there are at least $2$ crossings on $ss'$ in $\ca D$.
We modify $\ca D$ to $\ca D'$ as follows:
delete the current edge $ss'$, and pull the vertices $s$ and $s'$ along their
edges to $t_1$ so that no crossing remains on $st_1$ and $s't_1$ in $\ca D'$.
This modification does not change the number of crossings on the paths
$(t_2,s,t_1)$ and $(t_3,s',t_1)$.
Then draw a new ($1$-thick) edge $ss'$ in $\ca D'$ closely along the path
$(s',t_1,s)$, crossing only some of the edges incident with~$t_1$
(and choosing ``the better side'' of $t_1$).
Thanks to the assumption (b), this makes only at most $2$ crossings on $ss'$
in $\ca D'$:  if $t_1$ is of degree $h+5$ then we cross at most
$[(h+5)-(h+1)]/2=2$, and if $t_1$ is of degree $h+3$ in $H-t_1w$,
then we can avoid crossing $t_1w$ and again cross at most $(h+3)-(h+1)=2$.

Altogether, there are no more crossings in $\ca D'$ than there were in~$\ca D$.
Since $st_1$ is crossing-free, we can turn $st_1$ into an $(h+1)$-thick edge
and still have at most $\crn(H')$ crossings.
Then we delete the edge $s't_1$ and obtain a subdivision of the graph $H$
with at most $\crn(H')$ crossings,
which certifies $\crn(H')\geq\crn(H)$.

Second, we assume that there is at most $1$ crossing on $ss'$ in $\ca D$.
Let the number of crossings on each edge of the
parallel bunch $st_1$ be $a$ and on the edge $s't_1$ let it be $b$.
If $b\geq a$, then we do the same as previously:
delete the edge $s't_1$ and turn $st_1$ into an $(h+1)$-thick edge.
The resulting drawing is a subdivision of $H$
and the new number of crossings is $\crn(H')-b+a\leq\crn(H')$,
again certifying $\crn(H')\geq\crn(H)$.

Otherwise, if $b\leq a-1$, there are altogether at most $b+1\leq a$ crossings
along the ($1$-thick) path $(t_1,s',s)$.
We hence make no more crossings than $\crn(H')$ if we redraw the $h$-thick
edge $t_1s$ closely along the path $(t_1,s',s)$ and ``through'' the
vertex~$s'$, creating a subdivision of a graph isomorphic to $H$ 
(now with $s$ subdividing $h$-thick edge~$s't_2$).
Again, the conclusion is that $\crn(H')\geq\crn(H)$.

The last bit is to argue $c$-crossing-criticality of $H'$ 
under the additional assumption of the lemma.
Consider any edge $e\in E(H')\cap E(H)$, and a drawing $\ca D$
of $H-e$ with fewer than $c$ crossings.
Since $s$ has only three neighbours in $H$, the vertex $s'$ can be chosen in
$\ca D$ as subdividing a suitable one of the edges of the $(h+1)$-thick
bunch $t_1s$, the one consecutive to $t_3s$ in the rotation around $s$ 
in~$\ca D$.
This results in a drawing of $H'-e$ with same number of crossings (fewer
than~$c$).
It remains to consider the edges of $E(H')\setminus E(H)=\{t_1s',t_3s',ss'\}$.
We have got the assumption $\crn(H'-ss')<c$, and drawings of
$H'-t_1s'$ and of $H'-t_3s'$ with fewer than $c$ crossings are subdivisions
of the corresponding drawings of $H-t_1s$ and~$H-t_3s$.
\end{proof}

\begin{proof}[Proof of Corollary~\ref{cr:construction}]
Similarly as in the previous section, we take the $13$-crossing-critical graph
$G=\ccgn{13}{k_1,\ldots,k_m}$ of Theorem~\ref{thm:mainalt}
for $k_1=\ldots=k_m=\lfloor d/2\rfloor$.
Then we apply Lemma~\ref{lem:plus3deg} to the vertices
$t_1=v_1$, $s=v_2$, $t_2=v_3$ and $t_3=u_3$ of~$G$.
This results in a graph $G'$ having a vertex $s'$ of degree~$3$.
Moreover, since $\crn(G'-ss')\leq11$ which can be easily seen from
Figure~\ref{fig:G13-labeled} (we avoid crossings with $u_4v_1$),
we get that $G'$ is $13$-crossing-critical.

Hence let $G(13,d,m)=G'$.
For $c>13$, we proceed by induction, assuming that we have already
constructed the graph $G(c-1,d,m)$ and it contains a vertex of degree~$3$.
Theorem~\ref{thm:zip} establishes that $G(c,d,m)$, as a zip product of 
$G(c-1,d,m)$ with $1$-crossing-critical $K_{3,3}$, is $c$-crossing-critical.
Furthermore, $G(c,d,m)$ again contains a vertex of degree~$3$ coming from
the $K_{3,3}$ part.
\end{proof}

\section{Concluding remarks and open problems}
\label{sc:conclusion}

While our contribution closes the questions related to the 
validity of the bounded maximum degree conjecture, the following
natural problems remain open:

\begin{problem}
For each $c\le 12$, determine the least integer $D(c)$ bounding maximum degree of 
$c$-crossing-critical graphs.
\end{problem}

\begin{problem}
Develop a theory of wedges that parallels the theory of tiles
(cf.~\cite{pinontoanRichter03})
for constructively establishing $c$-crossing-criticality of graphs with 
large maximum degrees.
\end{problem}

Note that with our construction we can get arbitrarily repeated even degrees in
$G=\ccgn{13}{k_1,\ldots,k_m}$, cf.~Observation~\ref{ob:basics13n}\,b),
but only two large-odd-degree vertices there.
In general, vertices of high odd degrees in $c$-crossing critical graphs seem to rely 
on some local property of the graph, unlike even degrees that can rely
simply on sufficiently many relevant edge-disjoint paths passing through the vertices. 
Indeed, the only other known examples of large odd degrees in infinite families of $c$-crossing-critical graphs are related to staircase-strip tiles \cite{bokalBracicDernarHlineny19}. Hence we suggest also the following question:
\begin{problem}
Does there exist, for some/any $c\geq13$, 
a family of $c$-crossing-critical graphs, such that for a prescribed set $O$ of odd 
integers greater than $3$ and each integer $m$, the family would contain a graph with at least $m$ vertices of each degree in $O$?
\end{problem}

Furthermore, Lemma~\ref{lem:plus3deg} can be applied iteratively to selected vertices of 
each wedge of the graphs $G=\ccgn{13}{k_1,\ldots,k_m}$ to 
produce new $c$-crossing-critical graphs which would be $3$-connected and 
have no double edges within the wedges. 
However, removing the remaining multiple edges in the bowtie subgraph 
would require a different approach. Hence, our final problem is:
\begin{problem}
For which $c$ does there exist a family of $3$-connected simple $c$-crossing-critical graphs containing vertices of arbitrarily large degree?
\end{problem}

%

\doneTodo{Referee 3:
L412, ref.1 :  Be consistent about whether or not you abbreviate or give
  full first names of authors. \\
Response: TODO. Tilo, IIRC, you did the bibliography. Could you, please, fix this?}
\begin{small}
\bibliography{main}
\end{small}

\newpage\appendix

\end{document}